\setlist[enumerate]{font=\upshape}
\newcommand*{\bX}{\mathbf{X}}
\newcommand*{\bbT}{\mathbb{T}}
\newcommand*{\EE}{\mathbb{E}}
\newcommand*{\cE}{\mathcal{E}}
\newcommand*{\cG}{\mathcal{G}}
\newcommand*{\CC}{\mathbb{C}}
\newcommand*{\R}{\mathbb{R}}
\renewcommand*{\Pr}{\mathbb{P}}
\newcommand*{\eps}{\varepsilon}
\newcommand*{\del}{\delta}
\newcommand*{\lam}{\lambda}
\DeclareMathOperator{\var}{var}
\DeclareMathOperator{\cut}{cut}
\DeclareMathOperator{\artanh}{artanh}
\DeclareMathOperator{\sech}{sech}
\newcommand*{\Zfixed}{Z^{\mathrm{fix}}}
\newcommand*{\csproblem}[1]{\textup{\textsc{#1}}}
\newtheorem*{theorem*}{Theorem}
\newtheorem{theorem}{Theorem}
\newtheorem{lemma}[theorem]{Lemma}
\newtheorem*{defn*}{Definition}
\newtheorem{prop}[theorem]{Proposition}
\newtheorem*{prop*}{Proposition}
\newtheorem{conj}{Conjecture}
\newtheorem*{conj*}{Conjecture}
\newtheorem{question}{Question}
\newtheorem*{fact*}{Fact}
\begin{document}
\title{Computational thresholds for the fixed-magnetization\\ Ising model}

\author{%
Charlie Carlson\thanks{Department of Computer Science, University of Colorado Boulder,  chca0914@colorado.edu.}
\and
Ewan Davies\thanks{Department of Computer Science, University of Colorado Boulder,  ewan.davies@colorado.edu.}
\and
Alexandra Kolla\thanks{Department of Computer Science, University of Colorado Boulder, Department of Computer Science and Engineering, University of California Santa Cruz, alexandra.kolla@colorado.edu. Supported in part by NSF grant CCF-1452923}
\and
Will Perkins\thanks{Department of Mathematics, Statistics, and Computer Science, University of Illinois at Chicago, math@willperkins.org.  Supported in part by NSF grants DMS-1847451 and CCF-1934915.}}

\date{\today}

\maketitle
\thispagestyle{empty}

\begin{abstract}
The ferromagnetic Ising model is a model of a magnetic material and a central topic in statistical physics.  It also plays a starring role in the algorithmic study of approximate counting: approximating the partition function of the ferromagnetic Ising model with uniform external field is tractable at all temperatures and on all graphs, due to the randomized algorithm of Jerrum and Sinclair.   

Here we show that hidden inside the model are hard computational problems.  For the class of bounded-degree graphs we find computational thresholds for the approximate counting and sampling problems for the ferromagnetic Ising model at fixed magnetization (that is, fixing the number of $+1$ and $-1$ spins). 

In particular, letting $\beta_c(\Delta)$ denote the critical inverse temperature of the zero-field Ising model on the infinite $\Delta$-regular tree, and  $\eta_{\Delta,\beta,1}^+$ denote the mean magnetization of the zero-field $+$ measure on the infinite $\Delta$-regular tree at inverse temperature $\beta$, we prove, for the class of graphs of maximum degree $\Delta$:
\begin{enumerate}
\item For $\beta < \beta_c(\Delta)$ there is an FPRAS and efficient sampling scheme for the fixed-magnetization Ising model for all magnetizations $\eta$.
\item For $\beta >  \beta_c(\Delta)$, there is an FPRAS and efficient sampling scheme for the fixed-magnetization Ising model for magnetizations $\eta$ such that  $|\eta| >\eta_{\Delta,\beta,1}^+ $.
\item For $\beta >  \beta_c(\Delta)$, there is no FPRAS for the fixed-magnetization Ising model for magnetizations $\eta$ such that  $|\eta| <\eta_{\Delta,\beta,1}^+ $ unless NP=RP\@.
\end{enumerate}
\end{abstract}

\maketitle

\clearpage
\pagenumbering{arabic} 

\section{Introduction}

The Ising model is a mathematical model of a magnetic material, fundamental in the study of phase transitions in statistical physics.  
The Ising model is a probability distribution over cuts in a graph, and its partition function  is the weighted sum over all cuts in the graph, connecting the physics of the model to combinatorial structures in computer science.
In the field of approximate counting in computer science, the ferromagnetic Ising model plays a special role along with the monomer-dimer model as models for which approximating the partition function is tractable on all graphs and at all temperatures~\cite{jerrum1989approximating,JS93}.  

Conditioning on the magnetization of the model corresponds to fixing the balance of the random cut  generated. In particular, at zero magnetization (an equal number of plus and minus spins), the Ising model is a probability distribution on bisections of the graph. 
In the study of spin models on sparse random graphs in physics, it has long been known that conditioning on zero magnetization can turn a ferromagnetic system into a glassy system~\cite{mezard1987mean} (i.e.\ fixing the magnetization can drastically change the model and induce slow dynamics).  This suggests that lurking inside the tractable computational problems associated to the Ising model there may be hard problems accessible by fixing the magnetization. 

We make this idea concrete in a complexity-theoretic sense by reducing NP-hard balanced cut problems to approximating the partition function of the Ising model at fixed magnetization. 
Specifically we find computational thresholds for approximate counting and sampling in the ferromagnetic Ising model at fixed magnetization on bounded degree graphs.  When the inverse temperature $\beta$ is small (smaller that the critical $\beta$ on the infinite $\Delta$-regular tree) there are efficient algorithms at all magnetizations.  When $\beta$ is large (larger than the critical $\beta$) then there is a computational threshold: for magnetizations $\eta$ small in absolute value the computational problems are hard; for $\eta$ large in absolute value the problems are tractable.  

We first define the Ising model and the relevant properties of the model on the infinite regular tree, then state our main results.

\subsection{The Ising model on graphs and trees}

The Ising model on a finite graph $G$ at inverse temperature $\beta $ and activity $\lam$ is the probability distribution $\mu_{G,\beta,\lambda}$ on $ \Sigma_G := \{ \pm 1\}^{V(G)}$ defined by
\begin{align*}
\mu_{G,\beta,\lambda}(\sigma) = \frac{ e^{ \frac{\beta}{2} \sum_{(u,v) \in E(G)} \sigma_u \sigma _v} \lam^{ M(\sigma)}  }{  Z_G(\beta,\lam) } 
\end{align*}
where  $M( \sigma ) = \sum_{v \in V(G)} \sigma_v$ and
\begin{align*}
Z_G(\beta,\lam)  =  \sum_{\sigma \in \Sigma_G}  e^{\frac{\beta}{2} \sum_{(u,v) \in E(G)} \sigma_u \sigma _v} \lam^{ M(\sigma)}  \,.
\end{align*}
The probability distribution $\mu_{G,\beta, \lam}$ is the Gibbs measure and $Z_G(\beta,\lam)$ is the partition function.  When $\beta \ge 0$ the model is ferromagnetic, and we will always assume this in what follows.  In statistical physics the activity is often written as $\lam = e^{h}$ where $h$ is the external field, and so we will call the unbiased case $\lam =1$ the zero-field Ising model.   

The quantity $M(\sigma)$ is the \emph{magnetization} of the configuration $\sigma$.  The normalized \emph{mean magnetization} of the Ising model is 
\[ \eta_{G}(\beta,\lam)  = \frac{ \langle M(\sigma)\rangle_{G,\beta,\lam} }{|V(G)|} \,,\]
where $\langle \cdot \rangle_{G,\beta,\lam}$ denotes expectation with respect to the Ising model.

We can also define the Ising model at fixed magnetization.  For $ k \equiv |V(G)| \mod 2$, $|k| \le |V(G)|$, let $\Sigma_G(k) = \{ \sigma \in \Sigma_G : M(\sigma) =k \}$ be the subset of Ising configurations with magnetization $k$.  Then the Ising model on $G$ at inverse temperature $\beta$ and fixed magnetization $k$ is the distribution $\nu_{G,\beta,k}$ on $\Sigma_G(k)$ defined by
\begin{align*}
\nu_{G,\beta,k} (\sigma) = \frac{  e^{\frac{\beta}{2} \sum_{(u,v) \in E(G)} \sigma_u \sigma _v}    }{ Z_G^{\mathrm{fix}}(\beta ,k)    }
\end{align*}
where
\begin{align*}
Z_G^{\mathrm{fix}}(\beta ,k)  = \sum _{\sigma \in \Sigma_G(k) }  e^{\frac{\beta}{2} \sum_{(u,v) \in E(G)} \sigma_u \sigma _v}  \,.
\end{align*}
The distribution  $\nu_{G,\beta,k}$ is simply the Ising model at inverse temperature $\beta$ (and arbitrary activity $\lam >0$) conditioned on the event  $ \sigma \in \Sigma_G(k)$.  The fixed-magnetization partition function $Z_G^{\mathrm{fix}}(\beta ,k) $ is the coefficient of $\lam^{k}$ when interpreting $ Z_G(\beta,\lam)$ as a Laurent polynomial in $\lam$.

The Ising model can be defined on the infinite $\Delta$-regular tree $\bbT_{\Delta}$ via the DLR equations~\cite{Dob68,LR69} or as a weak limit of Ising models on finite-depth trees with given boundary conditions.
Infinite regular trees are important in computer science as `optimal' expanders, and here we use a known relationship between the Ising model on random regular graphs and on the infinite tree.
Depending on the parameters $\beta, \lam$ there may be a unique infinite-volume Gibbs measure on $\bbT_{\Delta}$ or there may be multiple measures. The critical inverse temperature is $\beta_c(\Delta) = \log  \frac{\Delta}{\Delta-2} $: for $\beta< \beta_c$ there is a unique Gibbs measure for all $\lam$ and for $\beta>\beta_c$ there can be multiple measures if $\lam$ is close enough to $1$~\cite{lyons1989ising}.    We will be interested in one particular  Gibbs measure on $\bbT_{\Delta}$, the `$+$' measure induced by the weak limit of finite-depth trees with the all $+$ boundary conditions.  We denote this measure $\mu_{\Delta, \beta, \lam}^+$.  By the FKG inequality $\mu_{\Delta, \beta, \lam}^+$ stochastically dominates all other Gibbs measures on $\bbT_\Delta$ with the same parameters.

We let $\eta_{\Delta, \beta,\lam}^+$ denote the expected value of the spin at the root of $\bbT_{\Delta}$ under $\mu_{\Delta, \beta, \lam}^+$ (equivalently, the expected value of the spin at any fixed vertex since $\mu_{\Delta, \beta, \lam}^+$ is translation invariant).  
Then the magnetization of the measure $\mu_{\Delta,\beta,\lam}^+$ is 
\[ \eta_{\Delta, \beta,\lam}^+ = \tanh\big(L^*+\artanh(\tanh L^*\tanh\tfrac{\beta}{2})\big), \]
where $L^*$ is the largest solution to 
\[ L^* = \log\lam + (\Delta-1)\artanh(\tanh L^*\tanh\tfrac{\beta}{2}). \]
See Section~\ref{secExtremal} for more details and a derivation.

The phase transition on $\bbT_{\Delta}$ manifests itself via the following `spontaneous magnetization' phenomenon~\cite{lyons1989ising}:
\begin{enumerate}
\item For $\beta < \beta_c(\Delta)$, $ \eta_{\Delta, \beta,1}^+ =0$.
\item For $\beta > \beta_c(\Delta)$, $ \eta_{\Delta, \beta,1}^+ >0$.
\end{enumerate}

\subsection{Computational problems and computational thresholds}

There are two main computational problems associated to a spin model like the Ising model.  The approximate counting problem asks for an $\eps$-relative approximation to the partition function $Z_G$; that is,  a number $\hat Z$ so that $(1- \eps) Z_G \le \hat Z \le (1+\eps ) Z_G$.  An FPTAS is an algorithm that provides such an approximation and runs in time polynomial in $|V(G)|$ and $1/\eps$.  An FPRAS is a randomized algorithm that provides such an approximation with probability at least $2/3$ and runs in time polynomial in $|V(G)|$ and $1/\eps$.  The approximate sampling problem is to output a sample $\sigma$ with distribution $\hat \mu$ so that $\| \mu_G - \hat \mu \|_{TV} <\eps$.  An efficient sampling scheme is a randomized algorithm that satisfies this guarantee and runs in time polynomial in $|V(G)|$ and $\log (1/\eps)$\footnote{Sometimes the required dependence of the running time for an efficient approximate sampler is taken to be polynomial in $1/\eps$ instead of $\log(1/\eps)$; we use the stronger definition here.}.

Jerrum and Sinclair gave an FPRAS for the ferromagnetic Ising model for all graphs, all inverse temperatures $\beta$, and all choices of the activity $\lam$~\cite{JS93}\footnote{In fact the algorithm works in the case of non-uniform activities, as long as they are consistent: all at least $1$ or all at most $1$.  The general case of approximating the partition function with non-uniform activities is \#BIS-hard~\cite{goldberg2007complexity}.}. Via self-reducibility of the random cluster representation of the Ising model, this gives an efficient sampling scheme as well~\cite{randall1999sampling}. 

On the other hand, for the anti-ferromagnetic Ising model (and the hard-core model of weighted independent sets), the approximate counting and sampling problems are \textup{NP}-hard in general, and for the class of bounded degree graphs precise computational thresholds are known.   The results of Weitz~\cite{Wei06}, Sly~\cite{Sly10},  Sly--Sun~\cite{SS14}, Galanis--{\v S}tefankovi{\v c}--Vigoda~\cite{galanis2016inapproximability}, and Sinclair--Srivastava--Thurley~\cite{sinclair2014approximation} show that for these models (and for $\beta$ large enough in the case of the anti-ferromagnetic Ising model) there is a computational threshold at some critical activity $\lam_c = \lam_c(\Delta,\beta)$. In the case of the hard-core model, there is an FPTAS for $Z_G(\lam)$ for  $\lam < \lam_c$ and  graphs $G$ of maximum degree $\Delta$ while for $\lam>\lam_c(\Delta)$ there is no FPRAS unless \textup{NP=RP}.

For the ferromagnetic Ising model there are no such computational thresholds.   But one can ask instead for approximation algorithms for coefficients of the partition function or approximate sampling algorithms for the Ising model at fixed magnetization. 

For $\beta \ge 0$ and $\eta \in [-1,1]$, let \csproblem{Fixed-Ising}$(G,\beta,\eta)$ be the problem of computing the partition function $\Zfixed_G(\beta, k)$ of the $n$-vertex graph $G$, where $k$ is the largest integer such that $k\equiv n \mod 2$ and $k\le \eta n$.   In other words, $k = 2 \lfloor (\eta+1)n/2 \rfloor -n$. 
The associated sampling problem is to sample spin assignments from the measure $\nu_{G,\beta,k}$. 
The restriction on the parity of $k$ is simply to ensure that configurations of magnetization $k$ exist.   Abusing notation slightly we will refer to both $\eta$ and $k$ as the magnetization, but it will be clear from context what is meant.

This is the setting of the Kawasaki dynamics for the Ising model~\cite{kawasaki1966diffusion,kawasaki1972kinetics}: a \emph{conservative} dynamics with stationary distribution $\nu_{G, \beta, k}$ that at each step proposes a swap of nearest-neighbor spins.    Understanding the convergence properties of the Kawasaki dynamics on subsets of $\mathbb Z^d$ is a deep mathematical problem~\cite{lu1993spectral,yau1996logarithmic,cancrini1999spectral,cancrini2000spectral}.  In this paper we address the problem on general graphs from the perspective of computational complexity.

\subsection{Our results}

In what follows we always assume $\beta \ge0$ and $\Delta \ge 3$. 
When $\beta < \beta_c (\Delta)$ we give efficient approximate counting and sampling  algorithms for all magnetizations.

\begin{theorem}\label{thmAlgFixedSubcriticalBeta}
    Let $\Delta\ge 3$ and $\beta < \beta_c(\Delta)$. Then for all $\eta\in[-1,1]$ there is an FPRAS and efficient sampling scheme for \csproblem{Fixed-Ising}$(G,\beta,\eta)$ for graphs of maximum degree $\Delta$. 
\end{theorem}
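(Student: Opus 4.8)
The plan is to reduce \csproblem{Fixed-Ising} to the ordinary ferromagnetic Ising model---for which Jerrum--Sinclair give an FPRAS and, via self-reducibility of the random-cluster representation, an efficient sampler at every $\beta$ and every $\lambda>0$---and to pay for the reduction with an anti-concentration estimate for the magnetization that holds precisely in the tree-uniqueness regime $\beta<\beta_c(\Delta)$. The starting point is the identity
\[
\Pr_{\mu_{G,\beta,\lambda}}\big[M(\sigma)=k\big]=\frac{\Zfixed_G(\beta,k)\,\lambda^{k}}{Z_G(\beta,\lambda)},
\]
valid for every $\lambda>0$, which expresses $\Zfixed_G(\beta,k)$ through quantities we can estimate: $Z_G(\beta,\lambda)$ by the Jerrum--Sinclair FPRAS and $\Pr_{\mu_{G,\beta,\lambda}}[M=k]$ by Monte Carlo with the Jerrum--Sinclair sampler, \emph{provided} the latter probability is inverse-polynomially large for a well-chosen $\lambda$. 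Thus the algorithm is: if $\binom{n}{(n+k)/2}\le\mathrm{poly}(n)$---i.e.\ $|\eta|$ lies within $O(1/n)$ of $1$, endpoints included (where $\Zfixed_G$ is the single term $e^{\pm\beta|E(G)|/2}$)---compute $\Zfixed_G(\beta,k)$ by enumeration; otherwise (a) binary-search for $\lambda^\ast$ with $\langle M\rangle_{G,\beta,\lambda^\ast}$ within $O(1)$ of $k$, exploiting the monotonicity of $\lambda\mapsto\langle M\rangle_{G,\beta,\lambda}$ and sample-mean estimates of $\langle M\rangle$; (b) estimate $Z_G(\beta,\lambda^\ast)$ and $p:=\Pr_{\mu_{G,\beta,\lambda^\ast}}[M=k]$ to small relative error, and output $\widehat Z\,\widehat p\,(\lambda^\ast)^{-k}$. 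For sampling from $\nu_{G,\beta,k}$ we use rejection sampling: draw $\sigma\sim\mu_{G,\beta,\lambda^\ast}$ (with small total-variation error) and accept iff $M(\sigma)=k$; as the acceptance probability is $\ge1/\mathrm{poly}(n)$, $O(\mathrm{poly}(n)\log(1/\eps))$ trials suffice, and the standard bound on conditioning on a not-too-rare event gives the $\eps$-sampling guarantee in time $\mathrm{poly}(n,\log(1/\eps))$.

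Two routine points keep the running times polynomial. First, $\lambda^\ast$ is polynomially bounded: a field of size $e^{\Theta(\beta\Delta)}$ already forces every conditional spin-marginal within $e^{-\Omega(\beta\Delta)}$ of $+1$, so $\langle M\rangle_{G,\beta,\lambda}=k$ with $k\ne\pm n$ implies $1/\mathrm{poly}(n)\le\lambda^\ast\le\mathrm{poly}(n)$, and the search runs over $\log\lambda^\ast\in[-O(\log n),O(\log n)]$ in $O(\log n)$ steps, each using $\mathrm{poly}(n)$ samples. Second, and this is the crux, one needs the following \emph{anti-concentration lemma}: for $\beta<\beta_c(\Delta)$ there is a polynomial $P$ such that for every $G$ of maximum degree at most $\Delta$, every $\lambda>0$, and every $k\equiv n\bmod 2$ with $|k-\langle M\rangle_{G,\beta,\lambda}|=O(1)$, either $\binom{n}{(n+k)/2}\le P(n)$ or $\Pr_{\mu_{G,\beta,\lambda}}[M=k]\ge1/P(n)$.

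I would prove this lemma by working with $m:=(M+n)/2$, the number of $+$ spins---which takes consecutive integer values (no lattice obstruction) and whose probability generating function under $\mu_{G,\beta,\lambda}$ is proportional to $Q_G(\lambda^2z)$, where $Q_G(x)=\sum_{\ell}\Zfixed_G(\beta,2\ell-n)\,x^{\ell}$. By the Lee--Yang theorem all zeros of $Q_G$ lie on $|x|=1$, and for $\beta<\beta_c(\Delta)$ they avoid a fixed neighbourhood of $x=1$, uniformly over $G$ of maximum degree $\Delta$ (the Lee--Yang gap of the subcritical regime, classical on the infinite tree and transferable to bounded-degree graphs by the standard interpolation machinery); an elementary computation then shows the zeros of $z\mapsto Q_G(\lambda^2z)$ stay at distance $\ge\delta_1(\beta,\Delta)>0$ from $z=1$ for \emph{every} $\lambda>0$. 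Outside the brute-force regime the number of minority spins is $\omega(1)$, so $\var_{G,\beta,\lambda}(M)\to\infty$: the lower bound $\var(M)\ge\sum_v(1-\langle\sigma_v\rangle^2)\ge n-|k|-O(1)$ comes from $\cov(\sigma_u,\sigma_v)\ge0$ and $\langle\sigma_v\rangle\ge0$ (Griffiths/GKS, after reducing to $\eta\ge0$), while $\var(M)=O(n)$ follows from exponential decay of correlations in the uniqueness regime. Zero-freeness near $1$ together with $\var(M)\to\infty$ yields a local central limit theorem for $m$ (via the zero-free-region route to local limit theorems, or a direct characteristic-function argument), giving $\Pr_{\mu_{G,\beta,\lambda}}[M=k]=\Theta\big(1/\sqrt{\var(M)}\big)\ge1/\mathrm{poly}(n)$ for all $k$ within $O(1)$ of the mean; the residual case where $\var(M)$ stays bounded is handled by the crude fact that a distribution concentrated on $O(1)$ values near its mean puts mass $\Omega(1)$ on one of them, which transfers to the target $k$ through $O(1)$ consecutive-magnetization ratios, each at most $\mathrm{poly}(n)$ since $\lambda^\ast$ is.

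The main obstacle is exactly this lemma, and inside it the uniform bound on the Lee--Yang zeros away from $x=1$ (equivalently, on the characteristic function of $m$ away from the origin): this is the only place the hypothesis $\beta<\beta_c(\Delta)$ is used in an essential way, with decay of correlations, the variance bounds, and the local limit theorem all downstream of it. Everything else---the binary search, Jerrum--Sinclair as a black box, Chernoff-type estimation, rejection sampling---is bookkeeping. Note that for $\beta>\beta_c(\Delta)$ the lemma genuinely fails at small $|\eta|$: there the magnetization is bimodal and $\Pr[M=k]$ can be exponentially small near $k=0$, which is precisely the phenomenon underlying the hardness results in the supercritical regime.
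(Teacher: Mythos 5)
Your proposal is correct and takes essentially the same route as the paper: pick an activity $\lambda^*$ by binary search so the grand-canonical mean magnetization is within $O(1)$ of $k$, prove $\Pr_{\mu_{G,\beta,\lambda^*}}[M=k]\ge 1/\mathrm{poly}(n)$ via a local central limit theorem driven by Lee--Yang together with the uniform subcritical zero-free region near $\lambda=1$ (your ``Lee--Yang gap'' is exactly the Peters--Regts theorem the paper cites), and then rejection-sample --- precisely the roles played by the paper's Proposition~\ref{propLocalCLTalg}, Lemma~\ref{lemExistLambda}, and the Sample-$k$ algorithm. The divergences are only bookkeeping: you get the FPRAS from the identity $\Zfixed_G(\beta,k)=Z_G(\beta,\lambda^*)\Pr[M=k]\,(\lambda^*)^{-k}$ with Jerrum--Sinclair plus Monte Carlo instead of the paper's simulated-annealing reduction, and your variance bounds (GKS lower bound, correlation-decay upper bound --- the latter asserted rather than proved, though it also follows from the zero-freeness you already have, as in the paper's Lee--Yang root computation) replace the paper's independent-set-conditioning argument.
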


Theorem~\ref{thmAlgFixedSubcriticalBeta} can be deduced fairly easily from known results, essentially following the framework of~\cite{davies2021approximatelyICALP}.   To sample from configurations with a given magnetization, we follow the standard approach of finding a suitable activity parameter for the Gibbs measure $\mu_{G,\beta,\lam}$ so that the probability of hitting the desired magnetization is not too small (at least inverse polynomial), and then sampling from the Ising model, rejecting samples until we obtain one with the correct magnetization.  Because  efficient sampling algorithms for the Ising model exist for all $\beta, \lam$ this approach works provided that a suitable activity parameter exists.  By continuity, there is an activity that gives the correct mean magnetization, and because the partition function (as a function of $\lam$) is uniformly zero-free in a sector in the complex plane~\cite{PR20}, the magnetization obeys a central limit theorem~\cite{MS19}, giving the required inverse polynomial lower bound.  

The main results of the paper are for the supercritical case, $\beta>\beta_c(\Delta)$. Here we prove that there is a computational threshold at an explicit $\eta_c = \eta_c(\Delta,\beta) \in (0,1)$ so that approximation is hard for $|\eta| < \eta_c$ but tractable for $|\eta| > \eta_c$.  In fact, $\eta_c(\Delta,\beta) = \eta_{\Delta,\beta,1}^+$, the mean magnetization of the zero-field $+$ measure on $\bbT_\Delta$.

\begin{theorem}\label{thmFixedSupercriticalBeta}
    Let $\Delta\ge 3$, $\beta>\beta_c(\Delta)$, and $\eta_c = \eta_{\Delta,\beta,1}^+$.
\begin{enumerate}[label={\textup{(\alph*)}}]
    \item\label{algFixedSuper} For all $\eta$ with $|\eta|>\eta_c$ there is an FPRAS and efficient sampling scheme for \csproblem{Fixed-Ising}$(G,\beta,\eta)$ for graphs of maximum degree $\Delta$. 
    \item\label{hardFixedSuper} Unless \textup{NP=RP}, for all $\eta$ with $|\eta| < \eta_c$ there is no FPRAS for \csproblem{Fixed-Ising}$(G,\beta,\eta)$ for graphs of maximum degree $\Delta$.
\end{enumerate}
\end{theorem}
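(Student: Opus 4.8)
\emph{Proof proposal.} For part~\ref{algFixedSuper}, assume without loss of generality that $\eta\in(\eta_c,1)$ (the case $\eta\in(-1,-\eta_c)$ is symmetric under $\lam\mapsto 1/\lam$). The plan is to run the same annealing/rejection-sampling strategy as for Theorem~\ref{thmAlgFixedSubcriticalBeta}: locate an activity $\lam^\star>1$ for which $\mu_{G,\beta,\lam^\star}$ places inverse-polynomial mass on magnetization exactly $k$, sample $\mu_{G,\beta,\lam^\star}$ using the Jerrum--Sinclair FPRAS \cite{JS93} and the resulting sampler \cite{randall1999sampling}, and reject samples until the magnetization equals $k$ (an FPRAS for $\Zfixed_G(\beta,k)$ then follows from the same estimates). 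The new ingredient is that such a $\lam^\star$ exists and is \emph{bounded away from $1$ uniformly over graphs of maximum degree $\Delta$}: the map $\lam\mapsto\eta_G(\beta,\lam)$ is continuous and strictly increasing, with $\eta_G(\beta,1)=0$ by the $\sigma\mapsto-\sigma$ symmetry at zero field, and a comparison with the $+$ measure on $\bbT_\Delta$ (an Ising analogue of the tree-recursion bounds used in \cite{davies2021approximatelyICALP}) gives $\eta_G(\beta,\lam)\le\eta_{\Delta,\beta,\lam}^+$ for all $\lam\ge1$, while $\lam\mapsto\eta_{\Delta,\beta,\lam}^+$ increases continuously from $\eta_c=\eta_{\Delta,\beta,1}^+$ to $1$. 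Fixing $\lam_0>1$ with $\eta^+_{\Delta,\beta,\lam_0}<\eta$, the unique solution $\lam^\star=\lam^\star(G)$ of $\eta_G(\beta,\lam)=\eta$ then satisfies $\lam^\star>\lam_0>1$ for every such $G$, and we find it by binary search using approximate evaluations of $\eta_G$. Since $\lam^\star$ avoids a fixed neighborhood of $1$, where the Lee--Yang zeros of $\lam\mapsto Z_G(\beta,\lam)$ lie, $Z_G(\beta,\cdot)$ is zero-free near $\lam^\star$ with uniform parameters, so as in \cite{PR20,MS19} the magnetization satisfies a local central limit theorem with $\var(M(\sigma))=\Theta(n)$; hence $\Pr_{\mu_{G,\beta,\lam^\star}}[M(\sigma)=k]=\Omega(n^{-1/2})$ and the rejection step costs polynomial time in expectation.

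For part~\ref{hardFixedSuper}, fix $\beta>\beta_c$ and $\eta$ with $|\eta|<\eta_c$. The plan is to reduce from a balanced-cut problem that is \textup{NP}-hard on bounded-degree graphs: given a bounded-degree graph $H$ and the target fraction $\rho=\tfrac12(1+\eta/\eta_c)\in(0,1)$, compute the minimum number of edges in an edge cut of $H$ one side of which has $\lceil\rho|V(H)|\rceil$ vertices (minimum bisection being the case $\eta=0$). From $H$ I would build a graph $G$ of maximum degree $\Delta$ by the standard gadget blow-up: each vertex of $H$ is replaced by a copy of a \emph{phase gadget} $\Gamma$ --- a graph on $N=\mathrm{poly}(|V(H)|)$ vertices that is almost $\Delta$-regular, an expander (for instance a random $\Delta$-regular graph), and carries a collection of degree-$(\Delta-1)$ \emph{terminal} vertices --- and each edge of $H$ is realised by joining terminals of the two corresponding gadget copies, using several parallel connections per edge of $H$ at fresh terminals (so that all degrees stay at most $\Delta$) in order to amplify the exponential effect below. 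We set $k$ to the magnetization nearest $\eta|V(G)|$ of the correct parity.

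The heart of the argument is to show that $\Zfixed_G(\beta,k)$ is governed, up to a factor $1+o(1)$, by \emph{phase-constant} configurations: those whose restriction to each gadget copy lies in one of the two extremal phases of the zero-field Ising model on $\Gamma$, the $+$ phase with gadget magnetization concentrated near $+\eta^+_{\Delta,\beta,1}N=+\eta_c N$, or the $-$ phase, near $-\eta_c N$. This rests on (i) expansion of $\Gamma$ together with $\beta>\beta_c$, which forces exactly these two phases, separated by an $e^{-\Omega(N)}$ free-energy barrier (so interface-within-a-gadget configurations are negligible once $N\gg\log|V(H)|$), and (ii) the convergence, on locally tree-like expanders, of the within-phase magnetization and terminal spin marginals to their $\bbT_\Delta$ counterparts under $\mu^{\pm}_{\Delta,\beta,1}$. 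Given a phase assignment $\omega\in\{+,-\}^{V(H)}$ with $n_+(\omega)$ copies in the $+$ phase, the total magnetization concentrates near $\big(2n_+(\omega)-|V(H)|\big)\eta_c N$ with $O\!\big(\sqrt{N|V(H)|}\big)$ fluctuations obeying a local central limit theorem, so the event $M=k$ has probability $\Theta\big((N|V(H)|)^{-1/2}\big)$ exactly when $n_+(\omega)$ equals the unique integer $p^\star$ making the mean equal to $k$, and $p^\star\in\{0,\dots,|V(H)|\}$ precisely because $|\eta|<\eta_c$ --- this is where the threshold enters. Summing the phase-constant weights over $\omega$ with $n_+(\omega)=p^\star$, the weight factorizes over gadgets and inter-gadget edges, each inter-gadget edge contributing a factor strictly larger when its two gadgets agree (by a factor $\approx e^{\beta}$, since the terminal marginals in the $\pm$ phases have magnetization of definite sign), so $\Zfixed_G(\beta,k)=A^{|V(H)|}B^{|E(H)|}\cdot(1+o(1))\sum_{\omega} r^{\,\cut_H(\omega)}$, the sum being over $\omega$ with $n_+(\omega)=p^\star$, for explicit constants $A,B>0$ and $0<r<1$ --- themselves conditioned ferromagnetic Ising quantities on $\Gamma$, which we estimate to relative error $|V(H)|^{-O(1)}$ via \cite{JS93}. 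With enough parallel connections per edge of $H$, the factor $r^{\,\cut_H(\omega)}$ makes this sum dominated by the minimum $\rho$-balanced cut of $H$ (the entropy of near-optimal cuts being absorbed by the amplification); hence an FPRAS for $\Zfixed_G(\beta,k)$ would compute that minimum exactly, and \textup{NP}=\textup{RP}.

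The main obstacle is steps~(i)--(ii): making rigorous, with quantitative rates, the two-phase picture for the zero-field Ising model on a bounded-degree expander at $\beta>\beta_c$ --- the $e^{-\Omega(N)}$ suppression of interfaces inside a gadget, the convergence of in-phase magnetization and terminal marginals to the $\bbT_\Delta$ values, the approximate independence of the spread-out terminals within a phase, and the local central limit theorem through which the fixed-magnetization constraint is imposed --- all uniformly over the gadget copies. This is the analogue, for a conservative ferromagnetic model, of the delicate second-moment and phase computations behind the hard-core and antiferromagnetic-Ising inapproximability results of \cite{Sly10,SS14,galanis2016inapproximability}; the random-cluster representation of the Ising model and its favourable behaviour on expanders are the natural tools.
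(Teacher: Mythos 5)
Your part~(a) is essentially the paper's own argument: the extremal bound $\eta_G(\beta,\lam)\le\eta^+_{\Delta,\beta,\lam}$ that you invoke is exactly Theorem~\ref{thmExtremal}, which forces any activity solving $\eta_G(\beta,\lam)=\eta$ to be bounded away from $1$ uniformly over $\cG_\Delta$; combined with Lee--Yang zero-freeness, the local CLT and $\Theta(n)$ variance of Proposition~\ref{propLocalCLTalg}, binary search plus rejection sampling, and simulated annealing for counting, this is the paper's proof. (Do note that the extremal inequality is itself a main theorem of the paper, proved via Krinsky's argument and the GKS inequalities rather than by a tree-recursion bound of the kind you gesture at; you may cite it, but it is not a routine step.)

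For part~(b) your high-level plan coincides with the paper's (Sly-type gadgets, reduction from an exact balanced-cut problem), but there are genuine gaps. First, the items you defer as ``the main obstacle''---the quantitative two-phase picture on a single gadget, near-independence of terminal spins conditioned on a phase, conditional moment/MGF bounds, and above all the local central limit theorem for the magnetization conditioned on an arbitrary phase vector and arbitrary terminal assignment---are precisely the new technical content of the paper (Lemmas~\ref{lemGadgetZ}--\ref{lemCLT}); they are not available off the shelf, and in particular not for generic deterministic expanders: the paper needs random bipartite (near-)regular gadgets so that first/second moments and small subgraph conditioning give the Gaussian-type bounds $\EE Z_{G,\alpha}/\EE Z_{G,\pm}\le Cn^{-1/2}e^{-cn(\alpha-\alpha^{\pm})^2}$ and $Z_{G,+}\ge \EE Z_{G,+}/C'$, from which the conditional moments are derived. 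Second, and more concretely, your construction cannot in general hit the target magnetization $k$: the conditional means over phase splits are spaced about $2\eta_c N$ apart while the in-phase fluctuations are only $O(\sqrt{Nh})$, and with $\rho=(1+\eta/\eta_c)/2$ fixed, $\rho h$ need not be an integer and the in-phase per-gadget mean is only within $O(\sqrt N)$ of $\eta_c N$; so even for the best split the event $\{M=k\}$ can have probability $e^{-\Omega(N/h)}$, the same order as the suppression of ``wrong'' splits, and the dominant split need no longer encode a $\rho$-balanced cut. The paper avoids exactly this by taking a rational $\gamma$ strictly inside $\big((1+\eta/\eta_c)/2,\,1\big)$ and adding a tunable number $s=\Theta(nh)$ of isolated vertices so that the balanced-split mean matches $\ell$ to within $\sqrt{nh}$; those isolated vertices also provide the characteristic-function decay that makes the conditional local CLT (Lemma~\ref{lemCLT}) provable. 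Your sketch has neither device, so the reduction as described does not go through without substantial additional work.
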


In~\ref{hardFixedSuper} our proof in fact shows that given $\Delta$, $\beta$ there is some $\zeta>0$ such that unless NP=RP, there is no polynomial-time algorithm for \csproblem{Fixed-Ising}$(G,\beta,\eta)$ which achieves a multiplicative approximation of $e^{n^{\zeta}}$ on $n$-vertex graphs $G$ of maximum degree $\Delta$.

The infinite regular tree plays several  roles in the proof of Theorem~\ref{thmFixedSupercriticalBeta}.   
For the hardness results, non-uniqueness for the zero-field Ising model on the tree at $\beta > \beta_c$ corresponds to `phase coexistence' of the model on the random $\Delta$-regular graph~\cite{dembo2010ising}.  Phase coexistence allows us to use random graphs as gadgets, as Sly does in establishing a computational threshold for the hard-core model~\cite{Sly10} (and as is done in subsequent hardness proofs, e.g.~\cite{SS14,GSV15,cai2016hardness}).  Our analysis of the hardness reduction requires new techniques to account for the fixed-magnetization constraint; we give an overview of the approach in the next section. 

For the algorithmic results, the $+$ measure on the infinite regular tree is the solution to a problem from  extremal graph theory that is essential for the proof of Theorem~\ref{thmFixedSupercriticalBeta}.   

For the ferromagnetic Ising model with activity $\lam>1$, what is the maximum mean magnetization over all graphs of maximum degree $\Delta$?  We prove that the magnetization of the $+$ measure on the infinite $\Delta$-regular tree is an upper bound, and this value is approached by that of the random $\Delta$-regular graph in the $n \to \infty$ limit.  The following result is the main combinatorial result of our paper.

\begin{theorem}\label{thmExtremal}
    For all graphs $G$ of maximum degree $\Delta$, all $\lam \ge1$, and all $\beta \ge0$,
    \[ \eta_{G}(\beta,\lam)\le \eta_{\Delta,\beta,\lam}^+ \, .\]
\end{theorem}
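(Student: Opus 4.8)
I would begin with Weitz's self-avoiding-walk tree. For a fixed $G$ of maximum degree $\Delta$ and a fixed vertex $v$, Weitz's construction produces a finite tree $T=T_{\mathrm{SAW}}(G,v)$ of maximum degree $\le\Delta$ (the root having $\le\Delta$ children and every other internal vertex $\le\Delta-1$), carrying activity $\lam$ at each vertex and coupling $\beta$ at each edge, together with a boundary condition pinning certain leaves to $\pm1$, such that $\langle\sigma_v\rangle_{G,\beta,\lam}=\langle\sigma_\rho\rangle_T$. Since $\lam\ge1$ and the interaction is ferromagnetic, Griffiths' inequalities (equivalently Holley/FKG monotonicity in boundary conditions) let me replace every pinned and every free leaf by $+1$, giving $\langle\sigma_v\rangle_G\le\langle\sigma_\rho\rangle_{T^+}$; and since the single-edge message map $h\mapsto\artanh(\tanh h\,\tanh\tfrac{\beta}{2})$ is nonnegative and increasing on $[0,\infty)$, adjoining subtrees only raises the effective field at the root, so $\langle\sigma_\rho\rangle_{T^+}$ is bounded by the root magnetization $m_D$ of the depth-$D$ truncation of $\bbT_\Delta$ with all-$+$ boundary, where $D$ is governed by the girth of $G$ near $v$. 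Iterating the $(\Delta-1)$-ary recursion $h\mapsto\log\lam+(\Delta-1)\artanh(\tanh h\,\tanh\tfrac{\beta}{2})$ downward from $h=+\infty$, with one further step at the degree-$\Delta$ root, shows $m_D\downarrow\eta^+_{\Delta,\beta,\lam}$, which pins down $\eta^+_{\Delta,\beta,\lam}$ as the right target and gives $\eta_G\le\eta^+_{\Delta,\beta,\lam}$ in the limit along suitable sequences. But $m_D>\eta^+_{\Delta,\beta,\lam}$ strictly for every finite $D$, and a single vertex (say, inside a clique) can genuinely be more magnetized than the tree allows, so the sharp bound for a fixed finite $G$ must exploit the average over all $n$ vertices — which the Weitz bound does not.

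To bring in the averaging I would pass to the random-cluster representation of the Ising model with an external field. Adjoining a ghost vertex $g$ adjacent to every $v\in V(G)$ with coupling $2\log\lam$ turns the field into an ordinary ferromagnetic coupling; integrating $g$ out of the Edwards--Sokal coupling yields
\[
\eta_G(\beta,\lam)\;=\;\EE_{\omega}\!\left[\frac1n\sum_{v\in V(G)}\tanh\!\bigl(|C_v(\omega)|\log\lam\bigr)\right],
\]
where $\omega$ is drawn from the random-cluster measure on $G$ with edge weight $e^{\beta}-1$ and (size-dependent, ferromagnetic) cluster weight $\lam^{|C|}+\lam^{-|C|}$, and $C_v(\omega)$ is the cluster of $v$. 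Carrying out the same computation on $\bbT_\Delta$ expresses $\eta^+_{\Delta,\beta,\lam}$ as $\EE[\tanh(|C_o|\log\lam)]$ under the corresponding \emph{wired} random-cluster measure, with the convention $\tanh(\infty\cdot\log\lam)=1$ when $\lam>1$ accounting for the spontaneous magnetization carried by the infinite cluster. Since $x\mapsto\tanh(x\log\lam)$ is increasing, Theorem~\ref{thmExtremal} reduces to a single stochastic-domination statement: for any $G$ of maximum degree $\Delta$, the law of $|C_v(\omega)|$ for a uniformly random $v\in V(G)$ is stochastically dominated by the law of $|C_o|$ under the wired random-cluster measure on $\bbT_\Delta$.

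For this domination I would exploit that $G$ is covered by $\bbT_\Delta$ — a cluster in $G$ is the image under the covering map of a cluster in (a subtree of) $\bbT_\Delta$, and the wired boundary is the extremal one for cluster growth — and try to build a coupling of the cluster explorations on $G$ and on $\bbT_\Delta$, exploring corresponding edges along the covering map so that the $G$-cluster never outgrows the tree cluster. The main obstacle, which I expect to be genuinely hard, is the cluster-weight correction: unlike Bernoulli percolation the random-cluster model does not lift to the universal cover, and here the weight $\lam^{|C|}+\lam^{-|C|}$ \emph{increases} with $|C|$ and so favors large clusters, so the standard comparison inequalities with product measures point the wrong way; controlling how these size-biased, positively correlated clusters behave under the maximum-degree-$\Delta$ constraint versus on the tree is the crux. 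A possible alternative that avoids the random-cluster picture is an occupancy-method / linear-programming argument: relax to a distribution on the spins in a bounded neighborhood of a uniformly random vertex, impose the consistency constraints forced by the Gibbs property together with the tree recursion on the boundary of the local view, and show the optimum equals $\eta^+_{\Delta,\beta,\lam}$ — but finding constraints tight enough to certify the sharp constant is itself the difficulty.
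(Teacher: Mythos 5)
There is a genuine gap: none of the three routes you sketch is carried to completion, so no proof of the theorem is actually given. Your first paragraph (Weitz/SAW-tree plus GKS monotonicity) you yourself concede only yields the root magnetization of a \emph{finite-depth} tree with all-$+$ boundary, which strictly exceeds $\eta^+_{\Delta,\beta,\lam}$, so it cannot give the sharp constant pointwise; your correct diagnosis that the bound must exploit the average over all $n$ vertices is exactly where the work lies, and it is not done. Your second route is a legitimate reformulation: integrating out the ghost vertex does give $\eta_G(\beta,\lam)=\EE_\omega\big[\tfrac1n\sum_v\tanh(|C_v(\omega)|\log\lam)\big]$ with cluster weights $\lam^{|C|}+\lam^{-|C|}$, and monotonicity of $x\mapsto\tanh(x\log\lam)$ would reduce the theorem to stochastic domination of the size of the cluster of a uniformly random vertex of $G$ by $|C_o|$ under the wired measure on $\bbT_\Delta$. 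But that domination is precisely the content of the theorem in this encoding, and you explicitly leave it unproved, noting (correctly) that the size-increasing cluster weights defeat the standard comparison/lifting arguments; asserting that the crux is ``genuinely hard'' is not a substitute for an argument. The third alternative (an occupancy/LP relaxation) is likewise only named, with the admission that the tight constraints are unknown. A reduction of a theorem to an open lemma, however plausible, is not a proof.

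For comparison, the paper proves the result by a very different and entirely elementary route, following Krinsky: using the identity $e^{\frac{\beta}{2}s}=(1+s\tanh\tfrac\beta2)\cosh\tfrac\beta2$ together with the GKS inequalities one derives, for every oriented edge, inequalities of the form \eqref{eqMagu} and \eqref{eqPreSymmetry} relating $\artanh\langle\sigma_u\rangle$ to the quantities $L_{\overrightarrow{uv}}$ defined by $\tanh L_{\overrightarrow{uv}}=\langle\sigma_u\rangle_{G-uv}$; one first reduces to the $\Delta$-regular case by joining two copies of $G$ (Griffiths' third inequality shows this cannot decrease the mean magnetization), and then the averaging you correctly identified as essential is implemented by averaging these oriented-edge inequalities over $N(u)$ and over $u\in V$, with Jensen's inequality applied via the concavity statements of Lemma~\ref{lemCalculus}; this yields $\overline L\le L^*$ for the tree fixed point $L^*$ and hence $\eta_G\le\tanh\big(L^*+\artanh(\tanh L^*\tanh\tfrac\beta2)\big)=\eta^+_{\Delta,\beta,\lam}$. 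If you want to salvage your write-up, you would either have to prove the cluster-size domination lemma for the random-cluster measure with weights $\lam^{|C|}+\lam^{-|C|}$, or abandon that route in favour of a correlation-inequality-plus-averaging argument of this kind.
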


By integrating the mean magnetization from $\lam=1$ to $\infty$, this theorem implies the $\Delta$-regular case of a result of Ruozzi which states that the `Bethe approximation'  is a lower bound on the normalized partition function of the ferromagnetic Ising model~\cite{ruozzi2012bethe}.   In combinatorics, results of this type belong to the field of extremal problems for bounded-degree graphs:   maximizing or minimizing observables of statistical physics models over given classes of graphs, like the occupancy fraction of the hard-core or monomer-dimer models~\cite{DJPR17}.  The area is surveyed by Zhao in~\cite{zhao2017extremal} and Csikv{\'a}ri describes several cases in which the optimal bound on a partition function is given by an analogous quantity on an infinite regular tree~\cite{csikvari2016extremal}.  Bounds on observables such as the mean magnetization or occupancy fraction are stronger than bounds on the partition function, and to the best of our knowledge Theorem~\ref{thmExtremal} is the first case in which the infinite tree is proved to be extremal for an observable. 

Theorem~\ref{thmExtremal} implies the following extremal spontaneous magnetization result, which is what we use to guarantee the effectiveness of our algorithm.
Define
\[ \eta^*(\Delta,\beta) = \lim_{\lam\to1^+}\sup_{G\in \cG_{\Delta}}\eta_G(\beta,\lam),\]
where  $\cG_\Delta$ is the class of graphs of maximum degree $\Delta$.
Then $\eta^*(\Delta,\beta) =  \eta_{\Delta,\beta,1}^+$.
The lower bound comes from taking a sequence of random $\Delta$-regular graphs, while the upper bound follows from Theorem~\ref{thmExtremal}.    We describe in the next section the content of our algorithmic results for $\beta>\beta_c$: that $\eta_c(\Delta,\beta)=\eta^*(\Delta,\beta)  = \eta_{\Delta,\beta,1}^+$.

\subsection{Overview of the techniques}

\subsubsection{Algorithms}

For the algorithmic results of Theorem~\ref{thmFixedSupercriticalBeta}, we aim to apply the same type of algorithm as in Theorem~\ref{thmAlgFixedSubcriticalBeta}: find an activity $\lam$ so the mean magnetization is close to the target magnetization, and prove that the probability of hitting the mean is not too small.  Again by continuity, there is an activity with the correct mean magnetization, but the distribution may not be concentrated around its mean. For instance, taking $\lam=1$  gives $0$ mean magnetization by symmetry, but if $\beta>\beta_c$, then hitting $0$ magnetization on the random regular graph is exponentially unlikely.  So our question becomes: given an arbitrary graph of maximum degree $\Delta$ and a desired magnetization $\eta$, is the magnetization under $\mu_{G,\beta,\lam}$ guaranteed to be concentrated around its mean when $\lam$ is chosen so that the mean magnetization is (close to) $\eta$?   The answer to this question   is given by the Lee--Yang theorem~\cite{lee1952statistical} in combination with  Theorem~\ref{thmExtremal}, which guarantees that to achieve a mean magnetization $\eta > \eta^*(\Delta,\beta)$ we can pick an activity $\lam$ bounded away from $1$ independent of $n$.  The Lee--Yang theorem then gives the zero-freeness result that provides us with the required central limit theorem.

Our proof of Theorem~\ref{thmExtremal} is an extension of an approach used by Krinsky~\cite{krinsky1975bethe} to prove the result for infinite lattices like $\mathbb Z^d$ (or more generally graphs satisfying vertex and edge transitivity).  
The theorem (and the paper~\cite{krinsky1975bethe} that inspired it) may be of independent interest in combinatorics and algorithms.  The proof of Theorem~\ref{thmExtremal} relies heavily on correlation inequalities, namely the GKS inequalities~\cite{griffiths1967correlations,kelly1968general}, and identities due to Thompson~\cite{thompson1971upper}.  The techniques are distinct from previous approaches in this area of extremal graph theory such as the entropy method~\cite{kahn2001entropy}, occupancy method~\cite{DJPR17}, and inductive approaches~\cite{csikvari2017lower,sah2020reverse}.

\subsubsection{Hardness}
To prove a matching hardness result, we must overcome the barrier of the tractability of approximating the  Ising partition function. 
This rules out the approach used in~\cite{davies2021approximatelyICALP} for proving hardness of approximating the number of independent sets of a given size, namely reducing approximating the  partition function to approximating a fixed coefficient of the  partition function. 
Instead, we use the fact that imposing the fixed-magnetization constraint fundamentally alters the behavior of the model.  When highly connected components of a graph are connected with a relatively sparse set of edges, the fixed-magnetization, zero-field ferromagnetic Ising model exhibits a kind of global anti-ferromagnetic behavior due to the constraint on the magnetization: the spins on each highly connected component will align, but the number of components that pick each spin will be essentially determined by the constraint.  This behavior is what allows us to prove hardness.    We use a probabilistic analysis of the fixed-magnetization Ising model to show that a gadget construction based on that of~\cite{Sly10} can be used to reduce an NP-hard cut problem to approximating the fixed-magnetization Ising partition function.  To illustrate our methods we sketch a simplified version of the proof for zero magnetization.

Similar to previous approaches, our gadget $G$ is essentially a random $\Delta$-regular bipartite graph with some edges removed and trees attached to create `terminal vertices' of degree $\Delta-1$.  Given an instance $H$ of \csproblem{Min-Bisection}, we replace each vertex of $H$ by a copy of the gadget $G$ and then join a number of terminal vertices of the appropriate copies of the gadget graph for each edge of $H$. When $\beta>\beta_c$, the Ising model on a single gadget $G$ exhibits \emph{phase coexistence}, with a bimodal distribution of either many more $+$ spins than $-$ spins or vice versa.  The phase coexistence property of each gadget is so strong that when we take the collection of gadgets joined by the crossing edges and condition the Ising model on zero magnetization, the phase coexistence property on each gadget persists, and zero-magnetization is achieved (with high probability) by having an equal number of gadgets in each phase.  
Showing this involves proving a local central limit theorem and large deviation results for the magnetization of a collection of gadgets conditioned on an arbitrary spin assignment to the  set of terminal vertices.  This shows that the dominant contribution to the zero-magnetization partition function is given by configurations whose gadget phase assignments encode minimum bisections of $H$, and this in turn implies that a good approximation algorithm for the partition function can recover a minimum bisection. 

The proof of the local central limit theorem conditioned on the phases of the gadgets is a new technical ingredient in our proof.  It involves bounding the moments of the magnetization on a single gadget, conditioned on a phase, and employing a Fourier analytic proof of a local central limit theorem.

The full proof and the general case of $\eta \ne 0$ are only slightly more complex. 
Broadly, the same approach works except we reduce from a generalization of \csproblem{Min-Bisection}, $\gamma$-\csproblem{Min-Exact-Balanced-Cut} ($\gamma$-\csproblem{MEBC}), that requires the partition of a vertex set of size $N$ to have part sizes $\lfloor \gamma N \rfloor$ and $\lceil (1-\gamma)N \rceil$.  It is convenient to add to the collection of gadget graphs some isolated vertices which smooth out certain parts of the analysis.   In particular, it helps in proving the local central limit theorem. 
We choose $\gamma$ as a function of $\Delta$, $\beta$, and $\eta$, and we prove that when the Ising model on the collection of gadget graphs is conditioned to have  magnetization $\eta$, with high probability the phases of the gadgets are split in fractions $\gamma$ and $1-\gamma$. 
Then  a good approximation algorithm for the $\eta$-magnetization partition function can recover a minimum $\gamma$-balanced cut.

\subsection{Related work}

The algorithmic problem of sampling configurations of a fixed magnetization (or fixed size, in the case of independent sets) is the problem of sampling from the `canonical ensemble' in the language of statistical physics (in contrast to the `grand canonical ensemble' of the usual Ising or hard-core model).  Work on this problem goes back to the very first Markov Chain Monte Carlo algorithm designed to sample from the canonical ensemble of hard spheres~\cite{metropolis1953equation}.    Conservative dynamics such as these are still among the most used in current scientific applications (e.g.,~\cite{bernard2009event}).  Grand canonical ensembles are generally more amenable to mathematical analysis due to their conditional independence properties, and much is known about both specific algorithms for sampling from these distributions (e.g. Glauber dynamics~\cite{mossel2013exact}, random-cluster dynamics~\cite{guo2018random})  and about the computational complexity of the approximate counting and sampling problems for these models.  

The computational complexity of approximately counting and sampling independent sets of a given size in bounded-degree graphs was recently addressed by Davies and Perkins who proved a computation threshold for these problems~\cite{davies2021approximatelyICALP}.  As in Theorem~\ref{thmFixedSupercriticalBeta}, the threshold is given in terms of an extremal graph theory problem: that of minimizing the occupancy fraction over $G \in \cG_{\Delta}$.  Faster algorithms and an FPTAS up to the threshold for this problem were recently given in~\cite{jain2021approximate}.

The use of random graphs as gadgets in hardness reductions was pioneered by Dyer, Frieze, and Jerrum~\cite{dyer2002counting} and used by Sly in identifying the computational threshold for the hard-core model~\cite{Sly10}, with further applications in~\cite{SS14,GSV15,cai2016hardness,GSVY16} among others.  In particular, a detailed understanding of the moments of the partition function $Z_G$ for random regular graphs is now known, and, via the small subgraph conditioning method, concentration results for $Z_G$.  We use this understanding extensively in Section~\ref{secHardness}.

Finally, the Ising model at fixed magnetization has been studied extensively in both mathematics and  physics,  on $\mathbb Z^d$ and on random graphs~\cite{mezard1987mean}.  Conditioning the ferromagnetic Ising model on zero magnetization has the effect of introducing `frustration': the impossibility of satisfying all edge constraints simultaneously.   

At zero temperature ($\beta =\infty$), the zero-magnetization Ising model is simply the uniform distribution on min-bisections of a graph; finding the size of the min bisection has long been known to be NP-hard~\cite{garey1974some}. The min-bisection problem is also studied on random graphs from the perspective of statistical physics~\cite{percus2008peculiar,zdeborova2010conjecture,diaz2004computation,dembo2017extremal}.  Our work is an exploration of the worst-case computational complexity of the positive temperature regime of this problem.

\subsection{Questions and future directions}

Though we do  not pursue it in this extended abstract, it is likely that the techniques of Jain, Perkins, Sah, and Sawhney~\cite{jain2021approximate} can be used to improve the algorithmic results of Theorems~\ref{thmAlgFixedSubcriticalBeta} and~\ref{thmFixedSupercriticalBeta} in two ways:
\begin{enumerate}
\item Obtain an FPTAS (efficient deterministic approximation algorithm) for \csproblem{Fixed-Ising}$(G,\beta,\eta)$  for the same range of parameters for which we obtain an FPRAS.
\item Improve the running time of our approximate sampling algorithm to $\tilde O( n \log n)$. 
\end{enumerate}

We have shown here  a computational threshold for the fixed-magnetization Ising model.  One can also ask what is achievable with a specific algorithm widely used in scientific applications, namely the Kawasaki dynamics.  We conjecture that the Kawasaki dynamics mix rapidly on all graphs of maximum degree $\Delta$ for the same set of parameters for which we provide an FPRAS.  In fact there are two versions of the Kawasaki dynamics: the \emph{local flip} dynamics in which at each step a swap of spins across an edge is proposed; and the \emph{global flip} dynamics in which at each step a swap of arbitrary spins in the graph is proposed.    We conjecture that both versions mix in polynomial time for the parameters above; we further conjecture that the global flip dynamics mix in time $O(n \log n)$.

\begin{conj}
For $\beta< \beta_c(\Delta)$, the Kawasaki dynamics mix in time polynomial in $n$ for any fixed magnetization and any graph $G$ of maximum degree $\Delta$ on $n$ vertices.

For $\beta > \beta_c(\Delta)$ and $|\eta | > \eta_c(\Delta,\beta)$ the Kawasaki dynamics mix in time polynomial in $n$ for any fixed magnetization $k \ge \eta n$  and any graph $G$ of maximum degree $\Delta$ on $n$ vertices.

For the global flip dynamics, the mixing time in both cases in $O(n \log n)$. 
\end{conj}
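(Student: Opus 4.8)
\medskip

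\noindent\emph{Strategy.} The plan is to deduce rapid mixing of the conservative (Kawasaki) dynamics from two ingredients: rapid mixing of the \emph{grand-canonical} Glauber dynamics for $\mu_{G,\beta,\lam}$ at a well-chosen activity, and a transfer principle to the fixed-magnetization setting. First I would pick $\lam=\lam(n)>0$ with $\eta_G(\beta,\lam)=\eta$, which exists by continuity (and equals $1$ when $\eta=0$ by spin-flip symmetry). When $\beta<\beta_c(\Delta)$ every $\lam$ lies in the tree-uniqueness regime, so the ferromagnetic Ising model on any graph of maximum degree $\Delta$ enjoys spectral, indeed entropic, independence and the grand-canonical Glauber dynamics mixes in $O(n\log n)$. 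When $\beta>\beta_c(\Delta)$, Theorem~\ref{thmExtremal} together with monotonicity of $\lam\mapsto\eta_{\Delta,\beta,\lam}^+$ forces $\lam\ge\lam_0(\Delta,\beta,\eta)>1$ uniformly in $n$; for $|\eta|$ large enough this $\lam_0$ lies outside the coexistence interval of $\bbT_\Delta$ and the grand-canonical chain again mixes rapidly, while for $\eta_c<|\eta|$ below that range one works with the $+$ phase directly (see below). In all cases the Lee--Yang theorem~\cite{lee1952statistical} makes $\lam\mapsto Z_G(\beta,\lam)$ zero-free off the unit circle, hence zero-free in a fixed disk about $\lam$, and this yields, as in~\cite{PR20,MS19}, a local central limit theorem for $M(\sigma)$ with variance $\Theta(n)$.

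\medskip

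\noindent\emph{Transfer to the conservative dynamics.} This has two steps. (i)~\emph{From grand-canonical to global-flip Kawasaki.} The global-flip dynamics on $\Sigma_G(k)$ is a swap walk for the $k$-homogeneous measure $\nu_{G,\beta,k}$ on the set of $+$ spins, and $\nu_{G,\beta,k}$ is $\mu_{G,\beta,\lam}$ conditioned on $\{M(\sigma)=k\}$, an event of probability $\Omega(n^{-1/2})$ for the relevant $k$ by the local CLT above. I would show that entropic (or at least spectral) independence persists under this conditioning at the cost of a $\mathrm{poly}(n)$ factor, using the sharp local limit theorem from zero-freeness to bound the cost of conditioning and the GKS and FKG correlation inequalities to dominate $\cov_{\nu_{G,\beta,k}}(\sigma_u,\sigma_v)$ by its grand-canonical counterpart; feeding this into the entropic-independence theory of swap walks for homogeneous measures (in the spirit of the fixed-size results of~\cite{davies2021approximatelyICALP,jain2021approximate}) gives $O(n\log n)$ mixing of the global-flip dynamics. (ii)~\emph{From global-flip to local-flip Kawasaki.} Here I would compare Dirichlet forms by a block decomposition in the style of Lu--Yau~\cite{lu1993spectral} and Cancrini--Martinelli~\cite{cancrini1999spectral,cancrini2000spectral}: updates restricted to small balls of $G$ are comparable to single nearest-neighbour swaps up to a factor polynomial in the ball size, while the residual relaxation of the block magnetizations is controlled by the decay of correlations valid in our parameter regime. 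The net loss should be polynomial in $n$ (polylogarithmic for expanders), giving polynomial mixing of the local-flip dynamics.

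\medskip

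\noindent\emph{Main obstacles.} Step~(i) is the crux: conditioning on the probability-$\Theta(n^{-1/2})$ event $\{M(\sigma)=k\}$ can \emph{a priori} destroy rapid mixing, so no black-box restriction lemma applies, and one needs a genuinely model-specific, quantitative statement that spectral/entropic independence is stable under conditioning on a near-central value of a linear statistic, with Lee--Yang zero-freeness supplying the local (not merely central) limit theorem and the GKS inequalities the required domination of conditioned correlations, uniformly over graphs of maximum degree $\Delta$. In the supercritical case there is an extra wrinkle: when $\eta$ is only slightly above $\eta_c$ the activity $\lam$, though $>1$, lies inside the coexistence interval of $\bbT_\Delta$, so the grand-canonical model is not in tree uniqueness and one cannot start from a rapidly mixing Glauber chain; instead one must argue that $\nu_{G,\beta,k}$ is effectively single-phase --- magnetization exceeding $\eta_c$ selects the dominant $+$ phase --- for instance via a polymer representation of the $+$ phase conditioned on magnetization $k$, pinning the magnetization through Theorem~\ref{thmExtremal} and transferring decay of correlations via the correlation inequalities, and then running the swap-walk analysis inside that phase. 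Step~(ii) is less central but still nontrivial: the Lu--Yau and Cancrini--Martinelli arguments are tailored to $\mathbb{Z}^d$, and adapting them, together with the control of block-magnetization relaxation, to arbitrary bounded-degree graphs requires care. Finally, the sharp $O(n\log n)$ rate for the global-flip dynamics (rather than an unspecified polynomial) needs the \emph{entropic} (modified-log-Sobolev) form of independence carried through step~(i), which is more delicate than the spectral-gap version but should go through given the inputs above; the constants, and hence the polynomial degree in the local-flip bound, depend on $\Delta$, $\beta$, and $|\eta|-\eta_c$. The threshold $\eta_c$ itself is forced, since by Theorem~\ref{thmFixedSupercriticalBeta}\ref{hardFixedSuper} no rapidly mixing dynamics can exist for $|\eta|<\eta_c$ unless \textup{NP=RP}.
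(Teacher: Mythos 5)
The statement you are addressing is not a theorem of the paper: it is stated as an open conjecture (the authors prove the counting/sampling thresholds via rejection sampling from the grand-canonical model, and explicitly leave the mixing of the Kawasaki dynamics unproved), so there is no paper proof to compare against, and your text would need to be a self-contained proof to settle it. As written it is not one; it is a research program whose central steps are exactly the open difficulties. The crux, as you yourself note, is step~(i): no existing black-box result transfers rapid mixing (spectral or entropic independence, or a log-Sobolev inequality) from $\mu_{G,\beta,\lam}$ to its conditioning on the event $\{M(\sigma)=k\}$, and the local central limit theorem only tells you this event has probability $\Theta(n^{-1/2})$ --- conditioning on a probability-$n^{-1/2}$ event can in general degrade the spectral gap by an arbitrary factor, and neither Lee--Yang zero-freeness nor the GKS/FKG inequalities is known to control the \emph{dynamical} quantities (Dirichlet form comparisons, conditional covariances entering entropic independence for the swap walk) that you would need; the fixed-size hard-core results you cite~\cite{davies2021approximatelyICALP,jain2021approximate} give counting and sampling algorithms, not mixing bounds for conservative dynamics, so they cannot be ``fed into'' a swap-walk analysis as stated. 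The supercritical case has a second unproved pillar: for $\eta$ just above $\eta_c$ the required activity $\lam$ can lie in the tree non-uniqueness region, where rapid mixing of even the grand-canonical Glauber dynamics on arbitrary graphs of maximum degree $\Delta$ is false in general (random regular graphs are slow-mixing there), and the ``polymer representation of the $+$ phase'' you invoke is only available for expanders or random graphs, not for worst-case bounded-degree graphs, which is the class the conjecture concerns.

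Step~(ii) has the same character: the Lu--Yau and Cancrini--Martinelli block arguments rely on the geometry of $\mathbb{Z}^d$ (canonical paths moving particles through boxes, equivalence-of-ensembles estimates on cubes), and adapting them to arbitrary bounded-degree graphs --- where a single nearest-neighbour swap may have to traverse long paths to simulate a global swap, at a cost governed by graph structure you do not control --- is itself an open problem, not a routine comparison of Dirichlet forms. None of this means the strategy is wrong-headed; it is a reasonable outline of how one might attack the conjecture, and your observation that hardness (Theorem~\ref{thmFixedSupercriticalBeta}\ref{hardFixedSuper}) pins down the threshold $\eta_c$ is correct. But each of the three ingredients you label ``I would show'' is precisely where the conjecture's difficulty lives, so the proposal should be presented as a plan with identified obstacles, not as a proof.
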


In the previous uses of random (bipartite) graphs as gadgets in hardness reductions for approximate counting problems, the gadgets themselves are not in general hard instances for the given problems.   In particular, recent results~\cite{JenssenAlgorithmsJ,helmuth2020finite,chen2021sampling,jenssen2021approximatelySODA} show that for parameters sufficiently deep in the given non-uniqueness regimes, random regular graphs are tractable instances for approximate counting and sampling.   We ask whether for random graphs there are efficient algorithms anywhere inside the NP-hardness regime.
\begin{question}
For $\Delta \ge 3$, $\beta>\beta_c(\Delta)$, is there some $|\eta| < \eta_c(\Delta,\beta)$
so that  there exist efficient approximate counting and sampling algorithms for \csproblem{Fixed-Ising}$(G,\beta,\eta)$ for random $\Delta$-regular graphs?
\end{question}

\subsection{Organization}

In Section~\ref{secPrelim} we provide some of the results we will use in our algorithms and hardness reductions.  In Section~\ref{secHardness} we give the hardness reduction. In Section~\ref{secExtremal} we prove Theorem~\ref{thmExtremal}, solving the extremal problem that identifies the limit of our algorithmic approach.  In Section~\ref{secAlgorithms} we prove the algorithmic results.

\section{Preliminaries}
\label{secPrelim}

Recall that $\cG_\Delta$ denotes the class of graphs of maximum degree $\Delta$.  We use $\mu_{G,\beta,\lam}$ to denote the Ising model on $G$ at inverse temperature $\beta $ and activity $\lam$.  We will often drop $\beta$ from the notation when it remains fixed and we will drop $\lam$ from the notation in the case $\lam=1$ (so $\mu_G = \mu_{G,\beta,1}$ when $\beta$ is understood from the context).   We use the bracket notation $\langle \cdot \rangle_{G,\beta,\lam}$ to denote expectations with respect to the Ising model, in part to distinguish these expectations from expectations over random graphs in Section~\ref{secHardness}.  For a graph $G$, let $\Sigma_G = \{ \pm 1\}^{V(G)}$.  Slightly abusing notation, for $U \subset V(G)$, let $\Sigma_U = \{ \pm 1\}^U$.   We let $M(\sigma)$ denote the magnetization of a configuration $\sigma$ and $X(\sigma)$ denote the number of $+$ spins (so $M(\sigma) = 2 X(\sigma) - |V(G)|$).  We let $\bX$ denote the random variable $X(\sigma)$ when $\sigma$ is drawn from $\mu_{G,\beta,\lam}$.

We now collect a number of results that we will use in the proofs that follow.    The first results are results on zero-free regions for the Ising model partition function, viewed as a (Laurent) polynomial in $\lam$.

\begin{theorem}[Lee--Yang~\cite{lee1952statistical}]\label{thmLeeYang}
  For $\beta \ge 0$, $\lam \in \mathbb C$,  and any graph $G$, $Z_G(\beta,\lam) =0$ only if $|\lam|=1$.
\end{theorem}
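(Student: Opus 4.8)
The plan is to reduce the statement to a multivariate non‑vanishing claim and then establish that claim by Asano's contraction method. Fix $\beta\ge 0$, set $t=e^{-\beta}\in(0,1]$, and for a graph $G=(V,E)$ on $n$ vertices write $\partial S$ for the set of edges with exactly one endpoint in $S\subseteq V$. Grouping the sum defining $Z_G$ according to $S=\{v:\sigma_v=+1\}$ and using $\sum_{(u,v)\in E}\sigma_u\sigma_v=|E|-2|\partial S|$ and $M(\sigma)=2|S|-n$ gives
\[
 Z_G(\beta,\lambda)=e^{\beta|E|/2}\,\lambda^{-n}\,P_G(\lambda^2,\dots,\lambda^2),
 \qquad
 P_G(\mathbf z):=\sum_{S\subseteq V}\Bigl(\prod_{v\in S}z_v\Bigr)t^{|\partial S|}.
\]
Here $P_G$ is affine in each $z_v$ and $P_G(0,\dots,0)=1$, so $\lambda^nZ_G(\beta,\lambda)$ is a polynomial not vanishing at $\lambda=0$. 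Flipping every spin shows $Z_G(\beta,\lambda)=Z_G(\beta,1/\lambda)$, so it suffices to rule out zeros with $0<|\lambda|<1$, and for those $Z_G(\beta,\lambda)=0$ if and only if $P_G(\lambda^2,\dots,\lambda^2)=0$. Thus the theorem follows once we show $P_G(\mathbf z)\ne 0$ whenever $|z_v|<1$ for every $v$.

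For a single edge $e=uv$ put $P_e(x,y)=1+tx+ty+xy$ (this is $P_G$ when $G$ is that single edge). If $P_e(x,y)=0$ with $|x|<1$, then $t+x\ne 0$ (else $P_e=1-t^2>0$), so $y=-(1+tx)/(t+x)$, and the identity $|1+tx|^2-|t+x|^2=(1-t^2)(1-|x|^2)\ge 0$ forces $|y|\ge 1$; hence $P_e$ has no zero in the open bidisk. Now give each vertex $v$ a separate variable $z_v^{(e)}$ for each incident edge $e$ and form $\Pi:=\prod_{e=uv\in E}P_e(z_u^{(e)},z_v^{(e)})$; as a product, over disjoint variable sets, of polynomials with no zero in the relevant polydisk, $\Pi$ has no zero in the open polydisk. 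Recall the Asano contraction operation: if a polynomial $R$ is affine in two of its variables $x,y$, say $R=a+bx+cy+dxy$ with $a,b,c,d$ polynomials in the remaining variables, replace $(x,y)$ by a single new variable $w$ to obtain $\widehat R:=a+dw$. A direct expansion shows that applying Asano contraction repeatedly to the copies $\{z_v^{(e)}\}_{e\ni v}$ of each vertex variable, starting from $\Pi$, yields exactly $P_G$: only monomials that use either all copies or no copy of a given vertex variable survive (these are indexed by subsets $S\subseteq V$), and the edgewise product of term‑coefficients $1,t,t,1$ collapses to $t^{|\partial S|}$.

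It remains to check that Asano contraction cannot create a zero in the polydisk: if $R\ne 0$ whenever all its variables lie in the open unit disk, the same holds for $\widehat R$. Freezing the remaining variables at a point of the polydisk reduces this to the scalar claim: if $a+bx+cy+dxy\ne 0$ for all $|x|<1,\,|y|<1$, then $a+dw\ne 0$ for $|w|<1$. First $a\ne 0$, since $a=0$ would make $(x,y)=(0,0)$ a zero. If $d=0$ the claim is trivial; otherwise substitute $x=y=u$, so $du^2+(b+c)u+a$ has no root in $|u|<1$, hence both roots have modulus $\ge 1$, and since their product is $a/d$ we get $|a/d|\ge 1$. Chaining the edge estimate, the trivial fact that a product over disjoint variables of non‑vanishing polynomials is non‑vanishing, and this contraction lemma, we conclude $P_G(\mathbf z)\ne 0$ for $\mathbf z$ in the open polydisk, as required.

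The one point needing care is the combinatorial identity that iterated Asano contraction of $\Pi$ reproduces $P_G$ — matching the surviving ``all‑or‑nothing'' monomials with the subsets $S$ and checking that the collapsed coefficient is $t^{|\partial S|}$; the two analytic inputs (the edge estimate, and the contraction lemma via the diagonal substitution $x=y=u$) are short. Alternatives to this route are the original inductive contour‑integral argument of Lee and Yang, or a derivation through the Lieb--Sokal lemma in the theory of stable polynomials, but the Asano approach is the most self‑contained.
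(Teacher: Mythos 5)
Your proof is correct, and it is necessarily a different route from the paper's, because the paper does not prove this statement at all: it is quoted as a classical theorem with a citation to Lee and Yang's original 1952 paper. Checking your argument: the reduction $Z_G(\beta,\lambda)=e^{\beta|E|/2}\lambda^{-n}P_G(\lambda^2,\dots,\lambda^2)$ with $P_G(\mathbf z)=\sum_S t^{|\partial S|}\prod_{v\in S}z_v$ is right; the single-edge estimate follows from $|1+tx|^2-|t+x|^2=(1-t^2)(1-|x|^2)\ge 0$ (and the excluded case $t+x=0$ is handled correctly, since $x=-t$ with $|x|<1$ forces $t<1$); the contraction lemma via the diagonal substitution $x=y=u$ and the product-of-roots identity $r_1r_2=a/d$ is a clean and valid proof for the open polydisk; and the bookkeeping that iterated contraction of $\Pi=\prod_e P_e(z_u^{(e)},z_v^{(e)})$ keeps exactly the all-or-nothing monomials indexed by $S\subseteq V$ with coefficient $t^{|\partial S|}$ is correct (one checks, as you implicitly do, that contracting the copies of a vertex variable two at a time is associative in the required sense). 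Two cosmetic points: isolated vertices contribute no edge factor, so strictly the contraction produces $P_{G'}$ for the graph on non-isolated vertices and one multiplies by the harmless factors $(1+z_v)$; and the final passage through the symmetry $Z_G(\beta,\lambda)=Z_G(\beta,1/\lambda)$ disposes of $|\lambda|>1$ as claimed. Compared with the inductive argument of Lee and Yang that the paper's citation points to, the Asano--Ruelle contraction route you chose is more modular and self-contained, and it gives slightly more for free: edge-dependent couplings $\beta_{uv}\ge 0$ and vertex-dependent activities anywhere in the open unit polydisk are covered by exactly the same argument, which is in fact the form of the theorem most often used in modern applications.
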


\begin{theorem}[Peters--Regts~\cite{PR20}]\label{thmzerofree}
    Let $\Delta\ge 3$ and $\beta\in(0, \beta_c(\Delta))$. Then there exists $\theta=\theta(\beta)\in(0,\pi)$ such that for any $\lam\in\CC$ with $|\arg(\lam)| < \theta$ and any graph $G\in\cG_\Delta$ we have $Z_G(\beta,\lam)\ne 0$.
\end{theorem}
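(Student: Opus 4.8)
The plan is to prove the stronger non-uniform statement — a common ``zero-free sector'' in each vertex activity — via the standard strategy of reducing zero-freeness on bounded-degree graphs to a complex dynamics on trees, where subcriticality $\beta<\beta_c(\Delta)$ supplies the contraction that makes everything work. (By Theorem~\ref{thmLeeYang} the zeros lie on $|\lam|=1$, so morally it suffices to keep them a bounded angular distance from $\lam=1$, but the argument below handles a full complex sector directly.)

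\textbf{Step 1 (ratios; reduction to trees).} For a graph $G$ with a distinguished vertex $v$ and all vertex activities in a sector $|\arg\lam_w|<\theta$, write $Z_G^{\pm}(v)$ for the partition function with $\sigma_v$ pinned to $\pm1$, so that $Z_G = Z_G^{+}(v)+Z_G^{-}(v) = Z_G^{-}(v)\bigl(1+R_{G,v}\bigr)$ with $R_{G,v}:=Z_G^{+}(v)/Z_G^{-}(v)$. Pinning a vertex changes each neighbour's activity only by a positive real factor, so $Z_G^{-}(v)$ is again an Ising partition function on the smaller graph $G-v$ with activities in the same sector; by induction on $|V(G)|$ it is nonzero, and it therefore suffices to show that $R_{G,v}$ is finite and never equals $-1$, for all $G\in\cG_\Delta$, all $v$, and all activities in the sector. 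By the self-avoiding-walk tree identity for two-spin systems (Weitz's construction), $R_{G,v}$ equals the root ratio of the tree $T=T_{\mathrm{SAW}}(G,v)$, which has maximum degree $\Delta$, so we are reduced to the ratio recursion on such trees.

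\textbf{Step 2 (an invariant complex domain).} On a tree, if $v$ has children $u_1,\dots,u_d$ then
\[ R_v \;=\; \lam_v^{2}\prod_{i=1}^{d} f(R_{u_i}), \qquad f(z)=\frac{z+e^{-\beta}}{e^{-\beta}z+1}, \]
with $R$ of a leaf equal to $\lam^{2}$, where $d\le\Delta-1$ at internal vertices and $d\le\Delta$ at the root. The key is to produce, for each $\beta<\beta_c(\Delta)$, an angle $\theta(\beta)\in(0,\pi)$ and an open set $\cM=\cM(\beta)\subset\CC$ with: (i) $\lam^{2}\in\cM$ whenever $|\arg\lam|<\theta$; (ii) $R_1,\dots,R_d\in\cM$ and $|\arg\lam|<\theta$ imply $\lam^{2}\prod_{i\le d}f(R_i)\in\cM$ for $d\le\Delta-1$; and (iii) a slightly enlarged region containing the image of $\cM^{\Delta}$ under the root map avoids $-1$ and the pole $-e^{\beta}$ of $f$. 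Subcriticality enters precisely here: $\beta<\beta_c(\Delta)$ is equivalent to $(\Delta-1)\tanh(\beta/2)<1$, which says the real one-variable map $x\mapsto\bigl(f(x)\bigr)^{\Delta-1}$ is a contraction near its fixed point $x=1$; perturbing this to a complex neighbourhood of $1$ and to $\lam$ in a small sector gives $\cM$, where property (ii) is checked either via a potential/contraction metric on ratio space or by reducing the worst case over $(R_1,\dots,R_d)$ to the symmetric case $R_1=\dots=R_d$, a one-dimensional estimate. Iterating (i)–(ii) up the tree places every internal ratio in $\cM$, and (iii) bounds the root ratio away from $-1$ and $\infty$.

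\textbf{Step 3 and the main obstacle.} Combining the two steps: for $|\arg\lam|<\theta$ every ratio $R_{G,v}$ is finite and $\ne-1$, so the telescoping product for $Z_G$ has all factors nonzero and $Z_G(\beta,\lam)\ne0$; since $\theta$ and $\cM$ depend only on $\beta$ and $\Delta$, the conclusion is uniform over $G\in\cG_\Delta$. The crux of the argument is constructing $\cM$ and verifying the invariance (ii) uniformly in $\lam$ over a sector and over all child-counts $d\le\Delta-1$ — this is exactly where $\beta<\beta_c(\Delta)$ must be used quantitatively, upgrading real-line contraction of the tree recursion to a robust statement in $\CC$; the extremality/potential reduction to one dimension, and the verification that the resulting region can be chosen to exclude $-1$ and $-e^{\beta}$, are the delicate parts. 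The Step 1 reduction (pinnings stay in the sector, the self-avoiding-walk tree identity, the telescoping) is routine modulo these known tools.
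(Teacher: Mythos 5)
First, note that the paper does not prove Theorem~\ref{thmzerofree} at all: it is imported as a black box from Peters--Regts~\cite{PR20}, so there is no internal proof to compare against. Judged on its own terms, your write-up is a correct description of the standard strategy in that literature (reduce to a ratio recursion on trees of maximum degree $\Delta$, e.g.\ via Weitz's self-avoiding-walk construction, then run an induction keeping the ratios in a region avoiding $-1$), and your bookkeeping is accurate: pinning a vertex multiplies the neighbouring activities by positive reals so the sector condition is preserved, the recursion $R_v=\lam_v^2\prod_i f(R_{u_i})$ with $f(z)=\frac{z+e^{-\beta}}{e^{-\beta}z+1}$ is the right one, and $\beta<\beta_c(\Delta)$ is indeed equivalent to $(\Delta-1)\tanh\tfrac{\beta}{2}<1$, i.e.\ to contraction of the real recursion (in fact globally on $(0,\infty)$ in the logarithmic metric, not just near the fixed point).

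However, there is a genuine gap: Step 2, which you yourself flag as ``the crux,'' is not carried out. You assert the existence of an angle $\theta(\beta)$ and an invariant complex region $\cM$ satisfying (i)--(iii), but you neither construct $\cM$ nor verify the invariance uniformly over all $d\le\Delta-1$ and over activities in the sector, and that verification is the entire mathematical content of the theorem --- it is precisely where~\cite{PR20} (and the closely related work of Liu--Sinclair--Srivastava) expend their effort, via carefully chosen regions, potential functions, or Möbius-geometric arguments. The difficulty is not cosmetic: upgrading a real contraction estimate to a statement stable under complex perturbation, uniformly in the number of children and in the activity (note that after the pinnings of Step 1 the effective activities acquire real factors $e^{j\beta/2}$, $|j|\le\Delta$, so even after using Theorem~\ref{thmLeeYang} to restrict to $|\lam|=1$ they are not confined to a small neighbourhood of $1$), and arranging that the root image avoids $-1$ and the pole of $f$, requires a quantitative construction that your sketch replaces with the phrase ``perturbing this to a complex neighbourhood.'' As written, the proposal is a correct plan with the decisive step missing, not a proof.
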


By the following general result of Michelen and Sahasrabudhe, these zero-freeness results imply  central limit theorems for the magnetization of the ferromagnetic Ising model on graphs in $\cG_\Delta$ when $\beta<\beta_c(\Delta)$ or when $\lam >1$.   We apply this result to a random variable counting the number of $+1$ spins in a sample from the Ising model; its generating function is a scaling of $Z_G(\beta,\lam)$.

\begin{theorem}[Michelen--Sahasrabudhe~\cite{MS19}]\label{thm:clt}
    For $n\ge 1$ let $\bX_n$ be a random variable taking values in $\{0,\dotsc,n\}$ with mean $\mu_n$, standard deviation $\sigma_n$, and probability generating function $f_n$. If the roots $\zeta$ of $f_n$ satisfy $|\arg(\zeta)|\ge \delta_n$ and $\sigma_n\delta_n\to\infty$, then $(X_n-\mu_n)/\sigma_n$ converges in distribution to a standard normal random variable.
\end{theorem}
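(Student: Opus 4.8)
The plan is to verify the hypotheses of L\'evy's continuity theorem: it suffices to show that for each fixed $t\in\R$ the characteristic function $\varphi_n(t):=\EE\big[e^{it(\bX_n-\mu_n)/\sigma_n}\big]$ converges to $e^{-t^2/2}$. Because $\bX_n$ is supported on $\{0,\dots,n\}$, its probability generating function $f_n$ is a polynomial with non-negative coefficients and $f_n(1)=1$ (so $1$ is not a root), and writing $\zeta_1,\dots,\zeta_{d_n}$ for its roots with multiplicity we have the factorization $f_n(z)=\prod_j\frac{z-\zeta_j}{1-\zeta_j}$, whence
\[
\varphi_n(t)=e^{-it\mu_n/\sigma_n}\,f_n\big(e^{it/\sigma_n}\big)=\prod_j\psi_j(t/\sigma_n),\qquad \psi_j(\theta):=e^{-i\mu_j\theta}\,\frac{e^{i\theta}-\zeta_j}{1-\zeta_j},
\]
where $\mu_j:=(1-\zeta_j)^{-1}$ is chosen so that $\psi_j(0)=1$ and $\psi_j'(0)=0$. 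A direct computation with the logarithmic derivative of $f_n$ at $z=1$ gives the two identities that organize the whole argument: $\sum_j\mu_j=\mu_n$ and $\sum_j v_j=\sigma_n^2$, where $v_j:=-\psi_j''(0)=-\zeta_j/(1-\zeta_j)^2$; both sums are real since non-real roots come in conjugate pairs.

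Next I would record the two geometric consequences of the hypothesis $|\arg\zeta_j|\ge\delta_n$ (we may assume $\delta_n\le\pi/2$, as $\delta_n\le\pi$ and $\sigma_n\delta_n\to\infty$ already force $\sigma_n\to\infty$). First, since $\zeta_j$ lies outside the sector $\{|\arg z|<\delta_n\}$, its distance to the positive real axis is at least $|\zeta_j|\sin\delta_n$, and together with $|1-\zeta_j|\ge 1-|\zeta_j|$ this gives $|1-\zeta_j|\ge c_1\delta_n$ for all $j$. Second, an Apollonius-circle argument shows that for $|\theta|\le\delta_n/2$ one has $|e^{i\theta}-\zeta_j|\ge c_2\,|1-\zeta_j|$ with $c_2>0$ absolute: the ``bad'' locus $\{z:|z-e^{i\theta}|<c_2|z-1|\}$ is a disk centred near $e^{i\theta}$ of radius $O(c_2\delta_n)$, which for $c_2$ small enough lies inside the sector and hence contains no root. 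In particular $f_n(e^{i\theta})\ne0$ for $|\theta|<\delta_n$, so $\theta\mapsto\log f_n(e^{i\theta})$ is well-defined and analytic there with value $0$ at $\theta=0$, and each $\psi_j$ stays bounded away from $0$ on $|\theta|\le\delta_n/2$.

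The heart of the argument is a uniform second-order (Gaussian) estimate for $\log f_n$ along the arc. With $G_n(\theta):=\log f_n(e^{i\theta})-i\mu_n\theta$ we have $G_n(0)=G_n'(0)=0$ and $G_n''(0)=-\sigma_n^2$, so $Q_n(\theta):=G_n(\theta)+\tfrac12\sigma_n^2\theta^2$ vanishes to third order and Taylor's theorem gives $|Q_n(\theta)|\le\frac{|\theta|^3}{6}\sup_{|s|\le|\theta|}|G_n'''(s)|$, with $G_n'''(s)=\sum_j h'''(s;\zeta_j)$, $h(s;\zeta)=\log(e^{is}-\zeta)$, and $|h'''(s;\zeta)|\le|\zeta|(1+|\zeta|)/|e^{is}-\zeta|^3$. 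Plugging $\theta=t/\sigma_n$ (valid for large $n$, since $|t|/\sigma_n\le\delta_n/2$ eventually) one must show this is $o(1)$. I would organize the bound per root, pairing conjugates. For $\zeta_j$ with $\mathrm{Re}\,\zeta_j\le0$ — in particular all real roots, which are automatically $\le0$ since $f_n$ has non-negative coefficients — the factor $\psi_j$ (resp.\ $\psi_{\zeta_j}\psi_{\bar\zeta_j}$) is the characteristic function of an honest random variable supported on $\{0,1,2\}$ with $v_j\ge0$, and the classical third-moment bound gives $|\log\psi_j(\theta)+\tfrac12 v_j\theta^2|\le Cv_j|\theta|^3$; this is exactly the Lyapunov estimate and makes no use of $\delta_n$. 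For $\zeta_j$ with $\mathrm{Re}\,\zeta_j>0$ — necessarily non-real, and the only roots able to approach the positive real axis — the corresponding quadratic piece is \emph{not} a characteristic function, but the lower bounds $|e^{i\theta}-\zeta_j|\ge c_2|1-\zeta_j|\ge c_1c_2\delta_n$ keep $\log\psi_j$ and its third derivative under control; charging these remainders against the identity $\sum_j v_j=\sigma_n^2$ then bounds the total error by a constant times $|t|^3/\sigma_n+|t|^3/(\sigma_n\delta_n^2)$, which is $o(1)$ because $\sigma_n\delta_n\to\infty$. Hence $\log\varphi_n(t)=-\tfrac12 t^2+o(1)$, and L\'evy's theorem finishes the proof.

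I expect the main obstacle to be precisely this last step: bounding the contribution of the non-real roots with positive real part. The naive estimate $\sup|G_n'''|\lesssim\sum_j|\zeta_j|(1+|\zeta_j|)/|1-\zeta_j|^3$ is too lossy — there are root configurations for which it fails to beat $\sigma_n^3$ while the CLT nonetheless holds — so the remainder estimate must be carried out factor by factor (conjugate pairs together) and charged against the genuine variance budget $\sum_j v_j=\sigma_n^2$, with the sector hypothesis $|1-\zeta_j|\gtrsim\delta_n$ used to absorb the loss. This accounting is the technical core of the Michelen--Sahasrabudhe argument; the remaining ingredients — the factorization, the reduction to convergence of characteristic functions, and the $\mathrm{Re}\,\zeta_j\le0$ case — are routine.
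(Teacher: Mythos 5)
This theorem is not proved in the paper at all: it is quoted verbatim from Michelen--Sahasrabudhe~\cite{MS19} and used as a black box (to deduce Proposition~\ref{propLocalCLTalg} and the sampling results), so there is no in-paper proof to compare your attempt against. Judged as a proof of the cited result, your setup is the natural one and large parts are sound: the factorization of $f_n$ over its roots, the reduction to pointwise convergence of the characteristic function, the geometric facts $|1-\zeta|\ge c\,\delta_n$ and $|e^{i\theta}-\zeta|\ge c_2|1-\zeta|$ for $|\theta|\le\delta_n/2$, and the observation that a conjugate pair with $\Re\zeta\le 0$ (and every real root) yields a genuine probability generating function on $\{0,1,2\}$, so that the Lyapunov third-moment bound applies to those factors.

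The gap is exactly where you place it, and it is not a routine omission: the aggregate third-order remainder from the roots with $\Re\zeta>0$ is the entire difficulty of the theorem, and your proposed resolution does not work as described. First, "charging against the variance budget $\sum_j v_j=\sigma_n^2$" cannot be done factor by factor: a conjugate pair with positive real part can contribute \emph{negative} variance (for $\zeta=Re^{i\delta_n}$ with $R$ large, $v_\zeta+v_{\bar\zeta}\approx -2/R<0$), so there is no per-pair budget to charge a positive remainder against. What is needed is a global bound of the form $\sum_j |v_j|=O\big(\sigma_n^2/\delta_n^{O(1)}\big)$ (equivalently, control of $\sum_j|1-\zeta_j|^{-2}$), and proving such a bound — which uses that $f_n$ has nonnegative coefficients, not merely that its roots avoid the sector — is the technical core of the argument in~\cite{MS19}; it is asserted, not supplied, in your proposal. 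Second, even granting the asserted conclusion, the claimed error $O\big(|t|^3/\sigma_n+|t|^3/(\sigma_n\delta_n^2)\big)$ is not $o(1)$ under the stated hypothesis: $\sigma_n\delta_n\to\infty$ does not imply $\sigma_n\delta_n^2\to\infty$ (e.g.\ $\sigma_n=n^{1/2}$, $\delta_n=n^{-1/3}$). As written, your argument would prove the CLT only under the stronger assumption $\sigma_n\delta_n^2\to\infty$, so the accounting must recover a factor of $\delta_n$ — precisely the missing step.
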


A central limit theorem for the magnetization in fact implies a local central limit theorem, following the approach of Dobrushin and Tirozzi~\cite{dobrushin1977central} (for spin models on $\mathbb Z^d$) and the results of~\cite{jain2021approximate} for the hard-core model.  Let $\bX$ denote the number of $+1$ spins in a sample from the Ising model.  

\begin{prop}
\label{propLocalCLTalg}
Fix $\lam >1$ and $\beta \ge 0$.  Then for any graph $G \in \cG_{\Delta}$ on $n$ vertices and any non-negative integer $\ell$,
\[ \mu_{G, \beta, \lam} \left(  \bX = \ell  \right)    = \frac{1}{\sqrt{2\pi \var(\bX)} } \exp \left[  - \frac{(\ell-  \langle \bX \rangle_{G,\beta, \lam}  )^2  }{2\var(\bX)  }  \right ]   + o(n^{-1/2}) \,,\]
where $\var(\bX) = \langle \bX^2   \rangle_{G,\beta,\lam} -  \langle \bX   \rangle_{G,\beta,\lam}^2$, and where the implied constant in the error term depend only on $\Delta, \beta, \lam$. The same holds for $\beta<\beta_c$, $\lam \ge 1$,  and any $G \in \cG_{\Delta}$.  

Moreover, under the conditions above $\var(\bX) = \Theta(n)$ where again the implied constants depend only on $\Delta, \beta, \lam$.
\end{prop}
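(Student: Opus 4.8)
The plan is to argue through the characteristic function and the Fourier inversion formula, in the spirit of Dobrushin--Tirozzi and of~\cite{jain2021approximate}. Write $\mu = \langle\bX\rangle_{G,\beta,\lam}$, $\sigma^2 = \var(\bX)$, and $\phi(t) = \langle e^{it\bX}\rangle_{G,\beta,\lam}$, so that $\mu_{G,\beta,\lam}(\bX = \ell) = \frac1{2\pi}\int_{-\pi}^{\pi}\phi(t)e^{-i\ell t}\,dt$. Since $M(\sigma) = 2X(\sigma) - n$, the probability generating function of $\bX$ is $f_G(z) = z^{n/2}Z_G(\beta,\lam\sqrt z)/Z_G(\beta,\lam)$ (a genuine polynomial with the positive coefficients $\mu_{G,\beta,\lam}(\bX=\ell)$), and $\phi(t) = f_G(e^{it})$. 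Any zero $z_0$ of $f_G$ forces $\lam\sqrt{z_0}$ to be a zero of $Z_G(\beta,\cdot)$, so Theorem~\ref{thmLeeYang} gives $|z_0| = \lam^{-2}$, and for $\beta < \beta_c(\Delta)$ Theorem~\ref{thmzerofree} additionally gives $|\arg z_0| \ge 2\theta(\beta)$. In either parameter regime $f_G$ is therefore zero-free on a disc $\{|z-1| < \rho_0\}$ with $\rho_0 = \rho_0(\Delta,\beta,\lam) > 0$, so $g(s) := \log\langle e^{s\bX}\rangle_{G,\beta,\lam} = \log f_G(e^s)$ is analytic on a fixed disc $\{|s| < r_0\}$, $r_0 = r_0(\Delta,\beta,\lam)>0$. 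I would then split the inversion integral at $|t| = n^{-2/5}$: the inner range should produce the Gaussian density via a Taylor expansion of the cumulant generating function $g$, and the outer range should be shown to be $o(n^{-1/2})$ via a decoupling bound on $|\phi|$.

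For the inner range, from $0 \le \bX \le n$ one gets $\Re g(s) = \log|\langle e^{s\bX}\rangle_{G,\beta,\lam}| \le |\Re s|\,n$, so since $g(0)=0$ the Borel--Carath\'eodory theorem gives $\max_{|s|\le r_0/4}|g| = O(n)$, and Cauchy's estimates then give $g(s) = \mu s + \tfrac{\sigma^2}{2}s^2 + O(n|s|^3)$ for $|s| \le \delta_0$ with $\delta_0 = \delta_0(\Delta,\beta,\lam)>0$; taking $s=it$ gives $\phi(t) = \exp(i\mu t - \tfrac{\sigma^2}{2}t^2 + O(n|t|^3))$ for $|t|\le\delta_0$, and the same Cauchy bound on $g''(0)=\var(\bX)$ also gives $\sigma^2 = O(n)$. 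For the matching lower bound $\sigma^2 = \Omega(n)$ I would use the FKG inequality: with $\bX = \sum_v Y_v$ and $Y_v = \indicator{\sigma_v=+1}$ increasing, $\var(\bX) \ge \sum_v \var(Y_v)$, and each $\var(Y_v) = \Pr(\sigma_v=+1)\Pr(\sigma_v=-1)$ is bounded below by a positive constant since the conditional law of $\sigma_v$ given its $\le\Delta$ neighbours puts both probabilities in $[(1+e^{\beta\Delta}\lam^{-2})^{-1},(1+e^{-\beta\Delta}\lam^{-2})^{-1}]\subset(0,1)$ (and $\Pr(\sigma_v=+1)\ge\tfrac12$ by monotonicity in $\lam$ together with the spin-flip symmetry at $\lam=1$). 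Hence $\sigma^2 = \Theta(n)$ with implied constants depending only on $\Delta,\beta,\lam$. Substituting the expansion of $\phi$ and comparing with $\frac1{2\pi}\int_{\R}e^{i(\mu-\ell)t-\sigma^2t^2/2}\,dt = (2\pi\sigma^2)^{-1/2}\exp(-(\ell-\mu)^2/2\sigma^2)$, the cubic term contributes $O(n)\int_{\R}|t|^3e^{-\sigma^2t^2/2}\,dt = O(n/\sigma^4) = o(n^{-1/2})$ and the Gaussian tail past $|t|=n^{-2/5}$ is negligible since $\sigma^2 n^{-4/5}\to\infty$, so the inner range gives exactly the asserted Gaussian plus $o(n^{-1/2})$.

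For the outer range $n^{-2/5}\le|t|\le\pi$ I would bound $|\phi(t)|$ by conditional decoupling. Fix an independent set $S$ of $G$ with $|S|\ge n/(\Delta+1)$; conditioned on the spins outside $S$ the spins $(\sigma_v)_{v\in S}$ are mutually independent, $|\langle e^{i(t/2)\sigma_v}\mid\sigma_{V\setminus S}\rangle| = \sqrt{1-4p_v(1-p_v)\sin^2(t/2)}$ with $p_v = \Pr(\sigma_v=+1\mid\sigma_{V\setminus S})$, and $|e^{i(t/2)\sum_{v\notin S}\sigma_v}|=1$, so with $M(\sigma)=2\bX-n$,
\[ |\phi(t)| = \bigl|\langle e^{i(t/2)M(\sigma)}\rangle_{G,\beta,\lam}\bigr| \le \Bigl\langle\prod_{v\in S}\sqrt{1-4p_v(1-p_v)\sin^2(t/2)}\Bigr\rangle \le \bigl(1-4q_0\sin^2(t/2)\bigr)^{n/(2(\Delta+1))}, \]
where $q_0>0$ is a lower bound for $p(1-p)$ over the range $[(1+e^{\beta\Delta}\lam^{-2})^{-1},(1+e^{-\beta\Delta}\lam^{-2})^{-1}]$ of the $p_v$. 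Using $\sin^2(t/2)\ge t^2/\pi^2$ on $[-\pi,\pi]$ this is at most $e^{-c_1 n t^2}\le e^{-c_1 n^{1/5}}$ on the outer range, so $\int_{n^{-2/5}\le|t|\le\pi}|\phi(t)|\,dt = o(n^{-1/2})$; combining the two ranges yields the proposition. Note that the substitution $z=e^{it}\mapsto\lam e^{it/2}$ confines the relevant argument to $[-\pi/2,\pi/2]$, so unlike in the general lattice local CLT there is no ``antipodal'' range near $|t|=\pi$ to treat separately.

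The main obstacle should be this outer-range estimate on $|\phi(t)|$: the inner range and the variance bound are routine given the zero-freeness of $f_G$, but the outer bound is where the hypothesis that $\lam>1$ is bounded away from $1$ (or $\beta<\beta_c(\Delta)$) genuinely matters --- at $\lam=1$ and $\beta>\beta_c(\Delta)$ the magnetization is not concentrated and the conclusion is false --- and the decoupling over an independent set works precisely because a field $\lam\ne1$ forces every single-spin conditional marginal to be non-degenerate with a quantitative gap. A secondary task is the bookkeeping needed to see that $r_0,\delta_0,q_0,c_1$ and the Borel--Carath\'eodory and Cauchy constants depend only on $\Delta,\beta,\lam$, and that the two parameter regimes (any $\beta$ with $\lam>1$; $\beta<\beta_c(\Delta)$ with $\lam\ge1$) are handled uniformly, their only difference being whether the zero-free region for $f_G$ is supplied by Lee--Yang or by Peters--Regts.
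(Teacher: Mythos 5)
Your proposal is correct, and for the heart of the argument it takes a genuinely different (and more quantitative) route than the paper. Both proofs share the same skeleton: Fourier inversion, a split into a central and a tail range of $t$, an independent-set decoupling bound $|\phi(t)|\le e^{-cnt^2}$ for the tail, and a variance bound $\var(\bX)=\Theta(n)$. The difference is in the central range: the paper first proves a (qualitative) CLT for $\bX$ by invoking the Michelen--Sahasrabudhe theorem, whose hypotheses are supplied by the Lee--Yang/Peters--Regts zero-freeness, and then upgrades it to a local CLT by dominated convergence on $\int_{-K}^{K}|\phi_{\overline\bX}(t)-e^{-t^2/2}|\,dt$ for a large fixed $K$; you instead use the same zero-freeness to get analyticity of the cumulant generating function $g(s)=\log\langle e^{s\bX}\rangle$ on a disc of radius depending only on $\Delta,\beta,\lam$, and then Borel--Carath\'eodory plus Cauchy estimates give $g(s)=\mu s+\tfrac{\sigma^2}{2}s^2+O(n|s|^3)$, which handles the central range directly (this is closer in spirit to the Jain--Perkins--Sah--Sawhney approach). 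Your route is self-contained, avoids the CLT black box, and in fact yields an explicit polynomial error rate rather than just $o(n^{-1/2})$. The variance bounds also differ mildly but are both valid: the paper gets the lower bound from the law of total variance restricted to an independent set and the upper bound from the Lee--Yang product formula for $Z_G$, whereas you use FKG (nonnegativity of covariances of the increasing indicators) plus bounded single-site conditional marginals for the lower bound, and the Cauchy estimate on $g''(0)$ for the upper bound. One small correction to your closing commentary: the hypothesis $\lam>1$ (or $\beta<\beta_c(\Delta)$) is \emph{not} what makes the outer-range decoupling work --- the bound $|\phi(t)|\le(1-4q_0\sin^2(t/2))^{n/(2(\Delta+1))}$ holds even at $\lam=1$, $\beta>\beta_c$, since the conditional marginals are always confined to $[(1+e^{\beta\Delta}\lam^{-2})^{-1},(1+e^{-\beta\Delta}\lam^{-2})^{-1}]$. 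Where the hypothesis genuinely enters is the inner range: it is exactly what keeps the zeros of the generating function away from $1$, so that the quadratic cumulant expansion (equivalently, unimodal Gaussian concentration at scale $\sqrt n$) is valid; at $\lam=1$, $\beta>\beta_c$ the Lee--Yang zeros can pinch the point $1$ and the conclusion fails through that step, not through the tail estimate.
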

We prove Proposition~\ref{propLocalCLTalg} in Appendix~\ref{secAppendixCLT};  the proof of the local central limit theorem is  analogous to that of~\cite[Theorem 1.5]{jain2021approximate} and the proof of the variance bound is  analogous to that of~\cite[Lemma 9]{davies2021approximatelyICALP} and~\cite[Lemma 3.2]{jain2021approximate}. 

For the hardness results, we reduce an \textup{NP}-hard cut problem to the problem of approximating the Ising model at fixed magnetization. 
The $\gamma$-\csproblem{Min-Exact-Balanced-Cut} ($\gamma$-\csproblem{MEBC}) problem is the problem of finding the minimum of  $|E(S, S^c)|$ over all $S\subset V(G)$, $|S| = \lfloor \alpha n \rfloor$, where $n= |V(G)|$.
For stronger inapproximability in Theorem~\ref{thmFixedSupercriticalBeta}\ref{hardFixedSuper}, we apply an inapproximability result due to Bui and Jones~\cite{bui1992finding}, though this is not essential to our method: we can reduce from exactly solving $\gamma$-\csproblem{MEBC} instead.

\begin{theorem}[Bui--Jones~\cite{bui1992finding}]\label{thmMEBC}
Let $\gamma$ be a rational number in $(0,1)$ and let $\eps>0$.  Then $\gamma$-\csproblem{MEBC} is \textup{NP}-hard to approximate within an additive error $n^{2-\eps}$ on $n$-vertex graphs.
\end{theorem}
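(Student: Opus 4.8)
The plan is to prove Theorem~\ref{thmMEBC} by a gap-amplification reduction from \emph{exact} minimum bisection, which is NP-hard on graphs of even order by the classical result of Garey, Johnson, and Stockmeyer. The engine of the reduction is a \emph{blow-up}: given an $m$-vertex graph $H$, replace each vertex $v$ by an independent set $C_v$ of $t$ fresh vertices, and replace each edge $uv\in E(H)$ by a complete bipartite graph between $C_u$ and $C_v$; call the result $H^{(t)}$, which has $n=mt$ vertices. The first step is to show that the minimum $\gamma$-balanced cut of $H^{(t)}$ equals $t^2$ times the corresponding balanced-cut optimum of $H$. For any $S\subseteq V(H^{(t)})$ put $x_v=|S\cap C_v|/t\in[0,1]$; since $H^{(t)}$ has no edges inside a cluster and a complete bipartite block between $C_u,C_v$ contributes $t^2(x_u+x_v-2x_ux_v)$ crossing edges, the cut value is $t^2F(x)$ with $F(x)=\sum_{uv\in E(H)}(x_u+x_v-2x_ux_v)$. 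The point is that $F$ is \emph{multilinear} (affine in each coordinate), hence over the slice $\{x\in[0,1]^{V(H)}:\sum_v x_v=p\}$ it is minimized at a vertex; and when $p$ is an integer this slice is a hypersimplex, whose vertices are exactly the $0/1$ indicator vectors of $p$-subsets of $V(H)$, on which $F$ is the ordinary cut function. Thus, provided $\gamma m\in\mathbb Z$ (so $\lfloor\gamma n\rfloor=(\gamma m)\cdot t$), we get $\gamma\text{-}\csproblem{MEBC}(H^{(t)})=t^2\cdot\gamma\text{-}\csproblem{MEBC}(H)$ exactly.

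Granting this identity, amplification is immediate. Fix $\eps>0$ and choose $t=t(m)$, a polynomial in $m$, large enough that $n^{2-\eps}=(mt)^{2-\eps}<t^2/2$; since this asks only for $t^\eps>2m^{2-\eps}$, the choice $t=\Theta(m^{2/\eps-1})$ works and the map $H\mapsto H^{(t)}$ is polynomial-time. The attainable values of $\gamma\text{-}\csproblem{MEBC}(H^{(t)})$ are multiples of $t^2$ and consecutive ones differ by $t^2>2n^{2-\eps}$, so any polynomial-time algorithm returning a value within additive error $n^{2-\eps}$ of $\gamma\text{-}\csproblem{MEBC}(H^{(t)})$, after rounding to the nearest multiple of $t^2$, recovers $\gamma\text{-}\csproblem{MEBC}(H)$ exactly. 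For $\gamma=1/2$ this is the bisection width of $H$; hence such an approximation algorithm would solve an NP-hard problem in polynomial time, proving the stated additive inapproximability.

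For a general rational $\gamma\in(0,1)$ it remains to show that solving $\gamma\text{-}\csproblem{MEBC}$ \emph{exactly} is NP-hard, and on graphs whose order is divisible by the denominator of $\gamma$ (so the hypersimplex step of the blow-up applies). I would derive this by a padding reduction from minimum bisection: to an $m$-vertex graph $H$ adjoin a disjoint collection of cliques whose sizes are tuned so that, short of paying the (large) cost of splitting a clique, the only way to exhibit a $\lfloor\gamma|V|\rfloor$-balanced partition of the padded graph is to put each clique wholly on one side while splitting the copy of $H$ into two equal halves. Every such configuration costs $b(H)$, so the $\gamma$-balanced optimum of the padded graph returns $b(H)$. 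Concretely, one chooses the clique sizes so that every alternative assignment of the cliques to the two sides forces an infeasible number of vertices of $H$ onto the $S$-side: for $\gamma<1/2$ two suitably large disjoint cliques suffice, and for $\gamma>1/2$ one first blows $H$ up so that all relevant quantities sit on comparable scales before padding.

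This padding step is where I expect the real difficulty to lie. The threat is that the optimizer splits a padding clique---paying only linearly in its size---so as to shift the global balance and thereby realize a much cheaper \emph{unbalanced} cut of the core, the available savings being controlled by how fast the minimum cut of $H$ drops under unbalancing (which the multilinearity lemma quantifies but does not by itself bound away from zero). Ruling this out requires a careful quantitative comparison between clique-splitting cost and these savings, and is the reason the blow-up and the padding may have to be interleaved so that cut values scale quadratically in a parameter while the necessary padding scales only linearly in it. Everything else---the amplification via the multilinearity/hypersimplex-integrality argument, and the reduction from ordinary minimum bisection in the case $\gamma=1/2$---is short and robust; and since minimum bisection is (strongly) NP-hard, blowing up by a polynomially large factor stays within polynomial time.
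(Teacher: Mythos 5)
Note first that the paper does not prove this statement at all: it is quoted as a black-box result of Bui and Jones, so your proposal is measured on its own merits rather than against an internal argument.

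The central step of your proposal is false. You claim that because $F(x)=\sum_{uv\in E(H)}(x_u+x_v-2x_ux_v)$ is multilinear, its minimum over the slice $\{x\in[0,1]^{V(H)}:\sum_v x_v=p\}$ is attained at a $0/1$ vertex, hence $\gamma$-\csproblem{MEBC}$(H^{(t)})=t^2\cdot\gamma$-\csproblem{MEBC}$(H)$. Multilinearity gives vertex-optimality over a \emph{box} (optimize one coordinate at a time), but not over a slice: along the only directions that preserve the slice, namely $e_u-e_v$, each term $-2x_ux_v$ with $uv\in E(H)$ becomes $+2t^2$ plus linear terms, so $F$ is a \emph{convex} quadratic along the exchange direction and its minimum can be strictly fractional. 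Concretely, take $H=K_2$ and $t=2$: then $H^{(t)}=K_{2,2}$, whose minimum bisection is $2$ (put one vertex of each cluster on each side), not $t^2\cdot 1=4$; in general the blow-up of a single edge has bisection width $t^2/2$, corresponding to the interior minimizer $x_u=x_v=\tfrac12$ of $x_u+x_v-2x_ux_v$ on the line $x_u+x_v=1$. So the identity $\gamma$-\csproblem{MEBC}$(H^{(t)})=t^2\cdot\gamma$-\csproblem{MEBC}$(H)$ fails, the attainable values of the blown-up problem are not spaced $t^2$ apart, and the entire gap-amplification argument collapses: the blow-up optimum reflects a fractional relaxation of the balanced-cut problem, not the integral problem you want to amplify.

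This is precisely why constructions for additive inapproximability of bisection (including the one in Bui--Jones) must penalize splitting a cluster, e.g.\ by replacing vertices with cliques rather than independent sets, and must then carry out a genuinely quantitative uncrossing/rounding argument showing that an approximately optimal cut of the blown-up graph can be converted into one that respects clusters without losing more than the allowed additive error; the clique-splitting cost and the rebalancing gain are of the same order in $t$, so this comparison is delicate and cannot be waved through. In other words, the "real difficulty" is not only in your padding step for general rational $\gamma$ (which you leave as a sketch, by your own admission), but already in the amplification step you describe as short and robust. As written, the proposal does not establish the theorem.
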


A key ingredient in the algorithmic results are the efficient approximate counting and sampling algorithms for the ferromagnetic Ising model provided by Jerrum and Sincalir and Randall and Wilson.
\begin{theorem}[Jerrum--Sinclair~\cite{JS93}, Randall-Wilson~\cite{randall1999sampling}]
\label{thmIsingFPRAS}
    For all inverse temperatures $\beta$ and all activities $\lam$, there is an FPRAS and efficient sampling scheme for the Ising model for all graphs $G$.
\end{theorem}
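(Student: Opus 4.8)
The plan is to follow the two classical ingredients behind the cited result: an MCMC algorithm for approximating $Z_G(\beta,\lam)$ in the style of Jerrum--Sinclair, and a reduction from approximate sampling to approximate counting in the style of Randall--Wilson. By the symmetry $\lam \leftrightarrow 1/\lam$ of the Ising model under flipping all spins, it suffices to treat $\lam \ge 1$. First I would pass to the high-temperature, ``subgraphs-world'' expansion. Writing $t = \tanh(\beta/2)$ and $\lam = e^h$, expand each edge factor as $e^{\frac{\beta}{2}\sigma_u\sigma_v} = \cosh(\tfrac{\beta}{2})(1 + t\,\sigma_u\sigma_v)$, multiply out over $E(G)$, and sum over $\sigma \in \Sigma_G$; the vertex sums leave a nonzero contribution only from edge-subsets whose incident-degree parities are compatible with the field factor, and one obtains
\[ Z_G(\beta,\lam) = C \sum_{X \subseteq E(G)} t^{|X|}\, \phi(h)^{|\mathrm{odd}(X)|} \]
for an explicit constant $C>0$ and explicit $\phi(h)\ge 0$, where $\mathrm{odd}(X)$ is the set of vertices of odd degree in $(V(G),X)$. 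Since $\beta\ge0$ and $\lam\ge1$, all weights are nonnegative, so approximating $Z_G(\beta,\lam)$ reduces to estimating the partition function of a combinatorial model on edge-subsets whose weight factors over edges and over odd-degree vertices.

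Next I would run a Markov chain on the state space of such edge-subsets --- or, as in Jerrum--Sinclair, on pairs $(X,\{u,v\})$ that allow one or two ``defect'' vertices so that each transition adds or removes a single edge --- equipped with the natural Metropolis acceptance probabilities, and prove it mixes in time polynomial in $|V(G)|$. The key step is the canonical-paths (multicommodity flow) method: between any two states, route one unit of weighted flow along a canonical route obtained from the symmetric-difference decomposition of the two edge-subsets into simple paths and cycles, unwound one component at a time in a fixed order. The congestion through any single transition is then bounded by an encoding argument: each endpoint-pair routed through a given transition is charged to a bounded amount of auxiliary data (the ``not-yet-processed'' part of the symmetric difference together with the current defect locations), and the ferromagnetic hypothesis together with consistency of $\lam$ is precisely what makes the resulting weight-ratio inequalities hold. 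Rapid mixing then yields, by standard amplification, approximate samples from the edge-subset model in polynomial time. I expect the congestion bound to be the main obstacle: making the canonical paths and the encoding clean enough that the congestion is polynomially bounded \emph{uniformly} over all graphs $G$ and all ferromagnetic parameters is the technical heart of the argument; the rest is comparatively routine.

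Finally, to recover $Z_G(\beta,\lam)$ itself I would use self-reducibility, writing a telescoping product
\[ Z_G(\beta,\lam) = Z_{G_0}(\beta,\lam)\prod_{i=1}^{m} \frac{Z_{G_i}(\beta,\lam)}{Z_{G_{i-1}}(\beta,\lam)}, \]
where $G_0 \subseteq \cdots \subseteq G_m = G$ adds edges one at a time (or, alternatively, anneal $\beta$ or $\lam$ through a polynomially fine grid). Each ratio is the expectation of a bounded, bounded-variance observable of the edge-subset model on $G_i$, so it can be estimated within a factor $1\pm\eps/m$ with high probability using polynomially many samples from the chain above; multiplying the estimates gives an FPRAS for $Z_G(\beta,\lam)$, and hence, by rescaling, for the Ising partition function as a generating function in $\lam$. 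For the sampling scheme I would invoke the Randall--Wilson reduction: pass to the random-cluster (Fortuin--Kasteleyn) representation, whose partition function equals $Z_G(\beta,\lam)$ up to an explicit factor, approximate its edge marginals via the counting algorithm applied to minors of $G$, use self-reducibility to build an edge-set sample whose law is within $\eps$ of the random-cluster distribution, and then colour the resulting clusters according to the field to obtain $\sigma$ with $\|\mu_{G,\beta,\lam} - \hat\mu\|_{TV} < \eps$, all in time polynomial in $|V(G)|$ and $\log(1/\eps)$.
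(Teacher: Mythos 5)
The paper does not prove this statement at all: it is imported verbatim from Jerrum--Sinclair~\cite{JS93} and Randall--Wilson~\cite{randall1999sampling}, and your outline (high-temperature/subgraphs-world expansion, single-edge-update chain analyzed by canonical paths with an encoding-based congestion bound, telescoping self-reducibility for the FPRAS, random-cluster-based conversion for sampling) is a faithful reconstruction of exactly those cited arguments, so there is nothing to compare against internally.

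One caveat about your last step: you propose to build the random-cluster sample by estimating edge marginals with the counting algorithm and using self-reducibility, which is a JVV-style counting-to-sampling reduction and yields running time polynomial in $1/\eps$, whereas this paper's definition of an ``efficient sampling scheme'' (see its footnote) demands polynomial dependence on $\log(1/\eps)$. The way Randall--Wilson actually get the stronger guarantee is to take a \emph{single} sample from the rapidly mixing subgraphs-world chain and convert it exactly into an Ising configuration using extra coin flips (via the coupling of even subgraphs plus independent percolation with the random-cluster model, then colouring clusters), so the only total-variation error is the chain's mixing error, which decays geometrically in the number of steps. Replacing your marginal-estimation step by that direct conversion closes the gap; otherwise you have only proved the weaker, $\mathrm{poly}(1/\eps)$ version of the sampling claim.
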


\section{Hardness}
\label{secHardness}

\subsection{The reduction and its properties}\label{secHardnessReduction}

Given $\Delta \ge 3$, $\beta> \beta_c(\Delta)$ and $\eta \in [0, \eta_c)$, 
our goal is to reduce $\gamma$-\csproblem{Min-Exact-Balanced-Cut}  to approximating a fixed-magnetization Ising partition function, for some rational number $\gamma\in((1+\eta/\eta_c)/2, 1)$.  (By symmetry we need only consider $\eta \ge 0$). 

For the reduction we require a gadget $G=G(\Delta,n,\theta,\psi)$ where $\theta,\psi\in(0,1/8)$ are constants that can be determined later in terms of $\Delta,\beta$. 
The gadget is identical to the constructions in~\cite{Sly10,GSVY16}, which is a balanced bipartite graph on $n_G = (2+o(1))n$ vertices. 
The majority of the vertices have degree $\Delta$, and $m=O(n^{\theta})$ vertices on each side of $G$ are designated \emph{terminal vertices} of degree $\Delta-1$.
We detail the construction of the gadget and state its properties after showing how it is used in the reduction.

Let $H$ be a graph on $h = \lfloor n^{\theta/4}/(\Delta-1)\rfloor$ vertices, which is the input for $\gamma$-\csproblem{MEBC}.
Given $G$ as above and an integer $s$, we construct a graph $H^G_s$ of maximum degree $\Delta$ on $ N:= h n_G + s$ vertices as follows:
\begin{itemize}
\item We include a copy $G^x$ of $G$ for each vertex $ x \in V(H)$.
\item We include $s$ isolated vertices.  
\item For each edge $xy \in E(H)$, we include a matching of size $k=\lfloor n^{3\theta/4}\rfloor$ between the left terminals of $G^x$ and left terminals of $G^y$ and a matching of size $k$ between the right terminals of $G^x$ and right terminals of $G^y$.  We do this in such a way that each terminal is used at most once (which is possible since $kh\le m$).
\end{itemize}

For reference, our parameter choices are listed here.  The parameters $\Delta$, $\beta$ are fixed and we can compute $\eta_c$ from them (to arbitrary precision).  The parameter $\eta$ is fixed and satisfies the conditions of the theorem.  From these parameters we compute an arbitrary rational number $\gamma$ such that 
\[ \frac{1+\eta/\eta_c}{2} < \gamma < 1, \]
which is possible because $\eta\in[0,\eta_c)$.
Suitable choices of $\theta,\psi\in(0,1/8)$ can be made in terms of $\Delta,\beta$ (see Lemma~\ref{lemGadgetZ}).
We are then given an instance $H$ of $\gamma$-\csproblem{Min-Exact-Balanced-Cut} on $h$ vertices with $h$ sufficiently large, and we choose an $n$ large enough that $h \le n^{\theta/4}/(\Delta-1)$.
Let  $h_+ = \lfloor \gamma h \rfloor$ and  $h_- = \lceil (1-\gamma) h \rceil$ so that the   $\gamma$-\csproblem{Min-Exact-Balanced-Cut} problem is to find the minimum of $|E(S,S^c)|$ over $S \subset V(H)$ with $|S| = h_+$.  We insist that $h$ is large enough that $\min \{ h_+, h_-\} \ge 1$.  Now let
\begin{itemize}
\item $m=(\Delta-1)^{\lfloor\theta\log_{\Delta-1}n\rfloor}= o(n^{1/8})$,
\item $m'=(\Delta-1)^{\lfloor\theta\log_{\Delta-1}n\rfloor + \lfloor\psi\log_{\Delta-1}n\rfloor} = o(n^{1/4})$,
\item $n_G = 2(n+ m' + m((\Delta-1)^{\lfloor\psi\log_{\Delta-1}n\rfloor}-1)/(\Delta-2)) = (2+o(1))n$
\item $k = \lfloor n^{3\theta/4}\rfloor$,
\item $s$ be a non-negative integer such that
\begin{equation}\label{eqScondition}
\big | 2n(h_+ - h_-) \eta_c   -  \eta [h n_G + s]     \big|  \le \sqrt{nh} \,,
\end{equation}
and $s = \Theta(nh)$,
\item $N = h n_G + s $,
\item $M^* = h_+ - h_-$,
\item $\del>0$ be small enough as a function of $\gamma$ and $\eta_c$,
\item $\ell = \lfloor   N (\eta+1)/2 \rfloor$.
\end{itemize}
The parameters $m,m',n_G,k$ relate to the gadget construction that we detail below. 
There exists an $s$ satisfying~\eqref{eqScondition} with $s=\Theta(nh)$ which  can be found in time polynomial in $h$ because our parameter choices mean that (as $h\to\infty$)
\[ 2n(h_+-h_-)\eta_c = (2+o(1))\cdot (2\gamma-1)\eta_c \cdot nh, \]
and
\[ \eta [hn_G + s] = (2+o(1))\cdot \eta \cdot nh + O(s). \] 
Since $(2\gamma-1)\eta_c > \eta$, for some non-negative integer $s= \Theta(nh)$ the latter can be made within an additive term $O(1)$ of the former (and hence  within $\sqrt{nh}$).  
Finally, $N$ is the number of vertices in the graph $H^G_s$ which we construct in the reduction, $M^*$ is the magnetization of the cuts considered for the $\gamma$-\csproblem{MEBC} problem on $H$, and $\ell$ is such that on $H^G_s$ \csproblem{Fixed-Ising}$(G,\beta,\eta)$ asks for configurations $\sigma$ with $X(\sigma) = \ell$ (and the desired fixed magnetization is thus $2 \ell -N$).

Throughout this section there are many absolute constants (depending only on $\Delta, \beta, \eta$) used and defined.  For ease of reading we will make ample use of $O(\cdot)$ and $\Omega(\cdot)$ notation as well as reusing constants $C, c$ etc.

The main result of this section is the following.

\begin{theorem}\label{thmHardness}
Given $\eps\in(0,1)$ there exists $\zeta > 0$ such that there is a randomized, polynomial-time algorithm to construct a  graph $G$ as above so that with probability at least $2/3$ the following holds: given an $e^{N^\zeta}$-relative approximation to $\Zfixed_{H^G_s} (\beta, 2\ell -N) $ one can compute, in time polynomial in $h$, an additive $h^{2-\eps}$ approximation to the $\gamma$-\csproblem{Min-Exact-Balanced-Cut} of $H$.
\end{theorem}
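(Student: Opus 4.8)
The plan is to show that, for a typical realization of the random gadget $G$, the quantity $\Zfixed_{H^G_s}(\beta,2\ell-N)$ is dominated by configurations with a clean \emph{phase structure}: on each copy $G^x$ the spins are overwhelmingly aligned to some $\tau_x\in\{\pm1\}$, and the magnetization constraint forces exactly $h_+$ of the copies into the $+$ phase. Within this regime the weight factorizes, up to a multiplicative $1+o(1)$, over the copies of $G$ and over the crossing matchings, and a crossing matching between $G^x$ and $G^y$ contributes a factor that is large when $\tau_x=\tau_y$ and smaller by a factor $r=r(\Delta,\beta)\in(0,1)$ when $\tau_x\ne\tau_y$ --- an effective ferromagnetic interaction on the phase variables, coming from the fact that inside a phase the terminal spins are biased toward that phase. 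Summing over $\tau$ then gives
\[
\Zfixed_{H^G_s}(\beta,2\ell-N)=\mathcal B\sum_{S\colon |S|=h_+} r^{|E(S,S^c)|}(1+o(1)),
\]
with the sum over $S\subseteq V(H)$ and $\mathcal B=\mathcal B(G,H)$ a base that the algorithm can compute to sufficient precision (with probability $1-o(1)$ over $G$). Since $r$ decays exponentially in $k=\lfloor n^{3\theta/4}\rfloor$ and there are at most $2^h$ terms, the sum equals $r^{C^*}(N_{C^*}+o(1))$, where $C^*$ is the $\gamma$-\csproblem{MEBC} value of $H$ and $N_{C^*}\le 2^h$ its number of optimizers; hence $\log\Zfixed_{H^G_s}(\beta,2\ell-N)$ pins down $C^*$ to within $O(h/k)+o(1)=o(1)$, far finer than the $h^{2-\eps}$ accuracy required. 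Given the $e^{N^\zeta}$-relative approximation, i.e.\ $\log\Zfixed$ up to additive $N^\zeta$, the algorithm subtracts $\log\mathcal B$ and solves for $C^*$; this works because $N^\zeta=o(k)$ for $\zeta$ small enough in terms of $\theta$ (itself chosen in terms of $\Delta,\beta$, cf.\ Lemma~\ref{lemGadgetZ}), so the approximation error, the $o(h)$ contribution of $\log N_{C^*}$, and the $1+o(1)$ factors all stay well below the $\Theta(k)$ gap per unit of cut.

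Three ingredients are needed, all about the gadget. \emph{(i) Phase coexistence and concentration for $G$:} with probability $1-o(1)$ over the construction, the Ising model on $G$ has a sharply bimodal magnetization concentrated near $\pm\eta_c n_G$, the partition functions of $G$ restricted to each phase and the terminal marginals inside a phase concentrate around computable values, and conditioning on a phase and on \emph{any} spin assignment to the $O(n^\theta)$ terminals keeps the magnetization concentrated with variance $\Theta(n)$ and bounded centered moments; these are the properties of the \cite{Sly10,GSVY16}-style gadget established via the small subgraph conditioning method, which I would cite (Lemma~\ref{lemGadgetZ} and its companions in this section). \emph{(ii) A large-deviation estimate} that configurations of $H^G_s$ which are not $\del$-clean for any phase vector $\tau$ contribute only a $2^{-\Omega(n)}$ fraction of the total weight; this follows by composing the single-gadget phase-coexistence bounds along the crossing edges, of which there are only $O(kh)=o(n^{1/8})$ per copy. \emph{(iii) A local central limit theorem} for $X(\sigma)$ when $\sigma$ is drawn from the Ising model on $H^G_s$ conditioned to be $\del$-clean with a given phase vector $\tau$: I claim $\Pr[X(\sigma)=\ell\mid\tau]=(1+o(1))/\sqrt{2\pi\var_\tau}=\Theta(N^{-1/2})$ when $\tau$ has exactly $h_+$ plus-phase copies, and it is $e^{-n^{\Omega(1)}}$ for all other $\tau$ (which therefore contribute nothing in aggregate, there being only $2^h$ of them). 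Granting this, among the $\binom{h}{h_+}$ admissible $\tau$ the probability is a \emph{common} $\Theta(N^{-1/2})$ up to $1+o(1)$ --- because $\var_\tau$ is dominated by the $s=\Theta(N)$ isolated spins and the conditional mean of $X(\sigma)$ depends on $\tau$ only through its split up to an $o(\sqrt N)$ error --- so it factors out of the sum and leaves $\sum_{|S|=h_+}r^{|E(S,S^c)|}$ as above. The purpose of the isolated vertices and of the choices~\eqref{eqScondition} and $\ell=\lfloor N(\eta+1)/2\rfloor$ is precisely to place $2\ell-N$ within $o(\sqrt N)$ of the conditional mean for admissible $\tau$, which is what makes the probability polynomially (rather than exponentially) small; here we use $(2\gamma-1)\eta_c>\eta$ to guarantee a valid $s=\Theta(nh)$ exists.

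I expect the local central limit theorem of (iii) to be the main obstacle: the route through a central limit theorem upgraded by zero-freeness (as in Proposition~\ref{propLocalCLTalg}) is unavailable, since we are conditioning on an exponentially unlikely event. Instead I would argue by Fourier inversion, writing $\Pr[X(\sigma)=\ell\mid\tau]=\tfrac1{2\pi}\int_{-\pi}^{\pi}e^{-i\ell t}\varphi_\tau(t)\,dt$ with $\varphi_\tau$ the characteristic function of $X(\sigma)$ under the conditioned measure; as the isolated spins are independent of the gadgets, $|\varphi_\tau(t)|\le|\cos(t/2)|^{s}$, which is $e^{-\Theta(N)\min(t^2,1)}$, so the integral over $|t|\ge A/\sqrt N$ is negligible, and for $|t|<A/\sqrt N$ a quantitative CLT --- fed by bounded moments of the per-copy magnetization conditioned on a phase, \emph{uniformly over terminal boundary conditions} --- gives the Gaussian value. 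The delicate step is this uniform moment bound: the terminal boundary touches only $o(n^{1/8})$ of the $\Theta(n)$ vertices of $G$, so I would compare the conditioned gadget with the gadget with free terminals by a change-of-measure estimate and invoke the phase-restricted concentration from (i); making this quantitative --- in particular keeping the discrepancy $o(1)$ after it is compounded across the $\Theta(n^{3\theta/4})$ crossing edges of an $H$-edge --- is where the smallness of $\theta$ relative to the correlation-decay rate of the gadget's $+$ phase is used. The remaining bookkeeping (that $\mathcal B$ is computable to sufficient precision, that the $2/3$ success probability is inherited from the $1-o(1)$ probability that $G$ has all the properties in (i), and the various $n^{\Theta(\theta)}$ error terms) is routine given the parameter choices in Section~\ref{secHardness}.
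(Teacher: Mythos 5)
Your proposal is correct and follows essentially the same route as the paper: the same phase-vector decomposition with near-independent terminal spins yielding the effective cut-weighted sum, a large-deviation bound killing phase vectors with the wrong split, and a Fourier-inversion local CLT (smoothed by the isolated vertices, with moment bounds transferred to terminal-conditioned measures via the near-product structure) giving the $\Theta(N^{-1/2})$ hitting probability, followed by extracting the cut value from the log of the approximate partition function. The only notable difference is that you claim a common $1+o(1)$ factorization of the conditional hitting probability across admissible phase/terminal configurations, whereas the paper only establishes (and only needs) two-sided constant-factor bounds together with a $2^{-h}$ slack, which your own $O((h+N^{\zeta})/k)$ error accounting would tolerate anyway.
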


\noindent
Theorem~\ref{thmHardness} together with Theorem~\ref{thmMEBC} immediately gives Theorem~\ref{thmFixedSupercriticalBeta}\ref{hardFixedSuper}. 

\subsection{The gadget}

We use the same gadget construction as in~\cite{Sly10,GSVY16}.   
The construction is defined by the maximum degree $\Delta$, an integer $n$ and constants $\theta,\psi\in(0,1/8)$ which then determine the parameters $m,m',n_G,k$ listed above.
To construct $G=G(\Delta,n,\theta,\psi)$, let $G'=G'(\Delta,n,\theta,\psi)$ be a random bipartite graph with $n+m'$ vertices on each side obtained by choosing $\Delta$ perfect matchings between the sides uniformly at random, and from the final matching removing $m'$ of the edges. 
With high probability the matchings will be pairwise disjoint sets of edges, so $G'$ is a simple graph.
Let $U_0$ be the set of vertices of degree $\Delta$ in $G'$, and $W_0$ be the set of vertices of degree $\Delta-1$.

To form $G$ from $G'$, on each side partition the $m'$ vertices of degree $\Delta-1$ into $m$ equal-sized sets, and attach the leaves of a copy of a $(\Delta-1)$-ary tree of depth $\lfloor\psi\log_{\Delta-1}n\rfloor$ to each set. 
Then each side of $G'$ has had $m$ trees each of which contains $O(n^\psi)$ vertices added. 
The roots of these trees are now the only vertices of degree $\Delta-1$ in $G$, and there are $m$ roots that were added to each side. 
These are the \emph{terminal} vertices which allow us to connect the gadgets together. 
Let $V_0$ be the vertex set of $G$, and $R_0$ be the terminal vertices. 

Constructed in this way, we want to show that various properties of $G'$ and $G$ hold with sufficiently high probability. 
Many of these properties were verified in~\cite{Sly10,GSVY16}, but we require additional control of statistics of the number of $+1$ spins. 

Throughout this entire section we fix the inverse temperature $\beta>\beta_c$ and take $\lam=1$.  
Recall that $\bX$ denotes the number of $+1$ spins in a sample from the Ising model.      We will  condition on various \emph{phases} of the Ising model on $G$ and on $H^G_s$.   
For $G$, given an Ising configuration $\sigma \in \Sigma_G$, we say the phase is $+$ if $\sum_{v \in U_0} \sigma_v >0$ and $-$ if $\sum_{v \in U_0} \sigma_v <0$.  
If the sum is $0$ then we take the phase to be the spin of some distinguished vertex $u_1 \in U_0$ fixed in advance (arbitrarily).  
Note that neither the spins of $W_0$ nor the spins of the trees added to $G'$ in the construction of $G$ appear in the definition of the phase. 
By symmetry, the probability under $\mu_G$ of each phase is exactly $1/2$.   We denote the Ising model on $G$ conditioned on the $+$ phase and $-$ phase respectively by $\mu_{G,+}$  and $\mu_{G,-}$.   We use $\langle \cdot \rangle _{G,+}$ and $\langle \cdot \rangle _{G,-}$ to denote the corresponding conditional expectation operators.  
For a spin assignment $\tau\in \Sigma_{R_0}$ to the terminals of $G$, we include an additional subscript $\tau$ to denote conditioning on the event $ \{ \sigma_{R_0}=\tau \}$ (that is, $\sigma$ restricted to $R_0$ is equal to $\tau$).

Let $Z_{G,\alpha}$ be the contribution to the partition function $Z_{G}$ of the Ising model on the random gadget $G$ from spin assignments in which there are precisely $2\alpha n$ vertices in $U_0$ of spin $+$.
Let $Z_{G,+}$ and $Z_{G,-}$ be the contributions from the $+$ and $-$ phase respectively to $Z_{G}$. 
We have 
\[ Z_{G,+} = \sum_{1/2 < \alpha \le 1} Z_{G,\alpha} + \frac{1}{2}Z_{G,1/2}, \]
because all contributions with $\alpha>1/2$ belong to the $+$ phase, but for $\alpha=1/2$ we break the tie symmetrically so that half the contribution $Z_{G,1/2}$ goes to the $+$ phase.
Usually, it suffices to use the upper bound on $Z_{G,\pm}$ obtained by taking the entire contribution from $\alpha=1/2$ to the phase at hand.

We now state some results from~\cite{Sly10,GSV15,GSVY16} which are obtained by sophisticated versions of the first and second moment methods and an application of  the small subgraph conditioning method~\cite{robinson1994almost,janson1995random}. 
We will state the results for the $+$ phase, the $-$ phase is completely analogous.
Let $\alpha^+ = (1+\eta_c)/2$, $\alpha^- = (1-\eta_c)/2$, and 
\begin{equation}\label{eqTreeMarginals}
    q =     \frac{(\alpha^+ e^\beta +1-\alpha^+  )^{\Delta-1}   }{(\alpha^+ e^\beta +1-\alpha^+  )^{\Delta-1}  +  (\alpha^+  +(1-\alpha^+)e^\beta  )^{\Delta-1}      },
\end{equation}
which means $q$ is the probability that the root of a $(\Delta-1)$-ary tree gets spin $+1$ in the zero-field `$+$ measure' on the infinite $(\Delta-1)$-ary tree (defined analogously to the $+$ measure on the infinite $\Delta$-regular tree). 
Note that $1/2 < q < \alpha^+$.
We use $Q_S^{+}(\cdot)$ to denote the product measure on $\Sigma_S$ that assigns probability $q$ to $+1$ spins and probability $1-q$ to $-1$ spins, and vice versa for $Q_S^{-}(\cdot)$.

The following lemma collects previous results on the gadget, most notably the near independence of the terminal spins conditioned on a phase.
\begin{lemma}[{\cite[Proof of Theorem 2.1]{Sly10},~\cite[Proof of Lemma~B.3]{GSV15},~\cite[Lemma~22]{GSVY16}}]
    \label{lemGadgetZ}
   Let $G$ be the random graph described above with parameters $\theta,\psi\in(0,1/8)$. Then there exists $c>0$ so that for all $\alpha\in[1/2,1]$,
    \begin{align*}
            \frac{\EE Z_{G,\alpha}}{\EE Z_{G,+}}
            &\le \frac{C}{\sqrt{n}}  e^{-cn(\alpha+\alpha^+)^2},
    \end{align*}
    and for all $\alpha\in[0,1/2]$ 
    \begin{align*}
        \frac{\EE Z_{G,\alpha}}{\EE Z_{G,-}}
        &\le \frac{C}{\sqrt{n}}  e^{-cn(\alpha+\alpha^-)^2}.
\end{align*}
   Moreover, there exist choices of constants $\theta,\psi\in(0,1/8)$ and $C'>0$ so that for large enough $n$, with probability at least $9/10$ over the choice of $G$, all of the following hold simultaneously:
    \begin{enumerate}[(i)]
    \item\label{itmTerminalProductMeasure}
Conditioned on phase $+$, the terminal spins are approximately independent:
\[ \max_{\tau\in \Sigma_{R_0}} \left| \frac{\mu_{G,+} ( \sigma_{R_0} = \tau)}{ Q^+_{R_0}(\tau)} - 1 \right| \le n^{-2\theta}.   \]
  
       \item\label{itmBadTau}  There exists $\mathcal B \subset \Sigma_{W_0}$ so that
       \begin{itemize}
\item $\mu_{G,+}(\sigma_{W_0} \in \mathcal B) \le \exp (-n^{2 \theta}) $
\item For every $\tau_{W_0} \in \Sigma_{W_0}\setminus \mathcal B$, 
\[ \max_{\tau_{R_0} \in \Sigma_{R_0}} \left |  \frac{ \mu_{G,+}  (\sigma_{R_0} = \tau_{R_0}  | \sigma_{W_0} =\tau_{W_0})  } { Q_{R_0}^+(\tau_{R_0})  } -1     \right|  \le n^{-3 \theta} \]
\end{itemize}
        \item\label{itmZlowerBound}
        \[ Z_{G,+} > \frac{1}{C'} \EE Z_{G,+} \,. \]
    \end{enumerate}
    The same also hold with $+$ replaced by $-$. 
\end{lemma}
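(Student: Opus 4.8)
The plan is to obtain the first two displayed bounds (the ratios $\EE Z_{G,\alpha}/\EE Z_{G,\pm}$) directly from a first-moment computation, and to obtain items~\eqref{itmTerminalProductMeasure}--\eqref{itmZlowerBound} by citing and lightly repackaging the analysis of~\cite{Sly10,GSV15,GSVY16}. First I would set up the first-moment calculation: $\EE Z_{G,\alpha}$ is a sum over ways to place $2\alpha n$ plus-spins among the $U_0$-vertices, weighted by the expected edge-contribution of the $\Delta$ random perfect matchings and the (deterministic) contributions of the attached $(\Delta-1)$-ary trees and the degree-$(\Delta-1)$ vertices $W_0$. Standard manipulation (as in~\cite[Section 2]{Sly10}) writes $\log \EE Z_{G,\alpha} = n\,\Phi(\alpha) + O(\log n)$ for an explicit analytic function $\Phi$ whose stationary points on $[1/2,1]$ are exactly $\alpha=1/2$ and $\alpha=\alpha^+$, with $\alpha^+$ the unique global maximizer on that interval when $\beta>\beta_c$. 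A Taylor expansion of $\Phi$ about $\alpha^+$, together with the fact that $\EE Z_{G,+}$ is, up to a $\Theta(\sqrt n)$ Laplace factor, $\EE Z_{G,\alpha^+}$, yields the claimed bound with the Gaussian-type factor $e^{-cn(\alpha-\alpha^+)^2}$; I would then note $(\alpha-\alpha^+)^2$ can be replaced by $(\alpha+\alpha^+)^2$ up to adjusting $c$ since on $[1/2,1]$ we have $\alpha+\alpha^+\le 2 < $ const, so the bound is only being weakened — actually the cleaner reading is that $(\alpha+\alpha^+)^2 \asymp 1$ there, so the stated inequality is really the (weaker) claim that $\EE Z_{G,\alpha}/\EE Z_{G,+}$ is at most $C n^{-1/2}$ times a quantity bounded away from $0$ and $\infty$; I would simply verify the exponent is negative and of the right order. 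The $\alpha\in[0,1/2]$ case is identical with $\alpha^+$ replaced by $\alpha^-$ and $Z_{G,+}$ by $Z_{G,-}$, using the global minus-phase symmetry $\alpha\mapsto 1-\alpha$.

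Next, for the ``with probability at least $9/10$'' statements, I would invoke the second-moment plus small-subgraph-conditioning machinery of~\cite{Sly10,GSV15,GSVY16} essentially verbatim. Item~\eqref{itmZlowerBound}, that $Z_{G,+}\ge \EE Z_{G,+}/C'$ with high probability, is precisely the content of the second-moment argument combined with small subgraph conditioning~\cite{robinson1994almost,janson1995random}: one shows $\EE Z_{G,+}^2 = O(\EE Z_{G,+})^2$ after conditioning on short-cycle counts, so $Z_{G,+}$ concentrates; the choice of the constants $\theta,\psi\in(0,1/8)$ is dictated here (the tree depth parameter $\psi$ must be small enough that the attached trees do not overwhelm the bipartite core, and $\theta$ small enough that $kh\le m$ and the terminal count is $o(n^{1/8})$). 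For item~\eqref{itmTerminalProductMeasure}, the near-independence of terminal spins conditioned on a phase, I would recall that conditioned on the $+$ phase the spins on the bipartite core are, with high probability, close to the fixed-point product measure, and that the terminal vertices are the roots of disjoint depth-$\lfloor\psi\log_{\Delta-1}n\rfloor$ trees hanging off the core; a tree-recursion / spatial-mixing argument (the core ``looks like'' a boundary condition close to the plus fixed point, and the trees have $\Theta(n^\psi)$-many levels, giving exponentially-in-depth decay, hence an $n^{-2\theta}$-type error once $\psi$ is chosen appropriately relative to $\theta$) gives that the root marginal is within $n^{-2\theta}$ of $q$ for each terminal, and the joint law within $n^{-2\theta}$ of the product $Q^+_{R_0}$ — this is exactly~\cite[Lemma~22]{GSVY16}. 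Item~\eqref{itmBadTau} is the same statement made conditional on the $W_0$-spins: one exhibits the exceptional set $\mathcal B$ of $W_0$-configurations of tiny probability outside of which the same tree-recursion estimate applies with a (slightly worse, $n^{-3\theta}$) error; this appears as the ``Proof of Lemma~B.3'' step in~\cite{GSV15}. Finally a union bound over the four events (two from each phase, or the analogous count) keeps the total failure probability below $1/10$ for $n$ large.

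The main obstacle I anticipate is purely bookkeeping rather than conceptual: making sure the \emph{same} pair of constants $\theta,\psi\in(0,1/8)$ simultaneously satisfies all the constraints imposed by the different items — the combinatorial constraints from the reduction ($kh\le m$, $m=o(n^{1/8})$, $m'=o(n^{1/4})$, $n_G=(2+o(1))n$), the second-moment/small-subgraph-conditioning requirement for~\eqref{itmZlowerBound}, and the spatial-mixing requirement relating the tree depth $\psi\log_{\Delta-1}n$ to the target error exponents $2\theta$ and $3\theta$ in~\eqref{itmTerminalProductMeasure}--\eqref{itmBadTau}. Since each of these constraints was already met (individually) in~\cite{Sly10,GSV15,GSVY16} with compatible choices, I would argue that a common admissible window for $(\theta,\psi)$ exists and pick a representative pair; I would not re-derive the second-moment computation, instead citing it and only re-checking that our additional requirements (which concern $\bX$, the number of $+1$ spins, rather than just the phase) are implied by the terminal near-independence already proved. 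Everything about $\bX$-statistics that we genuinely need beyond this lemma will be developed in subsequent lemmas of Section~\ref{secHardness}, so here the task is just to record the inherited gadget properties in the exact form those later arguments consume.
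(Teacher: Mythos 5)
Your overall route is the one the paper itself takes: the paper gives no self-contained proof of this lemma, attributing the gadget properties to \cite{Sly10,GSV15,GSVY16} and only remarking that the constant-factor form of item (iii) follows from the small-subgraph-conditioning method already used there. Your plan (a first-moment/Laplace computation for the ratio bounds, citation of the prior gadget analysis for (i)--(iii), and bookkeeping to fix a common admissible pair $\theta,\psi$) is exactly that argument, and your attributions of (i) to \cite{GSVY16}, (ii) to the Lemma~B.3 analysis of \cite{GSV15}, and (iii) to the second moment plus small subgraph conditioning are consistent with the paper.

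There is, however, a concrete error in your treatment of the exponent in the displayed ratio bounds. Your Laplace expansion correctly yields $\EE Z_{G,\alpha}/\EE Z_{G,+}\le C n^{-1/2}e^{-cn(\alpha-\alpha^+)^2}$, but the claim that $(\alpha-\alpha^+)^2$ ``can be replaced by $(\alpha+\alpha^+)^2$ up to adjusting $c$, so the bound is only being weakened'' is backwards: since $(\alpha+\alpha^+)^2\ge(\alpha-\alpha^+)^2$, the substitution makes the right-hand side smaller, i.e.\ the statement \emph{stronger}, and no constant $c$ works because at $\alpha=\alpha^+$ the first quantity vanishes while the second does not. Your fallback reading is also wrong: on $[1/2,1]$ one has $e^{-cn(\alpha+\alpha^+)^2}=e^{-\Theta(n)}$, which is not ``bounded away from $0$,'' so the printed inequality taken literally would say $\EE Z_{G,\alpha}$ is exponentially smaller than $\EE Z_{G,+}$ uniformly in $\alpha$ --- false at $\alpha=\alpha^+$, where $\EE Z_{G,\alpha^+}\ge \EE Z_{G,+}/O(n)$ because $Z_{G,+}$ is a sum of $O(n)$ terms maximized near $\alpha^+$. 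The correct resolution is that the printed $(\alpha+\alpha^+)^2$ (and likewise $(\alpha+\alpha^-)^2$) is a typo for $(\alpha-\alpha^+)^2$ (resp.\ $(\alpha-\alpha^-)^2$); the paper's own later application in the proof of Lemma~\ref{lemGadgetProperties} quotes the bound as $O(n^{-1/2})e^{-\Omega(n(\alpha-\alpha^+)^2)}$. So you should simply state and prove the $(\alpha-\alpha^+)^2$ version, which your first-moment argument already delivers, rather than attempt to ``verify'' the literal statement.
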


In previous works~\cite{Sly10,GSVY16} a version of~\ref{itmZlowerBound} with $1/C'$ replaced by a function $o(1)$ as $n\to \infty$ is used (along with the fact that this weaker bound holds with high probability), but the stated version follows from the small subgraph conditioning method used therein~\cite{robinson1994almost,janson1995random}.
In order to handle the fixed-magnetization constraint in our reduction, we show that certain additional properties hold with good probability. 

\begin{lemma}\label{lemGadgetProperties}
For sufficiently large $n$, with probability at least $8/10$ over the choice of the gadget $G$ described above, the following hold simultaneously, for all choices of $\tau \in\Sigma_{R_0}$ (and for  $+$ replaced by $-$ as well):
\begin{enumerate}[(a)]
\item\label{itmFirstMoment}
\[ \langle \left |   \bX    -  2n\alpha^+ \right|  \rangle_{G,+, \tau}  = O( \sqrt{n }  )  \]
\item\label{itmVariance}
\[  \big\langle |\bX - 2n \alpha^+|^2   \big\rangle_{G,+, \tau}   = O(n)    \]
\item\label{itm3rdmoment}
\[  \big\langle |\bX - 2n \alpha^+|^3   \big\rangle_{G,+, \tau}   = O(n^{3/2})    \]
\item\label{itmMGF} 
For  $\del >0$ as specified above and $t_0 = \del/(2 c_0 h)$ for some  constant $c_0 >1/4$,
\begin{align*}
  \big\langle e^{t_0 (\bX - 2\alpha^+n)}   \big\rangle_{G,+, \tau} \le  e^{c_0 t_0^2 n} 
  \intertext{and}
    \big\langle e^{t_0( 2\alpha^+n - \bX)}   \big\rangle_{G,+, \tau} \le  e^{c_0 t_0^2 n}  \, .
\end{align*}
\end{enumerate}
\end{lemma}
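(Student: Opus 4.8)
The plan is to derive all four statements of Lemma~\ref{lemGadgetProperties} from the near-independence of the terminal spins conditioned on a phase, together with standard tail bounds for the Ising model on a bounded-degree graph. The key structural input is Lemma~\ref{lemGadgetZ}, parts~\ref{itmTerminalProductMeasure} and~\ref{itmBadTau}: conditioned on phase $+$, the law of $\sigma_{R_0}$ is within $n^{-2\theta}$ (multiplicatively) of the product measure $Q^+_{R_0}$, and moreover this near-product structure persists even after further conditioning on a \emph{typical} assignment $\tau_{W_0}\in\Sigma_{W_0}\setminus\mathcal B$. The point is that once we also condition on $\sigma_{R_0}=\tau$, we are looking at the Ising model on the ``core'' $G'$ (plus the attached trees) with a boundary condition on $R_0$; the random variable $\bX$ is then a sum of $\pm1$-valued spins over $\Theta(n)$ vertices, and we need to control its fluctuations around the ``phase mean'' $2\alpha^+n$.

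First I would establish a concentration statement for $\bX$ under $\langle\cdot\rangle_{G,+,\tau}$. The cleanest route is to use the exponential tail bound that underlies a local CLT: conditioned on phase $+$, Lemma~\ref{lemGadgetZ}\ref{itmTerminalProductMeasure} (and the first-moment bounds $\EE Z_{G,\alpha}/\EE Z_{G,+}\le (C/\sqrt n)e^{-cn(\alpha+\alpha^+)^2}$, valid on the high-probability event) gives that the number of $+$ spins in $U_0$ is within $O(\sqrt n)$ of $2\alpha^+n$ with all but exponentially small probability, and the remaining spins (on $W_0$ and in the trees) contribute $O(\sqrt n)$-order fluctuations conditionally on $\sigma_{U_0}$ by a union-of-martingale-differences / Azuma argument on the trees plus a bounded-degree Ising tail bound on $W_0$. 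The cleanest way to package this uniformly over $\tau$ is to prove a sub-Gaussian bound of the form $\langle e^{t(\bX-2\alpha^+n)}\rangle_{G,+,\tau}\le e^{c_0 t^2 n}$ for all $|t|\le\delta'$ and some absolute $c_0>1/4$; this is exactly item~\ref{itmMGF} for $t=\pm t_0$ (note $t_0=\delta/(2c_0h)\to 0$, so $|t_0|\le\delta'$ eventually), and items~\ref{itmFirstMoment}, \ref{itmVariance}, \ref{itm3rdmoment} follow from it by the standard moment-from-MGF estimates ($\langle|\bX-2\alpha^+n|^k\rangle \le C_k (c_0 n)^{k/2}$ after optimizing $t$). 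So the real work is the sub-Gaussian moment bound, uniform over $\tau\in\Sigma_{R_0}$.

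To get that bound I would split the contribution to $\bX-2\alpha^+n$ as (i) the part from $U_0$, call it $\bX_{U_0}-2\alpha^+n$, and (ii) the part from $W_0$ and the trees, call it $\bX_{\mathrm{rest}}$, which is conditionally independent of everything given $\sigma_{U_0}$ up to the tree structure. For (i): using Lemma~\ref{lemGadgetZ}'s first-moment comparison $\EE Z_{G,\alpha}/\EE Z_{G,+}\le (C/\sqrt n)e^{-cn(\alpha+\alpha^+)^2}$ together with $Z_{G,+}>(1/C')\EE Z_{G,+}$ on the good event, and summing over the $O(n)$ values of $\alpha$, I get $\mu_{G,+}(|\bX_{U_0}-2\alpha^+n|>s\sqrt n)\le C'' e^{-c'' s^2}$, hence a sub-Gaussian MGF bound with variance proxy $O(n)$; conditioning further on $\sigma_{R_0}=\tau$ changes probabilities by at most a factor $1/(1-n^{-2\theta})$ by part~\ref{itmTerminalProductMeasure}, so the bound survives uniformly in $\tau$ (up to adjusting constants). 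For (ii): conditioned on $\sigma_{U_0}$ and $\sigma_{R_0}=\tau$, the spins on each attached $(\Delta-1)$-ary tree form an independent Ising model on a tree with boundary conditions at the root (in $W_0$-adjacent vertices) and at the leaf-ends, so $\bX_{\mathrm{rest}}$ is a sum of $m=O(n^\theta)$ independent contributions, each over $O(n^\psi)$ vertices and hence each bounded; Hoeffding gives a sub-Gaussian MGF with variance proxy $O(mn^\psi)=O(n^{\theta+\psi})=o(n)$, plus the $W_0$ part which is an Ising model on a bounded-degree graph with a fixed boundary and contributes $O(n)$ variance proxy by the same exponential-tail input. Combining the two pieces multiplicatively (Cauchy--Schwarz on the MGF, or iterated conditioning) yields $\langle e^{t(\bX-2\alpha^+n)}\rangle_{G,+,\tau}\le e^{c_0 t^2 n}$ with $c_0$ an absolute constant, for $|t|$ below a threshold depending only on $\Delta,\beta$. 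Since all the estimates in (i) and (ii) hold simultaneously on an event of probability $\ge 9/10$ (from Lemma~\ref{lemGadgetZ}) intersected with one more high-probability event controlling the tree structure, a union bound gives probability $\ge 8/10$, as claimed; the statement for the $-$ phase is identical by the symmetry $\sigma\mapsto-\sigma$.

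The main obstacle I anticipate is making the exponential concentration of $\bX_{U_0}$ around $2\alpha^+n$ \emph{uniform over all terminal boundary conditions $\tau$} with a clean enough constant to get $c_0>1/4$ (needed so that $t_0=\delta/(2c_0h)$ is a legitimate choice and the resulting bound $e^{c_0t_0^2n}$ is the one used downstream). The subtlety is that conditioning on $\sigma_{R_0}=\tau$ is conditioning on a positive-probability but not overwhelmingly-likely event under $Q^+_{R_0}$ (its probability is only $\exp(-\Theta(m))$ for atypical $\tau$), so the naive ``change of measure costs a factor $e^{\Theta(m)}$'' would swamp the Gaussian factor if $m$ were comparable to $n$; the saving grace is $m=O(n^\theta)=o(n)$, so one must be careful to route the argument through Lemma~\ref{lemGadgetZ}\ref{itmTerminalProductMeasure}--\ref{itmBadTau} (which give a \emph{multiplicative} $1\pm n^{-2\theta}$ comparison, not an $\exp(-\Theta(m))$ one) rather than through a crude conditioning bound. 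A secondary technical point is checking that the $W_0$ spins, conditioned on phase and on $\sigma_{R_0}=\tau$, really do satisfy an $O(n)$ sub-Gaussian bound: this needs either the general local-CLT machinery of Proposition~\ref{propLocalCLTalg} applied to the relevant sub-configuration, or a direct Dobrushin-uniqueness-free tail bound; I would cite the exponential tail that comes packaged with the small-subgraph-conditioning analysis of~\cite{Sly10,GSVY16} rather than re-derive it.
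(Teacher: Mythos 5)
Your unconditional step---comparing $Z_{G,\alpha}$ to $\EE Z_{G,\alpha}$, using the Gaussian decay of $\EE Z_{G,\alpha}/\EE Z_{G,+}$ and the small-subgraph-conditioning bound $Z_{G,+}\ge \EE Z_{G,+}/C'$---is essentially the paper's argument (modulo a fixable point: a pointwise comparison of $Z_{G,\alpha}$ with $\EE Z_{G,\alpha}$ for every $\alpha$ needs a union bound, whereas the paper applies Markov directly to the four weighted sums $\sum_{\alpha}\xi(2\alpha n-2\alpha^+n)Z_{G,\alpha}$). The genuine gap is in the transfer to the conditional measure $\langle\cdot\rangle_{G,+,\tau}$, which is the real content of the lemma. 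Your claim that conditioning on $\sigma_{R_0}=\tau$ ``changes probabilities by at most a factor $1/(1-n^{-2\theta})$ by part~\ref{itmTerminalProductMeasure}'' is not correct: Lemma~\ref{lemGadgetZ}\ref{itmTerminalProductMeasure} controls only the probability of the event $\{\sigma_{R_0}=\tau\}$ itself, so for an arbitrary event $A$ it gives only $\mu_{G,+}(A\mid\sigma_{R_0}=\tau)\le \mu_{G,+}(A)\,/\,\big(Q^+_{R_0}(\tau)(1-O(n^{-2\theta}))\big)$, and $1/Q^+_{R_0}(\tau)$ can be as large as $e^{\Theta(m)}=e^{\Theta(n^\theta)}$. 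You flag this obstacle yourself, but ``route the argument through \ref{itmTerminalProductMeasure}--\ref{itmBadTau}'' is not carried out. Note that the crude $e^{O(m)}$ cost is in fact harmless for item~\ref{itmMGF}, because $t_0^2n=\Theta(n/h^2)=\Theta(n^{1-\theta/2})\gg n^{\theta}$; this is precisely how the paper proves~\ref{itmMGF}, by dividing by $Q^+_{R_0}(\tau)$. But it is fatal for the polynomial moment bounds \ref{itmFirstMoment}--\ref{itm3rdmoment}: a Chernoff argument from an MGF bound of the form $e^{O(m)+O(t^2n)}$ only yields decaying tails beyond deviations of order $\sqrt{mn}$, so it gives $\langle|\bX-2n\alpha^+|^k\rangle_{G,+,\tau}=O\big((mn)^{k/2}\big)=O\big(n^{k(1+\theta)/2}\big)$, which misses the claimed $O(n^{k/2})$, uniformly-in-$\tau$ bounds.

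The missing idea is the paper's use of Lemma~\ref{lemGadgetZ}\ref{itmBadTau} combined with a conditional-independence observation. First redefine $\bX$ to count $+$ spins of $G'$ only; the terminals and attached trees comprise $O(n^{\theta+\psi})=o(\sqrt n)$ vertices, so this shifts nothing by more than $o(\sqrt n)$. With this convention, conditioned on $\sigma_{W_0}$ the variable $X(\sigma)$ is independent of $\sigma_{R_0}$ (the trees meet $G'$ only at $W_0$). One then splits the conditional moment according to whether $\sigma_{W_0}\in\mathcal B$: the bad set contributes at most $(2n)^k\exp(-n^{2\theta})=o(1)$, while for $\sigma_{W_0}\notin\mathcal B$ one may replace $\mu_{G,+}(\sigma_{R_0}=\tau\mid\sigma_{W_0})$ by $Q^+_{R_0}(\tau)(1+O(n^{-3\theta}))$, which cancels against the denominator $\mu_{G,+}(\sigma_{R_0}=\tau)\ge Q^+_{R_0}(\tau)(1-O(n^{-2\theta}))$. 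This transfers the unconditional moment bounds to $\langle\cdot\rangle_{G,+,\tau}$ with only a $1+O(n^{-2\theta})$ multiplicative loss, uniformly over $\tau$. Without this step (or an equivalent mechanism) your proposal does not establish items \ref{itmFirstMoment}--\ref{itm3rdmoment}.
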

We prove Lemma~\ref{lemGadgetProperties} in Section~\ref{lemGadgetProperties}.

\subsection{Proof of Theorem~\ref{thmHardness}}
Here we prove Theorem~\ref{thmHardness} given Lemmas~\ref{lemGadgetZ} and~\ref{lemGadgetProperties}.
Let $G$ be a gadget which satisfies the conclusions of these lemmas, and let $H^G_s$ denote the graph constructed from $H$, $G$ and isolated vertices as above. 
Let $\hat H^G_s$ denote the same graph but without the edges between gadgets (so $\hat H^G_s$ consists of $h$ disjoint copies of $G$ and $s$ isolated vertices). 
We write $\cE$ for the set  of edges $E(H_s^G)\setminus E(\hat H_s^G)$ that lie between gadgets.

Given $\sigma \in \Sigma_{ H^G_s}$, let $Y(\sigma) \in \Sigma_H$ be a vector denoting the phases of the gadgets $G^x$, $x \in V(H)$.  We will call such a $Y$ a \emph{phase vector}.   For a given $Y  \in \Sigma_H$ let $\mu_{H^G_s, Y}$ be the Ising model on $H^G_s$ conditioned on $ \{ Y(\sigma) = Y \}$. Let $\langle  \cdot \rangle_{H^G_s,Y}$ be the corresponding expectation operator. Define $\mu_{\hat H^G_s, Y}$ and $\langle  \cdot \rangle_{\hat H^G_s,Y}$ analogously. 
For $x\in V(H)$ let $R^x$ be the set of terminals in the gadget $G^x$, and let $R$ be the union of the terminal vertices in all the copies of the gadget. 
For a spin assignment $\tau\in \Sigma_R$ to the terminals, we include an additional subscript $\tau$ to indicate conditioning on the event that $\{ \sigma_R=\tau \}$.

We need two probabilistic results before proving Theorem~\ref{thmHardness}.  The first is a large deviation bound  for $\bX$ conditioned on any phase vector $Y$ and any assignment of terminal spins $\tau$.
The second is a local central limit theorem for $\mu_{\hat H^G_s, Y,\tau} (\bX= \ell)$ when $M(Y) = M^* $, and for an arbitrary terminal spin assignment $\tau$.

\begin{lemma}\label{lemLD}
    Assume the gadget $G$ satisfies the conclusions of Lemma~\ref{lemGadgetProperties}.  Then for any phase vector $Y$, any $\tau \in \Sigma_W$,  with
    \[ \nu =   \frac{s}{2} + 2n \sum_{x \in V(H)}  \alpha^{Y_x} \,,   \]
    we have 
    \[  \mu_{\hat H^G_s, Y,\tau} \Big (  \big | \bX - \nu  \big | \ge  \delta n\Big)  \le  \exp \left (- \Omega(n/h) \right )    \,,  \] 
    where $\del>0$ is a constant defined above.
\end{lemma}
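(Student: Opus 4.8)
The plan is to prove the large deviation bound \texttt{lemLD} by a standard exponential moment (Chernoff) argument, exploiting the fact that on $\hat H^G_s$ the gadgets and isolated vertices are \emph{disjoint}, so that conditioned on a phase vector $Y$ and terminal assignment $\tau$ the random variable $\bX$ splits as a sum of independent pieces: one contribution of $s/2$ in expectation from the $s$ isolated vertices (each $+1$ with probability $1/2$ under $\lam=1$, independently), and one contribution $\bX^x$ from each gadget $G^x$, distributed according to $\mu_{G^x, Y_x, \tau_{R^x}}$ and independent across $x\in V(H)$ because there are no edges between gadgets in $\hat H^G_s$. Thus $\bX - \nu = \sum_x (\bX^x - 2n\alpha^{Y_x}) + (\bX_{\mathrm{iso}} - s/2)$, a sum of $h + s$ independent mean-zero terms.

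First I would set up the exponential moment at the scale $t_0 = \del/(2c_0 h)$ supplied by Lemma~\ref{lemGadgetProperties}\ref{itmMGF}. For each gadget, that part of the lemma gives, for every terminal assignment, both
\[ \big\langle e^{t_0(\bX^x - 2\alpha^{Y_x} n)} \big\rangle \le e^{c_0 t_0^2 n} \qquad\text{and}\qquad \big\langle e^{t_0(2\alpha^{Y_x} n - \bX^x)} \big\rangle \le e^{c_0 t_0^2 n}, \]
and for the isolated vertices the centered sum $\bX_{\mathrm{iso}} - s/2$ is a sum of $s$ independent centered Bernoulli$(1/2)$ variables, each with $\EE e^{\pm t_0 (\xi_i - 1/2)} = \cosh(t_0/2) \le e^{t_0^2/8}$. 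By independence the moment generating function of $\bX - \nu$ at $\pm t_0$ is bounded by $e^{c_0 t_0^2 n h} \cdot e^{t_0^2 s/8}$. Since $s = \Theta(nh)$ and $t_0 = \Theta(1/h)$, we have $t_0^2 n h = \Theta(n/h)$ and $t_0^2 s = \Theta(n/h)$, so both one-sided tails at deviation $\del n$ are at most
\[ e^{-t_0 \del n} \cdot e^{c_0 t_0^2 n h + t_0^2 s/8} = \exp\!\big(-t_0\del n + O(t_0^2 n h)\big). \]
With $t_0 = \del/(2c_0 h)$ the leading term is $-t_0\del n = -\del^2 n/(2c_0 h)$ and the correction is $c_0 t_0^2 nh = \del^2 n/(4c_0 h)$ plus the $O(n/h)$ isolated-vertex term, so choosing $\del$ small enough (as a function of $\gamma,\eta_c$, and absorbing the constant $c_0$ and the isolated-vertex contribution) makes the exponent $-\Omega(n/h)$. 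A union bound over the two tails then gives $\mu_{\hat H^G_s, Y, \tau}(|\bX - \nu| \ge \del n) \le \exp(-\Omega(n/h))$, uniformly over $Y$ and $\tau$, as claimed.

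The only genuinely delicate point is the bookkeeping: one must check that $\tau \in \Sigma_W$ in the statement really does restrict to a terminal assignment on each gadget for which Lemma~\ref{lemGadgetProperties}\ref{itmMGF} applies (the lemma is stated for all $\tau \in \Sigma_{R_0}$ per gadget, and conditioning on $\sigma_W = \tau$ determines $\sigma_{R^x}$ on each gadget, so restricting and applying the tower property over the finer conditioning suffices), and that the independence across gadgets is valid after conditioning on $Y$ and $\tau$ — which it is, precisely because $\hat H^G_s$ has no crossing edges, so the Gibbs measure factorizes over the connected components $\{G^x\}_{x\in V(H)}$ together with the $s$ isolated vertices. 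I do not expect any real obstacle beyond verifying that the constants line up: the main thing to get right is that $t_0$ is chosen at exactly the scale $1/h$ so that $t_0^2 nh$ and $t_0 \del n$ are both of order $n/h$, and that $\del$ is taken small enough that the positive $t_0^2 n h$ term does not swamp the negative $t_0 \del n$ term.
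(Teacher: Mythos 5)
Your proposal is the paper's proof: the same exponential-moment/Markov argument at the scale $t_0=\del/(2c_0h)$ supplied by Lemma~\ref{lemGadgetProperties}\ref{itmMGF}, using the factorization of $\mu_{\hat H^G_s,Y,\tau}$ over the disjoint gadgets and the $s$ isolated vertices, with a $\mathrm{Bin}(s,1/2)$ moment-generating-function bound for the isolated block (the paper uses $\langle e^{t_0(\bX_s-s/2)}\rangle\le e^{st_0^2/4}$, you use the sharper $e^{st_0^2/8}$). The one place your bookkeeping goes astray is the closing remark that the isolated-vertex contribution can be tamed by ``choosing $\del$ small enough'': since $t_0\propto\del/h$, all three terms in your exponent, $-t_0\del n$, $c_0t_0^2nh$ and $t_0^2s/8$, scale as $\del^2$, so shrinking $\del$ does not change the sign of the exponent. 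What actually closes the argument---and what the paper does implicitly---is that the constant in the definition of $t_0$ is at one's disposal: the proof of Lemma~\ref{lemGadgetProperties} yields $\langle e^{t_0(\bX-2\alpha^+n)}\rangle_{G,+,\tau}\le e^{O(t_0^2n)}$ at any scale $t_0=O(1/h)$, and since $s=\Theta(nh)$ one may take $c_0$ large enough (in terms of $\Delta,\beta,\eta,\gamma$) that the combined positive term $c_0t_0^2nh+O(t_0^2 s)$ is at most $\tfrac12\,t_0\del n$, giving the exponent $-\Omega(\del^2 n/h)=-\Omega(n/h)$; the hypothesis $c_0>1/4$ in the lemma exists precisely so the binomial bound folds into the same constant, and the paper's own proof silently performs this same absorption (writing $t_0=\del/(2ch)$ for an enlarged constant $c$). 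With that correction in place of the $\del$-shrinking remark, your argument coincides with the paper's.
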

\begin{proof}
Note that to leading order $\nu$ is the mean $\langle \bX \rangle_{\hat H^G_s,Y,\tau}$.

We prove the bound on the upper tail; the proof for the lower tail is identical.  Let $\bX_s \sim \mathrm{Bin}(s,1/2)$ be the number of $+$ spins among the $s$ isolated vertices.  Then $\langle e^{t_0 (\bX_s-s/2)} \rangle \le e^{st_0^2/4} \le e^{sc_0 t_0^2}$ since $c_0 >1/4$. 
Using this along with Lemma~\ref{lemGadgetProperties} we can bound the moment generating function,
\begin{align*}
\langle e^{ t_0 (\bX - \nu)} \rangle_{\hat H^G_s, Y,\tau} &\le e^{c_0t_0^2 nh} \,,
\end{align*}
for some constant $c>0$. 
Then we have
\begin{align*}
 \mu_{\hat H^G_s, Y,\tau}  ( \bX \ge \nu + \del n ) &\le    e^{-t_0 \del n} \langle e^{ t_0 (\bX - \nu)} \rangle_{\hat H^G_s, Y,\tau}  \\
 &\le e^{-t_0 \del n + ct_0^2 nh} = e^{- \frac{\del n}{4h}} 
\end{align*}
since $t_0= \del/(2ch)$.
\end{proof}

The next lemma is a local central limit theorem for $\bX$ with respect to $\mu_{\hat H^G_s, Y, \tau}$.
\begin{lemma}\label{lemCLT}
Assume the gadget $G$ satisfies the conclusions of Lemma~\ref{lemGadgetProperties}.  Then for any  phase vector $Y \in \Sigma_H$ with $M(Y) = M^*$, any $\tau \in \Sigma_W$,  and any integer $t$,
\[  \mu_{\hat H^G_s, Y, \tau} \left ( \bX =t \right) =  \frac{1}{\sqrt{2\pi \kappa^2}} \exp \left[  -  \frac{ (t -  \langle \bX \rangle_{\hat H^G_s, Y, \tau} )^2} {2 \kappa^2} \right] +o \left( \frac{1}{\sqrt{nh }} \right) \,,  \]
where $\kappa^2 =   \var_{\hat H^G_s, Y, \tau}(\bX)$. 
In particular, 
\[  \mu_{\hat H^G_s, Y, \tau} \left ( \bX =\ell \right) =  \Omega \left( \frac{1}{\sqrt{nh }} \right) \,.    \]
\end{lemma}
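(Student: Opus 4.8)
The random variable $\bX$ on $\hat H^G_s$ decomposes as a sum of independent pieces: the count $\bX_s\sim\mathrm{Bin}(s,1/2)$ of $+$ spins among the $s$ isolated vertices, and, for each $x\in V(H)$, the count $\bX^x$ of $+$ spins in the gadget $G^x$ conditioned on its phase $Y_x$ and on the terminal spins $\tau|_{R^x}$. Independence holds because $\hat H^G_s$ is a disjoint union of the gadgets and the isolated vertices, so after conditioning on $Y$ and $\tau$ the blocks do not interact. Thus $\bX = \bX_s + \sum_{x}\bX^x$ with $h+1$ independent summands, and I want a local CLT for this sum at scale $\sqrt{nh}$. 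The standard tool is the Fourier-analytic approach to local limit theorems: write
\[ \mu_{\hat H^G_s, Y, \tau}(\bX = t) = \frac{1}{2\pi}\int_{-\pi}^{\pi} e^{-itx}\,\phi(x)\,dx, \]
where $\phi(x) = \EE[e^{ix\bX}] = \phi_s(x)\prod_x \phi_x(x)$ is the characteristic function, and compare against the Gaussian integral. The Gaussian density evaluated at an integer equals $\frac{1}{2\pi}\int_{\R} e^{-itx}e^{i\mu x - \kappa^2 x^2/2}\,dx$ up to negligible tails, so the whole thing reduces to controlling $\int_{-\pi}^{\pi}|\phi(x) - e^{i\mu x - \kappa^2 x^2/2}|\,dx = o(1/\sqrt{nh})$.

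**Key steps.** First, a \emph{variance lower bound}: $\kappa^2 = \var(\bX) = \var(\bX_s) + \sum_x \var(\bX^x) \ge \var(\bX_s) = s/4 = \Theta(nh)$ since $s = \Theta(nh)$; Lemma~\ref{lemGadgetProperties}\ref{itmVariance} gives the matching upper bound $\var(\bX^x) = O(n)$ so $\kappa^2 = \Theta(nh)$. Second, \emph{a third-moment/Berry--Esseen-type bound near $x=0$}: on $|x| \le \epsilon/\sqrt{nh}$ for small $\epsilon$, Taylor-expand $\log\phi_x$ using that $\EE|\bX^x - 2n\alpha^{Y_x}|^3 = O(n^{3/2})$ (Lemma~\ref{lemGadgetProperties}\ref{itm3rdmoment}) and similarly for $\bX_s$, to get $|\phi(x) - e^{i\mu x - \kappa^2 x^2/2}| \le C(nh)^{3/2}|x|^3 e^{-c\kappa^2 x^2}$, which integrates to $O((nh)^{-1})= o((nh)^{-1/2})$ over this window. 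Third, \emph{a bound away from zero}: for $\epsilon/\sqrt{nh} \le |x| \le \pi$ I need $|\phi(x)|$ small enough that the integral of $|\phi|$ is $o(1/\sqrt{nh})$, and the Gaussian term is trivially negligible there. Here the MGF bound Lemma~\ref{lemGadgetProperties}\ref{itmMGF} (or, more directly, a bound on $|\phi_x(x)|$ deduced from the second moment) gives $|\phi_x(x)| \le e^{-c n x^2}$ for $|x|$ in the moderate range $[\epsilon/\sqrt{nh}, \delta_0]$, so $|\phi(x)| \le e^{-cnh x^2}$ and the integral over this range is $o(1/\sqrt{nh})$. For $|x|$ bounded below by an absolute constant $\delta_0$, I use the lattice structure: each gadget contains, say, a free spin (or a pair of vertices flippable independently — e.g.\ a leaf of one of the attached trees whose spin conditionally on everything else has a Bernoulli distribution with probability bounded away from $0$ and $1$), contributing a factor $|1 - p + p e^{ix}|$ to $\phi$, and since there are $\Theta(h)$ such factors each bounded by $1 - c(1-\cos x) \le e^{-c'}$ uniformly on $|x| \in [\delta_0,\pi]$, we get $|\phi(x)| \le e^{-\Omega(h)}$, which is more than enough. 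Finally, combining, $\mu_{\hat H^G_s,Y,\tau}(\bX = t)$ equals the Gaussian density at $t$ up to $o(1/\sqrt{nh})$; plugging $t = \ell$ and noting $\ell = \lfloor N(\eta+1)/2\rfloor$ lies within $O(\sqrt{nh})$ of $\langle\bX\rangle_{\hat H^G_s, Y, \tau}$ (this uses $M(Y) = M^*$, the definition of $\nu$ in Lemma~\ref{lemLD}, and the choice of $s$ via~\eqref{eqScondition}, together with the first-moment bound~\ref{itmFirstMoment} which controls the gap between $\langle\bX^x\rangle$ and $2n\alpha^{Y_x}$) makes the exponent $O(1)$, so the Gaussian density at $\ell$ is $\Theta(1/\kappa) = \Theta(1/\sqrt{nh})$, giving the ``in particular'' claim.

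**Main obstacle.** The delicate step is the \emph{medium-frequency regime} $\epsilon/\sqrt{nh} \le |x| \le \delta_0$: I need a quantitative bound $|\phi_x(x)| \le e^{-c n x^2}$ that is uniform over all phase vectors $Y$ and \emph{all} terminal conditionings $\tau$, with the same constant $c$. This does not follow from the CLT alone; it is essentially a uniform-in-$\tau$ anticoncentration estimate for $\bX^x$ on a single gadget, and is exactly why Lemma~\ref{lemGadgetProperties} records the moment bounds~\ref{itmVariance}--\ref{itmMGF} with constants depending only on $\Delta,\beta$. The cleanest route is: from $\var_{G,+,\tau}(\bX) = \Theta(n)$ (lower bound via, e.g., the $\Theta(n)$ conditionally near-independent tree leaves, upper bound from~\ref{itmVariance}) deduce $|\phi_x(x)|^2 \le 1 - c\,\var(\bX^x)\,x^2 + O(n^2 x^4)$ for $|x|$ small, hence $|\phi_x(x)| \le e^{-c'nx^2}$ on $|x|\le \delta_0$; one must check the error term $O(n^2x^4)$ is dominated, which forces the window $|x|\le\delta_0$ with $\delta_0$ depending only on $\Delta,\beta$, and then the high-frequency argument above covers $[\delta_0,\pi]$. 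The second subtlety is bookkeeping the error terms so that every contribution is genuinely $o(1/\sqrt{nh})$ rather than merely $O(1/\sqrt{nh})$ — this is why the near-zero window yields $O(1/(nh))$ and the variance is $\Theta(nh)$ rather than just $\Omega(n)$; the isolated vertices are inserted precisely to inflate the variance to order $nh$ and make this margin comfortable. I expect the rest to be routine given Lemma~\ref{lemGadgetProperties}.
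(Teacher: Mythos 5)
Your overall architecture --- decomposing $\bX$ into the independent binomial contribution of the $s$ isolated vertices plus the per-gadget contributions conditioned on $Y$ and $\tau$, establishing $\kappa^2=\Theta(nh)$, running Fourier inversion against the Gaussian, and locating $\ell$ within $O(\sqrt{nh})$ of the conditional mean via $M(Y)=M^*$, \eqref{eqScondition}, and Lemma~\ref{lemGadgetProperties}\ref{itmFirstMoment} --- is the same as the paper's. The genuine gap is in your treatment of the non-central frequencies. The per-gadget bound $|\phi_x(x)|\le e^{-cnx^2}$, uniform over all terminal conditionings $\tau$, which you identify as the main obstacle and on which your medium-frequency regime rests, is not available: Lemma~\ref{lemGadgetProperties} provides only \emph{upper} bounds on centered moments and on the MGF, and such bounds can never force decay of a characteristic function (a conditional law that is nearly a point mass satisfies all of them while keeping $|\phi_x(x)|$ close to $1$). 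Your suggested repair requires a conditional variance lower bound $\var_{G,\pm,\tau}(\bX)=\Omega(n)$ uniform in $\tau$, plus a fourth-moment bound to control the $O(n^2x^4)$ error term, neither of which is established in the paper; and even granting both, the expansion $|\phi_x(x)|^2\le 1-c\,\var(\bX^x)\,x^2+O(n^2x^4)$ only helps for $|x|=O(n^{-1/2})$, so it cannot reach a constant $\delta_0$ without genuine anticoncentration input.

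The fix is already in your hands and is exactly what the paper does: the isolated vertices are inserted not merely to inflate the variance but to control the characteristic function away from zero. The binomial factor satisfies $\bigl|\langle e^{ix\bX_s}\rangle\bigr| = |\cos(x/2)|^{s}\le e^{-\Omega(s x^2)}$ for all $|x|\le\pi$, and since $s=\Theta(nh)=\Theta(\kappa^2)$ this gives $|\phi_{\overline\bX}(t)|\le e^{-\Omega(t^2)}$ over the entire inversion range $|t|\le\pi\kappa$, uniformly in $Y$ and $\tau$, with no per-gadget anticoncentration needed; your medium-frequency regime and the ``free spin per gadget'' high-frequency argument then become superfluous. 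With that substitution, your central-window estimate (Lyapunov ratio $hn^{3/2}\kappa^{-3}=O(h^{-1/2})$ from Lemma~\ref{lemGadgetProperties}\ref{itm3rdmoment}) is the same computation the paper performs --- the paper packages it as a pointwise CLT plus bounded convergence rather than an explicit Berry--Esseen window, which is enough for the $o(1/\sqrt{nh})$ error --- and the rest of your argument, including the $\Omega(1/\sqrt{nh})$ bound at $t=\ell$, matches the paper's proof.
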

\begin{proof}
We have  $s= \Theta(nh)$, and by the independence of disjoint gadgets and the second moment bound in Lemma~\ref{lemGadgetProperties}, we have $\kappa^2 = \Theta(nh)$.  Moreover,  by our choice of $s$ and the fact that  $M(Y) = M^*$, we have $| \langle \bX \rangle_{\hat H^G_s, Y, \tau} -\ell| = O(\sqrt{nh})$, and so the second statement follows from the first. 

The proof of the first statement is similar to that of Proposition~\ref{propLocalCLTalg} in Appendix~\ref{secAppendixCLT}, but here things are especially simple because of the presence of $s= \Theta(nh)$ isolated vertices.

We start by proving a central limit theorem with the standard method of characteristic functions.  Let $\overline \bX = \frac{1}{\kappa}\big(\bX - \langle \bX \rangle_{\hat H^G_s, Y, \tau} \big)$ and $\phi_{\overline \bX}(t) = \langle e^{it \overline \bX} \rangle_{\hat H^G_s, Y, \tau}$. Then
\begin{align*}
\phi_{\overline \bX}(t) &= \left( \frac{1 + e^{it/\kappa}}{2}  e^{-t/(2 \kappa)} \right)^s  \prod_{x\in V(H)}   \left\langle e^{\frac{it}{\kappa} (\bX - \langle  \bX \rangle_{G, Y_x,\tau_{R^x}})}  \right\rangle_{G, Y_x,\tau_{R^x}} \\
&= \left( 1 - \frac{ t^2 }{8 \kappa^2}  + O(\kappa^{-3}) \right)^s  \prod_{x\in V(H)}  \left( 1- \frac{t^2 \var_{G, Y_x,\tau_{R^x}}(\bX)}{2 \kappa^2}   +O\left(\kappa^{-3} n^{3/2}  \right) \right)  \\
&= e^{-t^2/2} + o(1) \,,
\end{align*}
since $hn^{3/2} \kappa^{-3} = O( h^{-1/2}) \to 0$.  Here we used the bound on the third moment of $\bX$ in a gadget given by Lemma~\ref{lemGadgetProperties}.  This proves that $\overline \bX \Rightarrow N(0,1)$.

 Let $\mathcal L$ denote the lattice $\langle \bX \rangle_{\hat H^G_s, Y, \tau} + \mathbb Z/\kappa$.   Let $\mathcal N(x) = \frac{1}{\sqrt{2 \pi} } e^{-x^2/2}$.   We want to show that 
\[ \sup_{x \in \mathcal L} \left | \kappa \mu_{\hat H^G_s, Y, \tau} ( \overline \bX = x) -  \mathcal N(x) \right| = o(1) \,,\]  
since $\kappa = \Theta(\sqrt{nh})$.  Using Fourier inversion (as in Appendix~\ref{secAppendixCLT}),  we have for any $K>0$,
\begin{align*}
2 \pi \sup_{x \in \mathcal L} &\left |  \kappa\mu_{G,\beta, \lam} ( \overline \bX = x) -  \mathcal N(x) \right|  \\ &= \sup_{x \in \mathcal L}  \left |  \int_{-\pi \kappa }^{\pi \kappa} \phi_{\overline \bX}(t) e^{-itx} \,  dt -  \int_{-\infty}^{\infty} e^{-t^2/2 - itx } \, dt    \right |  \\
&\le  \int_{-\pi \kappa}^{\pi \kappa} \left | \phi_{\overline \bX}(t) - e^{-t^2/2}  \right | \, dt +\int_{|t| > \pi \kappa}e^{-t^2/2  } \, dt  \\
 &\le \int_{-K }^{K} \left | \phi_{\overline \bX}(t) - e^{-t^2/2}  \right | \, dt  + \int_{|t| \ge K}e^{-t^2/2} \, dt + \int_{|t| \ge K}  \left | \phi_{\overline \bX}(t)  \right | \, dt   \\
 &=: A_1 + A_2 +A_3 \,.
\end{align*}

Because $\phi_{\overline X}(t) = e^{-t^2/2} +o(1)$, applying the bounded convergence theorem gives that $A_1 \to 0$ as $n \to \infty$ for any fixed $K$, so we can choose $n$ large enough to guarantee $A_1 < \eps/3$.  We can  pick $K$ large enough to ensure $A_2 < \eps/3$.  For $A_3$, we use the fact that the portion of the characteristic function coming from the isolated vertices has nice behavior.  In particular,
\begin{align*}
| \phi_{\overline X}(t) | &\le \left( \frac{1 + e^{it/\kappa}}{2}  e^{-t/(2 \kappa)} \right)^s \\
&\le e^{-\frac{t^2 s}{4 \kappa^2}} = e^{- \Omega(t^2)} 
\end{align*}
since $ s= \Theta(\kappa^2)$.  Then by choosing $K$ large enough again we can make $A_3 < \eps /3$ as well. \qedhere
\end{proof}

With these ingredients we can prove Theorem~\ref{thmHardness}.

\begin{proof}[Proof of Theorem~\ref{thmHardness}]

    Recall that we assume $\Delta\ge3$, $\beta>\beta_c(\Delta)$, and that $\eta\in [0,\eta_c)$.    Let $G$ be the gadget graph that satisfies the conclusions of Lemmas~\ref{lemGadgetZ} and~\ref{lemGadgetProperties}, and recall the notation of Section~\ref{secHardnessReduction} which includes the graph $H^G_s$ on $N$ vertices formed from copies of $G$ and $s$ isolated vertices.
    
    Let $b$ be the value of $\gamma$-\csproblem{MEBC} on the graph $H$, and let $\eps > 0$.   We will obtain upper and lower bounds on $b$ in terms of $\Zfixed_{H^G_s}(\beta,2\ell - N)$  such that for suitably small $\zeta$, an $e^{N^\zeta}$-relative approximation to $\Zfixed_{H^G_s}(\beta,2\ell - N)$ constrains $b$ to an interval of length at most $h^{2-\eps}$.

    Since $\beta$ and the magnetization $2 \ell - N$ are fixed, we will write $\Zfixed_{H^G_s}$ for $\Zfixed_{H^G_s}(\beta,2 \ell - N)$.  Moreover, for a phase vector $Y \in \Sigma_H$, we write $\Zfixed_{H^G_s}(Y)$ for the contribution to $\Zfixed_{H^G_s}$ from spin assignments  with phase vector $Y$.  For $\tau \in \Sigma_R$ we write $\Zfixed_{H^G_s}(Y,\tau)$ for the contribution to $\Zfixed_{H^G}(Y)$ from spin assignments $\sigma$ which agree with $\tau$ on the terminals $R$.   Similarly, since we only consider the usual Ising model with no external field we write  $Z_{\hat H^G_s}$ for  $Z_{\hat H^G_s}(\beta,1)$.

    We start by bounding the partition function $\Zfixed_{H^G_s}$ from above.  The first step is to split the partition function into sums over $Y$ according to whether $M(Y)=M^*$. We have
    \[ \Zfixed_{H^G_s} = \sum_{Y\in \Sigma_H}\Zfixed_{H^G_s}(Y) = \sum_{Y : M(Y) = M^*}\Zfixed_{H^G_s}(Y) + \sum_{Y : M(Y) \ne M^*}\Zfixed_{H^G_s}(Y). \]
    For an arbitrary phase vector $Y$ we split $\Zfixed_{H^G_s}(Y)$ into a sum over spin assignments $\tau\in \Sigma_R$ to the terminals and pull out the factor of the summand contributed by edges in $\cE$, giving
    \[ \Zfixed_{H^G}(Y) = \sum_{\tau\in \Sigma_R}\Zfixed_{\hat H^G_s}(Y,\tau)\prod_{uv \in \cE}e^{\frac{\beta}{2}\tau_u\tau_v}. \]
    To handle the fixed-magnetization constraint, observe that when $\sigma$ is drawn from the Ising model $\mu_{\hat H^G_s,Y,\tau}$ we have 
    \[ \Zfixed_{\hat H^G_s}(Y,\tau) = Z_{\hat H^G_s}(Y,\tau)\cdot \mu_{\hat H^G_s,Y,\tau}(\bX = \ell), \]
    which we can control with Lemmas~\ref{lemLD} and~\ref{lemCLT}.
    In the case $M(Y)=M^*$ we have 
    \[ \Zfixed_{\hat H^G_s}(Y,\tau) = Z_{\hat H^G_s}(Y,\tau)\cdot \Omega(1/\sqrt{nh}) , \]
    and in the case $M(Y)\ne M^*$ we use 
    \[ \Zfixed_{\hat H^G_s}(Y,\tau) = Z_{\hat H^G_s}(Y,\tau) \cdot \exp(-\Omega(n/h)) . \]
    
    For the sum over $Y$ with $M(Y)=M^*$ this means for some constant $C>0$, 
    \begin{equation}\label{eqUbYbal} 
        \sum_{Y : M(Y)= M^*} \Zfixed_{H^G_s}(Y)
        \le \frac{C}{\sqrt{nh}}\sum_{\tau\in \Sigma_R}Z_{\hat H^G_s}(Y,\tau)\prod_{uv \in \cE}e^{\frac{\beta}{2}\tau_u\tau_v}.
    \end{equation}
    Now we can apply the phase-conditioned, nearly-independent terminal spins property of the gadget.
    Using Lemma~\ref{lemGadgetZ}\ref{itmTerminalProductMeasure} for the inequality (and the fact that $(1+ O(n^{-2\theta}))^{h} = 1+o(1)$), we have
    \[ Z_{\hat H^G_s}(Y,\tau) = Z_{\hat H^G_s}(Y) \cdot \mu_{\hat H^G,Y}(\sigma_R = \tau) \le (1+o(1))Z_{\hat H^G_s}(Y)Q^Y_R(\tau), \]
    where $Q^Y_R(\tau)$ is the probability measure on $\Sigma_R$ such that 
    \[ Q^Y_R(\tau) = \prod_{x\in V(H)} Q^{Y_x}_{R^x}(\tau_{R^x}). \]
    Continuing from~\eqref{eqUbYbal} and absorbing factors into the constant, we have 
    \begin{align*}
        \sum_{Y : M(Y)=M^*} \Zfixed_{H^G_s}(Y) \le \frac{C }{\sqrt{nh}}\sum_{Y : M(Y)=M^*}Z_{\hat H^G_s}(Y) \sum_{\tau\in\Sigma_R} Q^Y_R(\tau)\prod_{uv \in \cE}e^{\frac{\beta}{2}\tau_u\tau_v},
    \end{align*}
    and the final sum over $\tau$ can be expressed in terms of the number $\cut(Y)$ of edges of $H$ which are cut by the phase vector $Y$. 
    This observation appears in~\cite{Sly10} and is precisely why nearly-independent phase-correlated spins are important in reductions such as these. 
    
    Recall $q$ defined in~\eqref{eqTreeMarginals}. For every edge $xy\in E(H)$ cut by $Y$, there are precisely $2k$ edges in $\cE$ such that the measure $Q^Y_W$ gives one endpoint spin $+1$ with probability $q$ and the other endpoint spin $+1$ with probability $1-q$. Such edges are monochromatic with probability $2q(1-q)$.
    Similarly, for every edge $xy\in E(H)$ not cut by $Y$ there are precisely $2k$ edges in $\cE$ which are monochromatic with probability $q^2 + (1-q)^2$. 
    Then for constants $\Theta = 2q(1-q)e^{\beta/2}  + \big(q^2 + (1-q)^2\big)e^{-\beta/2}$ and $\Gamma = 2q(1-q)e^{-\beta/2}  + \big(q^2 + (1-q)^2\big)e^{\beta/2}$ we have
    \[ \sum_{\tau\in \Sigma_R} Q^Y(\tau)\prod_{uv \in \cE}e^{\frac{\beta}{2}\tau_u\tau_v} = \Gamma^{2k|E(H)|}(\Theta/\Gamma)^{2k\cut(Y)}. \]
    Note that   $\Theta < \Gamma$ so that smaller cuts give larger quantities above.
    Finishing the upper bound started in~\eqref{eqUbYbal}, we have 
    \begin{align*} 
        \sum_{Y : M(Y)=M^*} \Zfixed_{H^G_s}(Y) &\le \frac{C}{\sqrt{nh}} \sum_{Y : M(Y)=M^*} Z_{\hat H^G_s}(Y) \Gamma^{2k|E(H)|}(\Theta/\Gamma)^{2k\cut(Y)}
        \\ &\le \frac{C}{\sqrt{nh}}\Gamma^{2k|E(H)|}(\Theta/\Gamma)^{2kb} Z_{\hat H^G_s}, 
    \end{align*}
    because the $\gamma$-\csproblem{MEBC} $b$ of $H$ gives the largest contribution $(\Theta/\Gamma)^{2k\cut(Y)}$, and the partition function $Z_{\hat H^G_s}$ is an upper bound on $\sum_{Y : M(Y)=M^*}Z_{\hat H^G_s}(Y)$.

    For phase vectors $Y$ with $M(Y)\ne M^*$ it suffices to consider the worst-case contribution from edges between gadgets.  For such $Y$, $\left | \frac{s}{2} +  2n \sum_{x \in V(H)}  \alpha^{Y_x}  - \ell   \right|  > \del n$, and so we can apply Lemma~\ref{lemLD} to give
    \begin{align*}
        \sum_{Y : M(Y)\ne M^*} \Zfixed_{H^G_s}(Y)
        &\le 2e^{-\frac{\del n}{4h}}\sum_{\tau\in \Sigma_R}Z_{\hat H^G_s}(Y,\tau)\prod_{uv \in \cE}e^{\frac{\beta}{2}\tau_u\tau_v}
        \\ &\le e^{-\Omega(n/h)} e^{\frac{\beta}{2}|\cE|} \sum_{\tau\in \Sigma_R}Z_{\hat H^G_s}(Y,\tau)
        \\ &\le e^{\beta k|E(H)| -\Omega(n/h)} Z_{\hat H^G_s},
    \end{align*}
    because $|\cE|=2k|E(H)|$.
    Our construction ensures that $k|E(H)| \le kh^2 = o(n/h)$ so that this is a negligible fraction of $Z_{\hat H^G_s}$. 
    Combining these bounds, for all large enough $n$ we have 
    \[ \Zfixed_{H^G_s} \le \left(\frac{C}{\sqrt{nh}} \Gamma^{2k|E(H)|}(\Theta/\Gamma)^{2kb} + e^{\beta k|E(H)| -\Omega(n/h)} \right)Z_{\hat H^G_s}. \]
    The term $\Gamma^{2k|E(H)|}(\Theta/\Gamma)^{2kb}$ is smallest when $b=|E(H)|$, but even in this case it is still at least $e^{-\beta k|E(H)|}$ as $\Theta > e^{-\beta/2}$. 
    Thus, we can absorb the `error' term arising from phase vectors $Y$ with $M(Y)\ne M^*$ into $C$:
    \begin{equation}\label{eqZfixUB}
        \Zfixed_{H^G_s} \le \frac{C}{\sqrt{nh}}\Gamma^{2k|E(H)|}(\Theta/\Gamma)^{2kb} Z_{\hat H^G_s}.
    \end{equation}

    To give a lower bound on $\Zfixed_{H^G}$ it suffices to consider a single phase vector $Y^*$ with $\cut(Y)=b$ that corresponds to the $\gamma$-\csproblem{MEBC} of $H$. Then $M(Y^*)=M^*$ and for some constant $C'>0$ (which will absorb $(1+o(1))$ factors in the calculation below), we have
    \begin{align*}
        \Zfixed_{H^G_s} 
        &\ge \Zfixed_{H^G_s}(Y^*) = \sum_{\tau\in \Sigma_R}\Zfixed_{\hat H^G_s}(Y^*,\tau)\prod_{uv \in \cE}e^{\frac{\beta}{2}\tau_u\tau_v}
        \\ &\ge \frac{C'}{\sqrt{nh}}\sum_{\tau\in \Sigma_R}Z_{\hat H^G_s}(Y^*,\tau)\prod_{uv \in \cE}e^{\frac{\beta}{2}\tau_u\tau_v}
        \\ &\ge \frac{C'}{\sqrt{nh}} Z_{\hat H^G_s}(Y^*) \sum_{\tau\in \Sigma_R}Q^{Y^*}(\tau)\prod_{uv \in \cE}e^{\frac{\beta}{2}\tau_u\tau_v}
        \\ &= \frac{C'}{\sqrt{nh}} Z_{\hat H^G_s}(Y^*) \Gamma^{2k|E(H)|}(\Theta/\Gamma)^{2kb},
    \end{align*}
    where we apply Lemma~\ref{lemCLT} to obtain the second line and the lower bound in~Lemma~\ref{lemGadgetProperties}\ref{itmTerminalProductMeasure} to obtain the third.
    Finally, since we have perfect symmetry between the phases we have $Z_{\hat H^G_s}(Y^*)=2^{-h}Z_{\hat H^G_s}$ and 
    \[ \Zfixed_{H^G_s} \ge \frac{C'2^{-h}}{\sqrt{nh}} \Gamma^{2k|E(H)|}(\Theta/\Gamma)^{2kb} Z_{\hat H^G_s}. \] 

    The upper bound from~\eqref{eqZfixUB} and the lower bound above combine to give 
    \[ \frac{C'2^{-h}}{\sqrt{nh}} \Gamma^{2k|E(H)|}(\Theta/\Gamma)^{2kb} \le \frac{\Zfixed_{H^G_s}}{Z_{\hat H^G_s}} \le \frac{C}{\sqrt{nh}} \Gamma^{2k|E(H)|}(\Theta/\Gamma)^{2kb}, \] 
    which provides the bounds 
    \[ T - \frac{\log(2^h/C')}{2k\log(\Gamma/\Theta)} \le b \le T + \frac{\log C}{2k\log(\Gamma/\Theta)} \] 
    on the min-bisection $b$ of $H$, where 
    \[ T = \frac{\log(Z_{\hat H^G_s}/\Zfixed_{H^G_s}) + 2k|E(H)|\log \Gamma -\log\sqrt{nh}}{2k\log(\Gamma/\Theta)}. \] 

    We can approximate $Z_{\hat H^G_s}$ to within an absolute constant factor in (randomized) time polynomial in $N$ (which is polynomial in $h$) by Theorem~\ref{thmIsingFPRAS}. 
    For the theorem we suppose that we have a relative $e^{N^\zeta}$-approximation of $\Zfixed_{H^G_s}$, and hence if $\tilde T$ is given by the definition of $T$ above with $Z_{\hat H^G_s}$ and $\Zfixed_{H^G_s}$ replaced by these approximate values, we have $|\tilde T - T| \le O(N^\zeta/k)$ and hence
    \[ \tilde T - O\left(\frac{N^\zeta+h}{k}\right) \le b \le \tilde T + O\left(\frac{N^\zeta}{k}\right). \]
    Since $k = \Theta(h^3)$, this constrains $b$ to an interval of length $O\big(N^{\zeta}h^{-3}\big)$. 
    Using $N = O(nh)$ with $n = O(h^{\theta/4})$, it suffices to choose $\zeta$ small enough in terms of $\theta$ and $\eps$ that $(1+4/\theta)\zeta < 5-\eps$.

    Note that we could reduce from solving min-bisection exactly at the cost of weaker inapproximability for $\Zfixed$ in the proof. If $\zeta$ is chosen such that $(1+4/\theta)\zeta < 3$ then the bounds constrain the integer $b$ to an interval of length $o(1)$ and hence for large enough $h$ we can find $b$ exactly.
\end{proof}

\subsection{Proof of Lemma~\ref{lemGadgetProperties}}\label{secGadget}

We prove the lemma in the case of the $+$ phase as the $-$ phase is the same.  Note that we can ignore the contribution of vertices in $R_0$ and the attached trees to $\bX$ in the bounds since  there are $o(n^{1/2})$ of these vertices.  So for this section $\bX$ and $X(\sigma)$ will refer to the number of $+$ spins in the vertices of $G'$.

We first prove the three bounds of the lemma without conditioning on $\{ \sigma_{R_0} =\tau \}$.  We will show that for large enough $n$, with probability at least $8/10$ over the choice of gadget, we have the following bounds:
\begin{align}
\label{eqNoCondbounds1}
 \big \langle |   \bX  -  2n\alpha^+  | \big \rangle_{G,+}  &=O( \sqrt{n } ) \\
 \label{eqNoCondbounds2}
  \big\langle |\bX - 2n\alpha^+|^2   \big\rangle_{G,+}   &= O(n)    \\
   \label{eqNoCondbounds2b}
  \big\langle |\bX - 2n\alpha^+|^3   \big\rangle_{G,+}   &= O(n^{3/2})    \\
  \label{eqNoCondbounds3}
 \big\langle e^{t_0 (\bX - 2\alpha^+n)}   \big\rangle_{G,+} &\le  e^{O( t_0^2 n)} \,.
\end{align}

Let $\xi: \R \to \R$ be a non-negative function that  satisfies $\xi (x) \le e^{c |x|}$ for some constant $c>0$.  We aim to prove bounds on $\langle  \xi (\bX - 2\alpha^+n)   \rangle_{G,+}$ for four choices of functions $\xi$:  $\xi(x) = |x|^k$ for $k \in \{1,2,3\}$, and  $\xi(x) = e^{ t_0 x}$.

For such a function $\xi$, we can write
\begin{align*}
\langle  \xi (\bX - 2\alpha^+n)   \rangle_{G,+}  &\le \frac{  \sum_{\alpha \ge 1/2} \xi (2 \alpha n - 2\alpha^+n) Z_{G,\alpha}    }{ Z_{G,+}   }
\end{align*}
(this is an inequality instead of an equality simply because we include all configurations with $\alpha=1/2$). 

By~\ref{itmZlowerBound} of Lemma~\ref{lemGadgetZ} we have $Z_{G,+} > \frac{1}{C} \EE Z_{G,+}$ with probability at least $1-1/10$.  By Markov's inequality we have 
\begin{align*}
\sum_{\alpha \ge 1/2} \xi (2 \alpha n - 2\alpha^+n) Z_{G,\alpha} &\le 100 \sum_{\alpha \ge 1/2} \xi (2 \alpha n - 2\alpha^+n) \EE Z_{G,\alpha} 
\end{align*}
for all four choices of $\xi$ with probability at least $1-4/100$.  Thus with probability at least $1- 1/10 -4/100 \ge 8/10$ we have 
\begin{align*}
\langle  \xi (\bX - 2\alpha^+n)   \rangle_{G,+}  &\le 100 C  \frac{  \sum_{\alpha \ge 1/2} \xi (2 \alpha n - 2\alpha^+n)  \EE Z_{G,\alpha}    }{ \EE Z_{G,+}   } \,,
\end{align*}
so to prove~\eqref{eqNoCondbounds1}, \eqref{eqNoCondbounds2}, \eqref{eqNoCondbounds2b}, \eqref{eqNoCondbounds3} it is enough to show that $ \frac{  \sum_{\alpha \ge 1/2} \xi (2 \alpha n - 2\alpha^+n)  \EE Z_{G,\alpha}    }{ \EE Z_{G,+}   }$ satisfies the desired bounds.

  Now using the bound
\begin{align*}
\frac{  \EE Z_{G,\alpha}  }{ \EE  Z_{G,+}    }&=   O (n^{-1/2})  e^{-\Omega(n (\alpha  - \alpha^+)^2 )}       \,
\end{align*}
from Lemma~\ref{lemGadgetZ},
we can bound
\begin{align*}
\frac{\sum_{\alpha \ge 1/2}  \xi(2 \alpha n     - 2\alpha^+ n )  \EE Z_{G,\alpha} }{ \EE  Z_{G,+}}  &\le  O(n^{-1/2}) \sum_{n (1-2 \alpha^+)\le  \ell  \le 2n(1-\alpha^+)} \xi(\ell)   e^{-\Omega(\ell^2/n)}  +o(1)\\
&= O(n^{-1/2}) \int_{n (1-2 \alpha^+)} ^{  2n(1-\alpha^+)} \xi(u)  e^{-\Omega(u^2/n)}  \, du  \\
&= O(n^{1/2})  \int_{1-2\alpha^+}^{2(1-\alpha^+)}  \xi( un)  e^{-\Omega(n x^2)}  \, dx \\
&= O \left ( n^{1/2} \int_{-\infty}^{\infty}  \xi( xn)   e^{-\Omega(n x^2)}  \, dx \right)\, .
\end{align*}
Since 
\begin{align*}
\int_{-\infty}^{\infty}  |xn|^k  e^{-\Omega(n x^2)}  \, dx  &= O(n^{(k-1)/2}) 
\intertext{and}
\int_{-\infty}^{\infty}  e^{t_0 nx}  e^{-\Omega(n x^2)}  \, dx  &=e^{O(t_0^2 n)}
\end{align*}
we obtain~\eqref{eqNoCondbounds1}, \eqref{eqNoCondbounds2}, \eqref{eqNoCondbounds2b}, \eqref{eqNoCondbounds3}.

Now we transfer these bounds to the measure conditioned on $ \{ \sigma_{R_0} =\tau \}$. 
For the moment generating function we have
\begin{align*}
 \big\langle e^{t_0 (\bX - 2\alpha^+n)}   \big\rangle_{G,+, \tau} &= \frac{  \big\langle e^{t_0 (\bX - 2\alpha^+n)}  \mathbf 1_{\sigma_{R_0} = \tau}   \big\rangle_{G,+}   }{   \mu_{G,+}  ( \sigma_{R_0} = \tau  )   }  \\
 &\le  \frac{  \big\langle e^{t_0(\bX - 2\alpha^+n)}    \big\rangle_{G,+}   }{    Q^+_{R_0}(\tau) (1 -O(n^{-2 \theta}))  }  \\
 &\le e^{O ( |R_0|  )}  e^{O(t_0^2 n )} \\
 &=  e^{O(t_0^2 n )} \,,
\end{align*}
where we used that $t_0^2n =O(n h^{-2}) = O(n^{1-\theta/2})$ and $|R_0| =O(n^{\theta}) = o(n^{1-\theta/2}) $.

For the $k$th moment,
\begin{align*}
\big\langle |\bX - 2n\alpha^+|^k   \big\rangle_{G,+,\tau} &= \frac{ \big\langle |\bX - 2n\alpha^+|^k  \cdot \mathbf 1_{\sigma_{R_0} = \tau}  \big\rangle_{G,+} }{ \mu_{G,+}  ( \sigma_{R_0} = \tau  )    }  \\
&\le\frac{ \big\langle |\bX - 2n\alpha^+|^k \cdot \mathbf 1_{\sigma_{R_0} = \tau}  \cdot \mathbf 1 _{\sigma_{W_0} \in \mathcal B^c}  \big\rangle_{G,+}   + \big\langle |\bX - 2n\alpha^+|^k   \cdot \mathbf 1 _{\sigma_{W_0} \in \mathcal B}  \big\rangle_{G,+}   }{  Q^+_{R_0}(\tau) (1 -O(n^{-2 \theta})) }  \\
&\le \frac{ \big\langle |\bX - 2n\alpha^+|^k \cdot \mathbf 1_{\sigma_{R_0} = \tau}  \cdot \mathbf 1 _{\sigma_{W_0} \in \mathcal B^c}  \big\rangle_{G,+}   + (2n)^k \exp(-n^{2 \theta})  }{  Q^+_{R_0}(\tau) (1 -O(n^{-2 \theta})) }  \\
&= \frac{ \big\langle |\bX - 2n\alpha^+|^k \cdot \mathbf 1_{\sigma_{R_0} = \tau}  \cdot \mathbf 1 _{\sigma_{W_0} \in \mathcal B^c}  \big\rangle_{G,+}     }{  Q^+_{R_0}(\tau) (1 -O(n^{-2 \theta})) }  + o(1) \\
&= \frac{ \sum_{\sigma \in \Sigma_G}  |X(\sigma) - 2n \alpha^+|^k \mathbf 1_{\sigma_{R_0} = \tau}  \cdot \mathbf 1 _{\sigma_{W_0} \in \mathcal B^c}  \cdot\mu_{G,+}(\sigma)     }{  Q^+_{R_0}(\tau) (1 -O(n^{-2 \theta})) }  + o(1)  \\
&= \frac{  \sum_{\sigma \in \Sigma_G}  |X(\sigma) - 2n \alpha^+|^k \mu(\sigma_{R_0} = \tau| \sigma_{W_0})  \cdot \mathbf 1 _{\sigma_{W_0} \in \mathcal B^c} \cdot\mu_{G,+}(\sigma)     }{  Q^+_{R_0}(\tau) (1 -O(n^{-2 \theta})) }  + o(1)  \\
&=  \frac{ Q^+_{R_0}(\tau) (1 +O(n^{-3 \theta})) \big\langle |\bX - 2n\alpha^+|^k  \cdot \mathbf 1 _{\sigma_{W_0} \in \mathcal B^c}  \big\rangle_{G,+}     }{  Q^+_{R_0}(\tau) (1 -O(n^{-2 \theta})) }  + o(1) \\
&\le (1+ O(n^{-2 \theta}))   \big\langle |\bX - 2n\alpha^+|^k   \big\rangle_{G,+}  + o(1) \\
&= O(n^{(k-1)/2}) \,,
\end{align*}
where we have used from Lemma~\ref{lemGadgetZ} that for all $\tau_{W_0}\in \mathcal B^c$,
\[ \mu_{G,+}  (\sigma_{R_0} = \tau  | \sigma_{W_0} =\tau_{W_0}) = Q_{R_0}^+(\tau) (1+ O( n^{-3 \theta})) \,, \]
and we have used the fact that conditioned on $\sigma_{W_0}$, $X(\sigma)$ and $\sigma_{R_0}$ are independent (using here that in this section $X(\sigma)$ only counts $+$ spins to the vertices of $G'$).

\section{Extremal bounds on the mean magnetization}
\label{secExtremal}

The proof of Theorem~\ref{thmExtremal} is based on that of Krinsky in~\cite{krinsky1975bethe} which applies to lattices such as $\mathbb{Z}^d$.  We require some simple calculus facts recorded in the following lemma.
\begin{lemma}\label{lemCalculus}
Let $\beta>0$, $h_1(x) = \artanh\big(\tanh x\tanh\tfrac{\beta}{2}\big)$, and let $h_2(x,y) = \tanh\big(x + \artanh(\tanh y\tanh\tfrac{\beta}{2})\big)$. 
Then $h_1$ is strictly concave on $(0,\infty)$, $h_2$ is strictly concave when $x,y\in(0,\infty)$, and $h_2(x,x)$ is an increasing function of $x$.
\end{lemma}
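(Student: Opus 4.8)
The plan is to reduce all three claims to explicit first- and second-derivative computations, using the abbreviation $t := \tanh\tfrac{\beta}{2}$, which lies in $(0,1)$ since $\beta>0$.

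First I would handle $h_1$. Writing $u=\tanh x$ (so $u\in(0,1)$ for $x>0$ and $\tfrac{du}{dx}=1-u^2$), a direct computation gives
\[ h_1'(x) = \frac{t\,(1-u^2)}{1-t^2u^2}, \qquad h_1''(x) = -\,\frac{2tu(1-t^2)(1-u^2)}{(1-t^2u^2)^2}. \]
All factors in $h_1'$ are positive, so $h_1'>0$ on $(0,\infty)$; and the numerator of $-h_1''$ is a product of positive quantities, so $h_1''<0$ on $(0,\infty)$, proving strict concavity of $h_1$. (I would also record $h_1>0$ on $(0,\infty)$, which is immediate since $h_1(x)=\artanh$ of a strictly positive number.)

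Next, for $h_2(x,y)=\tanh\!\big(g(x,y)\big)$ with $g(x,y)=x+h_1(y)$, I would compute the Hessian via the chain rule. Using $\partial_x g=1$, $\partial_y g=h_1'(y)$, and $(\tanh)'=\sech^2$, $(\tanh)''=-2\sech^2\tanh$, one gets
\[ \nabla^2 h_2 = \sech^2 g \begin{pmatrix} -2\tanh g & -2\tanh g\,h_1'(y) \\ -2\tanh g\,h_1'(y) & -2\tanh g\,h_1'(y)^2 + h_1''(y) \end{pmatrix}. \]
The determinant of the bracketed matrix simplifies (the $h_1'(y)^2$ terms cancel) to exactly $-2\tanh(g)\,h_1''(y)$. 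On $(0,\infty)^2$ we have $g=x+h_1(y)>0$, hence $\tanh g>0$, and $h_1''(y)<0$ by the first step, so this determinant is strictly positive; meanwhile the trace $-2\tanh g-2\tanh g\,h_1'(y)^2+h_1''(y)$ is a sum of three non-positive terms with at least one strictly negative, hence negative. Since $\sech^2 g>0$, the Hessian is negative definite, giving strict concavity of $h_2$ on $(0,\infty)^2$.

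Finally, for the monotonicity of $\phi(x):=h_2(x,x)$, the chain rule gives $\phi'(x)=\sech^2\!\big(x+h_1(x)\big)\,(1+h_1'(x))$, which is strictly positive because $h_1'(x)>0$ from the first step. This finishes all three parts. There is no real obstacle here beyond careful bookkeeping of the derivatives; the one point worth flagging is the use of $h_1>0$ on $(0,\infty)$ to ensure $g>0$ and hence $\tanh g>0$ in the Hessian determinant, and the algebraic cancellation that makes that determinant factor cleanly as $-2\tanh(g)h_1''(y)$.
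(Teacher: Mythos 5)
Your proof is correct and follows essentially the same route as the paper: a direct second-derivative check for $h_1$, negative-definiteness of the Hessian of $h_2$ (you use determinant plus trace, the paper uses determinant plus $\partial^2_x h_2<0$; both are valid), and positivity of $\frac{d}{dx}h_2(x,x)$. Your organization via $h_2=\tanh\circ\,(x+h_1(y))$, with the cancellation giving $\det\nabla^2 h_2 \propto -\tanh(g)\,h_1''(y)$, is a slightly cleaner bookkeeping than the paper's explicit hyperbolic expressions, but the substance is the same.
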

\begin{proof}
    For the first statement, note that
    \[ h_1''(x) = -\frac{2 \sinh \beta \sinh (2 x)}{(\cosh \beta+\cosh (2 x))^2}.\]
    Now let $g=\artanh$. 
    For the second statement, note that the Hessian matrix of $h_2$ has determinant
    \[ \frac{\sinh \beta \sinh (2 y)  \tanh \big(x+g(\tanh y \tanh\tfrac{\beta}{2})\big) }{\cosh^2\big(\tfrac{\beta}{2}-y\big) \cosh^2\big(\tfrac{\beta}{2}+y\big)\cosh^4\big(x+g(\tanh y \tanh\tfrac{\beta}{2})\big)} > 0, \] 
    and
    \[ \frac{\partial^2}{\partial x^2}h_2(x,y) = -\frac{2 \sinh \left(g(\tanh y \tanh\tfrac{\beta}{2})+x\right)}{\cosh^3\left(g(\tanh y \tanh\tfrac{\beta}{2})+x\right)} < 0. \]
    This means that the Hessian is negative definite when $x,y\in(0,\infty)$.
    Finally, observe that 
    \[ \frac{\partial}{\partial x}h_2(x,x) = \left(\frac{\sinh \beta}{\cosh \beta+\cosh (2 x)}+1\right) \sech^2\big(x+g(\tanh x\tanh\tfrac{\beta}{2})\big) >0.\qedhere \]
\end{proof}

\begin{proof}[Proof of Theorem~\ref{thmExtremal}]
Let $G=(V,E)$ be a graph and consider the ferromagnetic Ising model with partition function
\begin{align*}
    Z_G(\vec\beta,\lam)  =  \sum_{\sigma \in \Sigma_G}  e^{\frac{1}{2} \sum_{uv \in E(G)} \beta_{uv}\sigma_u \sigma _v} \lam^{M(\sigma)}  \,,
\end{align*}
where we allow each edge $uv\in E$ to have its own inverse temperature parameter $\beta_{uv}$. Specializing to $\beta_{uv} = \beta$ for all edges $uv$, we recover the definition used elsewhere in this work.

When $\beta$ and $\lam$ are understood from context, given a function $f$ with domain $\Sigma_G$, let $\langle f \rangle_G$ be the expected value of $f$ with respect to the Ising model on $G$. 
We also write $\langle f \rangle_{G-uv}$ for the expected value of $f$ with respect to the Ising model on the graph formed from $G$ by removing the edge $uv$, which is equivalent to setting the parameter $\beta_{uv}$ to zero. We extend this notation to $\langle f \rangle_{G-F}$ when we want to remove some set $F$ of edges.

A key feature of the ferromagnetic Ising model with non-negative external field is the following list of Griffiths' inequalities~\cite{griffiths1967correlations}, also known as the GKS inequalities after Kelly and Sherman who generalized Griffiths' work~\cite{kelly1968general}. 
For $A\subset V$, let $\sigma_A = \prod_{v\in A}\sigma_v$. 
Then we have for any graph $G=(V,E)$, $A,B\subset V$, and $uv\in E$,
\begin{align}
    \langle \sigma_A \rangle &\ge 0 \label{eqGr1}\\
    \langle \sigma_A\sigma_B\rangle - \langle \sigma_A\rangle\langle\sigma_B\rangle &\ge 0 \label{eqGr2}\\
    \langle \sigma_A\rangle - \langle \sigma_A\rangle_{G-uv} &\ge 0.\label{eqGr3}
\end{align}
In fact,~\eqref{eqGr2} implies~\eqref{eqGr3} by considering the derivative of $\langle \sigma_A\rangle$ with respect to $\beta_{uv}$. With $B=\{u,v\}$ we have 
\[ \frac{\partial}{\partial \beta_{uv}}\langle\sigma_A\rangle = \frac{1}{2}\langle \sigma_A\sigma_B\rangle - \frac{1}{2}\langle \sigma_A\rangle\langle\sigma_B\rangle \ge 0,\]
where the inequality is by~\eqref{eqGr2}. 

We will apply Griffiths' inequalities during some careful manipulation of expectations using the following identities.
For $s=\pm 1$ and any $\beta$ we have 
\begin{equation}\label{eqExpIdentity}
    e^{\frac{\beta}{2} s} = \left(1+ s\tanh \tfrac{\beta}{2}\right)\cosh \tfrac{\beta}{2},
\end{equation}
which follows from the definitions of the hyperbolic functions in terms of exponential functions. 
We also use the addition formula
\[ \artanh\left(\frac{x+y}{1+xy}\right) = \artanh x + \artanh y. \]

From now on, we work with $\beta_{uv}=\beta$ for all $uv\in E$.
Let $u\in V$ and $v, w\in N(u)$ with $v\ne w$. 
Applying~\eqref{eqExpIdentity} to the term $e^{\tfrac{\beta}{2}\sigma_u\sigma_v}$ which occurs in both the numerator and the denominator of $\langle \sigma_{u}\rangle$ gives
\begin{equation}
    \langle \sigma_u \rangle = \frac{\langle \sigma_u\rangle_{G-uv} + \langle \sigma_{v}\rangle_{G-uv}\tanh \frac{\beta}{2}}{1+\langle\sigma_u\sigma_{v}\rangle_{G-uv}\tanh \frac{\beta}{2}},
\end{equation}
and the same identity applied to the edge $uw$ in $G-uv$ gives
\begin{equation}
    \langle \sigma_u \rangle_{G-uv} = \frac{\langle \sigma_u\rangle_{G-uv-uw} + \langle \sigma_{w}\rangle_{G-uv-uw}\tanh \frac{\beta}{2}}{1+\langle\sigma_u\sigma_w\rangle_{G-uv-uw}\tanh \frac{\beta}{2}}.
\end{equation}
To each of these we apply~\eqref{eqGr2} to the expectation in the denominator, giving
\begin{align}
    \label{eqPreArtanh}
    \langle \sigma_u \rangle &\le \frac{\langle \sigma_u\rangle_{G-uv} + \langle \sigma_{v}\rangle_{G-uv}\tanh\tfrac{\beta}{2}}{1+\langle\sigma_u\rangle_{G-uv}\langle\sigma_{v}\rangle_{G-uv}\tanh\tfrac{\beta}{2}}, \text{ and}
    \\\label{eqPreArtanhuv}
    \langle \sigma_u \rangle_{G-uv} &\le \frac{\langle \sigma_u\rangle_{G-uv-uw} + \langle \sigma_{w}\rangle_{G-uv-uw}\tanh\tfrac{\beta}{2}}{1+\langle\sigma_u\rangle_{G-uv-uw}\langle\sigma_w\rangle_{G-uv-uw}\tanh\tfrac{\beta}{2}}.
\end{align}
Applying $g=\artanh$ to both sides of~\eqref{eqPreArtanh} and using the addition formula, we obtain
\begin{equation}\label{eqMagu}
    g(\langle \sigma_u \rangle) \le g(\langle \sigma_u\rangle_{G-uv}) + g(\langle \sigma_v\rangle_{G-uv}\tanh\tfrac{\beta}{2}).
\end{equation}
Doing this again with~\eqref{eqPreArtanhuv}, we also use~\eqref{eqGr3} with the edge $uw$ in the graph $G-uv$, giving $\langle \sigma_w\rangle_{G-uv-uw} \le \langle \sigma_w\rangle_{G-uw}$ and hence
\begin{equation}\label{eqToIterate}
    g(\langle \sigma_u \rangle_{G-uv}) \le g(\langle \sigma_u\rangle_{G-uv-uw}) + g(\langle \sigma_w\rangle_{G-uw}\tanh\tfrac{\beta}{2}).
\end{equation}

Observe that we can iterate the process used to obtain~\eqref{eqToIterate} over each $w\in N(u)\setminus\{v\}$ in turn to obtain
\begin{equation}\label{eqPreSymmetry}
    g(\langle \sigma_u \rangle_{G-uv}) \le \log\lam + \sum_{w\in N(u)\setminus \{v\}}g(\langle\sigma_{w}\rangle_{G-uw}\tanh\tfrac{\beta}{2}),
\end{equation}
where we have used the fact that removing the set $F_u := \{uv : v\in N(u)\}$ of edges incident to $u$ we have
\[ \langle \sigma_u\rangle_{G-F_u} = \frac{\lam -\lam^{-1}}{\lam +\lam^{-1}} = \tanh (\log \lam) \]because $u$ is an isolated vertex in $G-F_u$.

At this point, Krinsky assumes that $G$ is both edge and vertex transitive so that for some $L>0$ and for all $uv\in E$ we have 
\[ \langle \sigma_u \rangle_{G-uv} = \langle \sigma_{v} \rangle_{G-uv} = \tanh L.\]
In this special case,~\eqref{eqPreSymmetry} becomes
\[ L \le \log\lam + (\Delta-1)g(\tanh L\tanh\tfrac{\beta}{2}), \]
where $\Delta$ is the degree of a vertex in $G$. This implies that $L$ is bounded above by the largest solution $L^*$ to 
\begin{equation}\label{eqSymmetryL*}
    L^* = \log\lam + (\Delta-1)g(\tanh L^*\tanh\tfrac{\beta}{2}).
\end{equation}
Plugging this into~\eqref{eqMagu} and observing that for any vertex-transitive graph $\langle\sigma_u\rangle$ is the mean magnetization $\eta_G$, we have 
\[ \eta_G \le \tanh\big(L^*+g(\tanh L^*\tanh\tfrac{\beta}{2})\big). \]
The right-hand side is precisely $\eta^+_{\Delta,\beta,\lam}$, the mean magnetization of the $+$ measure on the infinite $\Delta$-regular tree, which one can derive from first principles (as in, e.g.,~\cite{Bax82}). 
In fact, it suffices to observe that every inequality we applied to obtain this bound holds with equality in the tree. 
That is, in the infinite $\Delta$-regular tree $\langle\sigma_u\sigma_{v}\rangle_{G-uv} = \langle\sigma_u\rangle_{G-uv}\langle\sigma_{v}\rangle_{G-uv}$ since removing $uv$ leaves $u$ and $v$ in different connected components so $\sigma_u$ and $\sigma_v$ are independent. 
Similarly, we have $\langle\sigma_u\sigma_w\rangle_{G-uv-uw} = \langle\sigma_u\rangle_{G-uv-uw}\langle\sigma_w\rangle_{G-uv-uw}$ and since $w$ is in a different component from the edge $uv$ after $uw$ is removed, $\langle \sigma_w\rangle_{G-uv-uw} = \langle \sigma_w\rangle_{G-uw}$.
The tree is edge and vertex transitive, proving that the derived upper bound is given by the mean magnetization of some measure on the tree. 
As the $+$ measure stochastically dominates all other translation-invariant measures on the tree, it corresponds to the largest solution $L^*$ to~\eqref{eqSymmetryL*}.

We now will apply this argument to a finite graph $G$ that is not necessarily vertex or edge transitive. 
First, we observe that we can reduce to the case that $G$ is regular by a well-known construction.
Suppose that $G$ has maximum degree $\Delta$ but minimum degree $\delta\le \Delta-1$. 
We construct a graph $H$ with minimum degree $\delta+1$ such that $\eta_G \le \eta_H$. 
Let $H_0$ be formed from the disjoint union of two copies of $G$, so that the mean magnetization of $H_0$ is equal to that of $G$, 
For $i\ge 0$, if there is a vertex $u$ of degree $\delta$ in $H_i$ let $H_{i+1}$ be formed from $H_i$ by connecting $u$ to its copy in the other copy of $G$.
The inequality~\eqref{eqGr3} shows that $\eta_{H_i}$ is non-decreasing as $i$ increases because adding the edge can only increase any term $\langle\sigma_v\rangle$, and the mean magnetization is the average of these terms over all vertices $v$.
When the process terminates at some $H$, the minimum degree of $H$ is $\delta+1$, and we cannot have decreased the mean magnetization. 
Iterating this construction, we can obtain a $\Delta$-regular graph $H$ whose mean magnetization is an upper bound on the mean magnetization of $G$, hence it suffices to prove the theorem in the case of a $\Delta$-regular graph.

In a $\Delta$-regular graph, careful averaging and applications of Jensen's inequality allow us to recover the result obtained for edge and vertex transitive graphs.
We can interpret~\eqref{eqPreSymmetry} as a property of an oriented edge $\overrightarrow{uv}$, and average over all edges incident to $u$ oriented away from $u$. Let $L_{\overrightarrow{uv}}$ be given by $\tanh L_{\overrightarrow{uv}}=\langle \sigma_u \rangle_{G-uv}$, so that averaging~\eqref{eqPreSymmetry} over $v\in N(u)$ gives
\begin{equation}\label{avgNu}
    \frac{1}{\Delta}\sum_{v\in N(u)} L_{\overrightarrow{uv}} \le \log\lam + \frac{\Delta-1}{\Delta}\sum_{v\in N(u)} g(\tanh L_{\overrightarrow{vu}}\tanh\tfrac{\beta}{2}).
\end{equation}
By Lemma~\ref{lemCalculus}, the function $x \mapsto g\big(\tanh x\tanh\tfrac{\beta}{2}\big)$ is concave on $(0,\infty)$.
This means that \eqref{avgNu} and Jensen's inequality give 
\begin{equation}\label{afterJensen}
    \frac{1}{\Delta}\sum_{v\in N(u)} L_{\overrightarrow{uv}} \le \log\lam +(\Delta-1) g\bigg(\tanh\Big[\frac{1}{\Delta}\sum_{v\in N(u)}L_{\overrightarrow{vu}}\Big]\tanh\tfrac{\beta}{2}\bigg).
\end{equation}
To clean this up, we define
\begin{align*} 
    A_u &:= \frac{1}{\Delta}\sum_{v\in N(u)} L_{\overrightarrow{uv}},&
    B_u &:= \frac{1}{\Delta}\sum_{v\in N(u)} L_{\overrightarrow{vu}}, 
\end{align*}
so that~\eqref{afterJensen} gives 
\begin{equation}\label{AuBu}
    A_u \le \log\lam +(\Delta-1) g\left(\tanh B_u \tanh\tfrac{\beta}{2}\right).
\end{equation}
This we average over a uniform random $u\in V$ and again appeal to concavity. 
Here we finally obtain the desired equation because the averages satisfy 
\[ \overline L := \frac{1}{\Delta n}\sum_{uv\in E}\left(L_{\overrightarrow{uv}} + L_{\overrightarrow{vu}}\right) = \frac{1}{\Delta n}\sum_{u\in V}\sum_{v\in N(u)}L_{\overrightarrow{uv}} = \frac{1}{\Delta n}\sum_{u\in V}\sum_{v\in N(u)}L_{\overrightarrow{vu}}, \]
so an application of Jenssen's inequality gives for $\overline L$ what we had for $L$ in the case of a transitive graph,
\[ \overline L \le \log\lam + (\Delta-1)g(\tanh \overline L\tanh\tfrac{\beta}{2} ). \] 
As before, this means that $\overline L\le L^*$.

To conclude the argument in the $\Delta$-regular case, we apply the same averaging trick to~\eqref{eqMagu}. 
For any edge $uv\in E$,
\[
    g(\langle \sigma_u \rangle) \le g(\langle \sigma_u\rangle_{G-uv}) + g(\langle \sigma_{v}\rangle_{G-uv}\tanh\tfrac{\beta}{2}),
\] 
so fixing $u$ and averaging over $v\in N(u)$ gives
\[
    g(\langle \sigma_u \rangle) \le \frac{1}{\Delta}\sum_{v\in N(u)}g(\langle \sigma_u\rangle_{G-uv}) + \frac{1}{\Delta}\sum_{v\in N(u)}g(\langle \sigma_{v}\rangle_{G-uv}\tanh\tfrac{\beta}{2}).
\] 
Applying Jensen's inequality again, we have
\[
    g(\langle \sigma_u \rangle) \le A_u + g(\tanh B_u\tanh\tfrac{\beta}{2}),
\] 
and so
\begin{equation}\label{eqsigmau}
    \langle \sigma_u \rangle \le \tanh(A_u + g(\tanh B_u\tanh\tfrac{\beta}{2})).
\end{equation}
By Lemma~\ref{lemCalculus}, the right-hand side is concave as a function of $A_u$ and $B_u$.
Averaging~\eqref{eqsigmau} over $u\in V$ and applying Jensen's inequality, we conclude 
\begin{align*}
    \eta_G = \frac{1}{n}\sum_{u\in V}\langle \sigma_u \rangle 
    &\le \tanh( \overline L + g(\tanh \overline L\tanh\tfrac{\beta}{2}))\\
    &\le \tanh( L^* + g(\tanh L^*\tanh\tfrac{\beta}{2})) = \eta_{\Delta,\beta,\lam}^+,
\end{align*}
using the fact that $x\mapsto \tanh\big(x + g(\tanh x\tanh\tfrac{\beta}{2})\big)$ is non-decreasing proved in Lemma~\ref{lemCalculus}.
\end{proof}

\section{Algorithms}
\label{secAlgorithms}

In this section we prove Theorems~\ref{thmAlgFixedSubcriticalBeta} and~\ref{thmFixedSupercriticalBeta}\ref{algFixedSuper}. By symmetry, it suffices to consider the case when $\eta\ge0$.  We can exclude the trivial case $\eta =1$ since there is just a single spin configuration in that case.

We will use several ingredients from Section~\ref{secPrelim}.  Fix $\beta, \eta, \Delta$ satisfying the conditions of either theorem, and let $G$ be a graph of maximum degree $\Delta$ on $n$ vertices.  Let $\ell = \lfloor \frac{1+\eta}{2}  n\rfloor$ so that our goal is to sample an Ising configuration $\sigma$ with $X(\sigma) = \ell$.  

Since we can efficiently sample from $\mu_{G,\beta,\lam}$ for any $\lam$ via Theorem~\ref{thmIsingFPRAS}, we can perform a binary search on values of $\lam$, estimating $\langle  \bX \rangle_{G,\beta,\lam}$, to find a $\lam$ so that
\begin{equation}
\label{eqLamclose}
 \left|  \langle  \bX \rangle_{G,\beta,\lam}   - \ell  \right|  = o(\sqrt{n}) \,. 
 \end{equation}
Given such an activity $\lam$, we will  approximately sample from $\mu_{G,\beta,\lam}$ until we sample a configuration $\sigma$ with  $X(\sigma) =\ell$ and then output $\sigma$.  For this algorithm to be efficient we must ensure that the probability of hitting this value is not too small; in fact, we will show that it is $\Theta(n^{-1/2})$.    

For Theorem~\ref{thmAlgFixedSubcriticalBeta} this follows immediately from Proposition~\ref{propLocalCLTalg} which provides a local central limit theorem and $\Theta(n)$ variance for $\bX$ for $\beta< \beta_c(\Delta)$ and any activity $\lam$.   

For Theorem~\ref{thmFixedSupercriticalBeta}, we need to ensure that we can find $\lam$  satisfying~\eqref{eqLamclose} that is bounded away from $1$ independently of $n$ so we can apply Proposition~\ref{propLocalCLTalg}.   This is guaranteed by the extremal result, Theorem~\ref{thmExtremal}, and the conditions of the Theorem~\ref{thmFixedSupercriticalBeta}.  In particular, because $\eta > \eta_+(\beta, \Delta)$ (and by continuity of the magnetization of the $+$ measure on the tree) there is some $\lam_{\min} > 1$ so that $\eta = \eta^+_{\Delta, \beta, \lam_{\min}}$.  Theorem~\ref{thmExtremal} then says that to achieve mean magnetization $\eta$ on any $G \in \cG_{\Delta}$ we must take $\lam \ge \lam_{\min}$, thus giving the required uniform bound away from $1$.

In what follows we give the details of the approach.  We note that the running time of our algorithm could certainly be improved by using a faster Ising sampler (e.g.~\cite{mossel2013exact}) or by using the techniques of~\cite{jain2021approximate}, but here we will not try to optimize the running time beyond finding polynomial-time algorithms.

The existence of the efficient sampling schemes in  Theorems~\ref{thmAlgFixedSubcriticalBeta} and~\ref{thmFixedSupercriticalBeta}\ref{algFixedSuper} are proved in Theorem~\ref{thmSampleK} in Section~\ref{subsecSample}. The existence of approximate counting algorithms follows from the sampling algorithms via a standard reduction. We provide the details in Appendix~\ref{secCounting}.

\subsection{Bounds on the activity}
Here we prove a lemma guaranteeing the existence of a good activity $\lam$ for use in our sampling algorithms.
We write $\lam^{-1}_{\Delta,\beta}(\eta)$ for the value of $\lam$ such that $\eta^+_{\beta,\Delta,\lam}=\eta$.  In particular, when $\eta > \eta_+ (\beta,\Delta)$ we have $\lam^{-1}_{\Delta,\beta}(\eta) > 1$.

\begin{lemma}\label{lemExistLambda}
    Let $\Delta\ge 3$, $\beta< \beta_c(\Delta)$, and $\eta \in [0,1)$ be fixed. Let $\lam_{\min} = 1$ and $\lam_{\max}=\sqrt{\frac{1+\eta}{1-\eta}} e^{\beta\Delta/2}$.
    Then for any $G \in \cG_\Delta$ on $n$ vertices, there exists an integer $t \in \{ 0, \dots, \lfloor (\lam_{\max} - \lam_{\min})n \rfloor \}$ so that for $\ell = \lfloor \frac{\eta+1}{2} n  \rfloor $ and $\lam_t =\lam_{\min} +  \frac{t}{n}$ 
    we have
    \begin{equation}
        \label{eqCloseLam}
        \left |  \langle \bX \rangle_{G, \beta, \lam_t} - \ell  \right | = O(1)\,,
    \end{equation}
    where the implied constant depends only on $\Delta, \beta, \eta$. 
   The same holds for $\beta> \beta_c$ and $\eta \in (\eta_+ (\beta,\Delta), 1)$ with $\lam_{\min} = \lam^{-1}_{\Delta,\beta}(\eta) > 1$ and $\lam_{\max}=\sqrt{\frac{1+\eta}{1-\eta}} e^{\beta\Delta/2}$.
\end{lemma}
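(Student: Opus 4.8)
The plan is to establish~\eqref{eqCloseLam} by showing that the mean magnetization $\eta_G(\beta,\lam) = \langle \bX\rangle_{G,\beta,\lam}/n$ (or rather, twice that minus one) is a continuous, strictly increasing function of $\lam$ that sweeps past the target value $\eta$ at some activity safely inside the interval $[\lam_{\min},\lam_{\max}]$, and then to discretize. Concretely, I would first argue that $\lam \mapsto \langle \bX\rangle_{G,\beta,\lam}$ is continuous (it is a ratio of polynomials in $\lam$ with positive denominator $Z_G(\beta,\lam) > 0$ for $\lam > 0$) and strictly increasing in $\lam$ on $(0,\infty)$: the derivative of $\langle M(\sigma)\rangle$ with respect to $\log\lam$ equals $\var_{G,\beta,\lam}(M(\sigma)) > 0$, which is a standard computation. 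So $\eta_G(\beta,\lam)$ is a strictly increasing bijection from some subinterval of $(0,\infty)$ onto its range.

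\textbf{Bracketing the target.} Next I would check the two endpoints. For the lower endpoint: when $\beta < \beta_c$ and $\lam_{\min}=1$, symmetry gives $\eta_G(\beta,1) = 0 \le \eta$; when $\beta > \beta_c$ and $\lam_{\min} = \lam^{-1}_{\Delta,\beta}(\eta) > 1$, Theorem~\ref{thmExtremal} gives $\eta_G(\beta,\lam_{\min}) \le \eta^+_{\Delta,\beta,\lam_{\min}} = \eta$. Either way $\langle \bX\rangle_{G,\beta,\lam_{\min}} \le \frac{1+\eta}{2}n$. For the upper endpoint $\lam_{\max}=\sqrt{\frac{1+\eta}{1-\eta}}\,e^{\beta\Delta/2}$: I want $\eta_G(\beta,\lam_{\max}) \ge \eta$, i.e.\ $\Pr[\sigma_v = +1]$ is large on average. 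A crude single-vertex bound suffices: for any vertex $v$, flipping $v$ from $-1$ to $+1$ changes the weight by a factor at most $e^{\beta\Delta/2}\cdot e^{\beta\Delta/2} = e^{\beta\Delta}$... more carefully, conditioning on all other spins, the odds ratio $\Pr[\sigma_v=+1\mid \sigma_{\sim v}]/\Pr[\sigma_v=-1\mid\sigma_{\sim v}] = \lam^2 e^{\beta\sum_{u\sim v}\sigma_u} \ge \lam^2 e^{-\beta\deg(v)} \ge \lam_{\max}^2 e^{-\beta\Delta} = \frac{1+\eta}{1-\eta}$, so $\Pr[\sigma_v=+1\mid\sigma_{\sim v}]\ge \frac{1+\eta}{2}$ pointwise, hence $\langle \bX\rangle_{G,\beta,\lam_{\max}} \ge \frac{1+\eta}{2}n \ge \ell$. (In the supercritical case $\beta > \beta_c$ one should double-check $\lam_{\min} < \lam_{\max}$, which holds since $\lam^{-1}_{\Delta,\beta}(\eta)$ is a fixed constant while $\lam_{\max}$ can be taken as large as needed, or one verifies the explicit inequality; this is a routine check.)

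\textbf{Discretization.} By the intermediate value theorem there is a real $\lam^* \in [\lam_{\min},\lam_{\max}]$ with $\langle\bX\rangle_{G,\beta,\lam^*} = \ell$ exactly. Now take the grid point $\lam_t = \lam_{\min} + t/n$ closest to $\lam^*$ from below, so $|\lam_t - \lam^*| \le 1/n$. To transfer the bound I would control the Lipschitz constant of $\lam\mapsto\langle\bX\rangle_{G,\beta,\lam}$ on $[\lam_{\min},\lam_{\max}]$: since $\frac{d}{d\lam}\langle\bX\rangle = \frac{1}{2\lam}\var_{G,\beta,\lam}(M(\sigma))$ and $\var(M(\sigma)) \le n^2$ trivially, this gives a Lipschitz bound of order $n^2/\lam_{\min}$, yielding only $|\langle\bX\rangle_{G,\beta,\lam_t} - \ell| = O(n)$ — too weak. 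The fix, and \textbf{the step I expect to be the main obstacle}, is to bound $\var_{G,\beta,\lam}(M(\sigma)) = O(n)$ uniformly for $\lam\in[\lam_{\min},\lam_{\max}]$: for $\beta<\beta_c$ this is the variance bound from Proposition~\ref{propLocalCLTalg} (valid for all $\lam\ge 1$), and for $\beta>\beta_c$ it holds because every $\lam$ in the grid is $\ge\lam_{\min}>1$, again within the scope of Proposition~\ref{propLocalCLTalg}, whose implied constants depend only on $\Delta,\beta$ and the lower bound $\lam_{\min}$ on the activity. With $\var(M) = O(n)$ uniformly, the Lipschitz constant is $O(n)$, so $|\lam_t-\lam^*|\le 1/n$ gives $|\langle\bX\rangle_{G,\beta,\lam_t} - \langle\bX\rangle_{G,\beta,\lam^*}| = O(1)$, and since $\langle\bX\rangle_{G,\beta,\lam^*}=\ell$ we obtain~\eqref{eqCloseLam}. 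Finally $t$ ranges over $\{0,\dots,\lfloor(\lam_{\max}-\lam_{\min})n\rfloor\}$ by construction, completing the proof.
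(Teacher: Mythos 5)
Your proposal is correct and takes essentially the same route as the paper: the paper likewise localizes the relevant activity in $[\lam_{\min},\lam_{\max}]$ via Theorem~\ref{thmExtremal} together with the same single-site conditional-odds computation (stated there as Lemma~\ref{lemLamBound}), and then uses the identity $\frac{\partial}{\partial\lam}\langle\bX\rangle_{G,\beta,\lam}=\Theta\big(\var(\bX)/\lam\big)$ with the $\Theta(n)$ variance bound of Proposition~\ref{propLocalCLTalg} to conclude that $\langle\bX\rangle_{G,\beta,\lam}$ changes by only $O(1)$ over each grid step of length $1/n$. The only cosmetic difference is that you bracket $\ell$ at the endpoints and invoke the intermediate value theorem (the harmless rounding case $\langle\bX\rangle_{G,\beta,\lam_{\min}}\in(\ell,\ell+1)$ being absorbed into the $O(1)$ at $t=0$), whereas the paper phrases the same step as locating the activity whose mean magnetization is exactly $\eta$ inside $[\lam_{\min},\lam_{\max}]$.
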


Before we prove Lemma~\ref{lemExistLambda}, we need one simple bound on the magnetization of a bounded-degree graph.

\begin{lemma}\label{lemLamBound}
    For all $\Delta \ge 1$, $\beta \ge 0$, $\eta \in [0,1)$, and any $G \in \cG_{\Delta}$, the  value of $\lam$ such that the  mean magnetization $\eta_G(\beta,\lam)$ is exactly $\eta$ satisfies 
    \[1  \le \lam \le  \sqrt{\frac{1+\eta}{1-\eta}} e^{\beta\Delta/2} \,. \]
\end{lemma}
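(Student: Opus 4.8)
The plan is to derive both bounds from an analysis of the single-site marginals, together with the standard monotonicity of the magnetization in the activity. First I would record that $\eta_G(\beta,\lam)$ is a strictly increasing continuous function of $\lam\in(0,\infty)$ with $\eta_G(\beta,\lam)\to-1$ as $\lam\to0$ and $\eta_G(\beta,\lam)\to1$ as $\lam\to\infty$: writing $\lam=e^h$ one has $\tfrac{d}{dh}\langle M(\sigma)\rangle_{G,\beta,\lam}=\var_{G,\beta,\lam}(M(\sigma))$, which is strictly positive because on a finite graph with finite parameters every configuration has positive probability and $M$ is non-constant. Hence ``the value of $\lam$ such that $\eta_G(\beta,\lam)=\eta$'' is well defined, and since $\eta_G(\beta,1)=0$ by the spin-flip symmetry $\sigma\mapsto-\sigma$ while $\eta\ge0$, monotonicity forces $\lam\ge1$. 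This gives the claimed lower bound.

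For the upper bound, the key step is a pointwise lower bound on the marginal $p_v:=\mu_{G,\beta,\lam}(\sigma_v=+1)$. Fix a vertex $v$ and let $\sigma\mapsto\sigma^v$ denote the involution on $\Sigma_G$ that flips the spin at $v$. Comparing the unnormalized Ising weights $w(\sigma)=e^{\frac{\beta}{2}\sum_{uv\in E(G)}\sigma_u\sigma_v}\lam^{M(\sigma)}$ of a configuration $\sigma$ with $\sigma_v=-1$ and of $\sigma^v$, only the activity factor at $v$ and the $\deg(v)$ edge factors incident to $v$ change, so
\[ \frac{w(\sigma^v)}{w(\sigma)} = \lam^{2}\,e^{\beta\sum_{u\in N(v)}\sigma_u}\ \ge\ \lam^{2}e^{-\beta\Delta}, \]
using $\beta\ge0$ and $\deg(v)\le\Delta$. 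Summing the resulting inequality $\mu_{G,\beta,\lam}(\sigma^v)\ge \lam^2 e^{-\beta\Delta}\,\mu_{G,\beta,\lam}(\sigma)$ over all $\sigma$ with $\sigma_v=-1$ (the map $\sigma\mapsto\sigma^v$ being a bijection from $\{\sigma_v=-1\}$ to $\{\sigma_v=+1\}$) gives $p_v\ge\lam^2 e^{-\beta\Delta}(1-p_v)$, i.e.
\[ p_v\ \ge\ \frac{\lam^2 e^{-\beta\Delta}}{1+\lam^2 e^{-\beta\Delta}}. \]

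Finally I would sum over the $n=|V(G)|$ vertices: $\langle X(\sigma)\rangle_{G,\beta,\lam}=\sum_{v}p_v\ge n\cdot\tfrac{\lam^2 e^{-\beta\Delta}}{1+\lam^2 e^{-\beta\Delta}}$, and since $M(\sigma)=2X(\sigma)-n$ this reads $\eta_G(\beta,\lam)\ge\tfrac{s-1}{s+1}$ with $s:=\lam^2 e^{-\beta\Delta}$. Imposing $\eta_G(\beta,\lam)=\eta$ and solving $\eta\ge\tfrac{s-1}{s+1}$ for $s$ (legitimate since $\eta<1$) yields $s\le\tfrac{1+\eta}{1-\eta}$, that is $\lam\le\sqrt{\tfrac{1+\eta}{1-\eta}}\,e^{\beta\Delta/2}$, as required. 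I do not anticipate a genuine obstacle: the whole argument is a one-site tilt estimate plus the monotonicity remark, and the only point that needs a careful sentence is the well-definedness of the $\lam$ in the statement, which follows from the strict monotonicity of $\eta_G(\beta,\cdot)$; everything else is elementary rearrangement.
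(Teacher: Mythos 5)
Your proof is correct and follows essentially the same route as the paper: the lower bound is symmetry (plus the monotonicity remark), and your flip-at-$v$ weight comparison is just the paper's single-site conditional-marginal computation in disguise, with the same worst case (all neighbors $-$, i.e.\ $Y=-\Delta$) yielding the identical inequality $\eta \ge \frac{\lam^2 e^{-\beta\Delta}-1}{\lam^2 e^{-\beta\Delta}+1}$ and hence the same bound on $\lam$.
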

\begin{proof}
  The lower bound follows from symmetry of the Ising model.  For the upper bound, let $\sigma\in \Sigma_G$ be drawn from $\mu_{G,\beta,\lam}$ and suppose that $\eta=\eta_G(\beta,\lam)$. 
    Let $u$ be a uniform random vertex of $G$, and let $Y=M(\sigma_{N(u)})$ be the magnetization of the neighbors of $u$.
    That is, when $u$ has exactly $j$ neighbors with spin $+$, $Y=2j-\deg(u)$.  
    Then a direct computation in the Ising model gives
    \[ \eta = \left\langle\frac{\lam e^{\frac{\beta}{2}Y} - \lam^{-1}e^{-\frac{\beta}{2}Y}}{\lam e^{\frac{\beta}{2}Y} + \lam^{-1}e^{-\frac{\beta}{2}Y}}\right\rangle_{G,\beta,\lam}= \left\langle\frac{\lam^2 - e^{-\beta Y}}{\lam^2 + e^{-\beta Y}}\right\rangle_{G,\beta,\lam}. \] 
    Since $Y\ge-\Delta$ and the function of $Y$ inside the expectation is increasing for $\beta>0$, we immediately obtain
    \[ \eta \ge \frac{\lam^2-e^{\beta\Delta}}{\lam^2 + e^{\beta\Delta}} \Longleftrightarrow \lam \le \sqrt{\frac{1+\eta}{1-\eta}} e^{\beta\Delta/2}.  \qedhere\]
\end{proof}

We now prove  Lemma~\ref{lemExistLambda}.
\begin{proof}[Proof of Lemma~\ref{lemExistLambda}]

By  Theorem~\ref{thmExtremal} and  Lemma~\ref{lemLamBound}, the value of $\lam$ such that $\eta_G(\beta,\lam)=\eta$ satisfies $ \lam_{\min} \le\lam\le \lam_{\max}$. 
A standard calculation gives that
\[ \frac{\partial}{\partial\lam}\eta_G(\beta,\lam) = \frac{2}{n\lam}\var(\bX) \,. \] 
Then by Proposition~\ref{propLocalCLTalg}, we have $\frac{\partial}{\partial\lam}\eta_G(\beta,\lam)   >0 $ and $\frac{\partial}{\partial\lam}\eta_G(\beta,\lam) = O(1)$.
This means that $\eta_G(\beta,\lam)$ and hence $\langle \bX \rangle_{G,\beta,\lam}$ are strictly increasing as functions of $\lam$, and that $\langle \bX \rangle_{G,\beta,\lam}$ can increase by at most $O(1)$ on any interval of length $1/n$. 
\end{proof}

\subsection{The sampling algorithm}
\label{subsecSample}

Our algorithm is the combination of a simple binary search on values of $\lam$ and repeated sampling from distributions $\hat\mu_{\beta,\lam}$ that approximate the usual Ising model $\mu_{G,\beta,\lam}$. 

\medskip
\textbf{Algorithm: Sample-$k$}
\begin{itemize}
    \item INPUT: $\Delta,\beta,\eta,\eps$ and $G \in \cG_\Delta$ on $n$ vertices.
    \item OUTPUT: $\sigma\in\Sigma_G(k)$ distributed within $\eps$ total variation distance of the fixed-magnetization Ising model $\nu_{G,\beta,k}$, where $k$ is the largest integer such that $k\equiv n\mod 2$ and $k\le\eta n$.
\end{itemize}

\begin{enumerate}
    \item Let $\lam_{\min}, \lam_{\max}$ be as given in Lemma~\ref{lemExistLambda}.
    \item For $t = 0, \dots, \lfloor (\lam_{\max}-\lam_{\min}) n \rfloor$, let $\lam_t = \lam_{\min} + t/n$.     \item Let $\Lambda_0 = \{\lam_t : t = 0, \dots,   \lfloor (\lam_{\max}-\lam_{\min}) n \rfloor\}$. 
    \item FOR $i=1, \dots , C\log n$,
          \begin{enumerate}
              \item Let $\lam$ be a median of the set $\Lambda_{i-1}$.
              \item With $N=C'n^2\log\big(\frac{\log n}{\eps}\big)$, take $N$ independent samples $\sigma_1,\dotsc$, $\sigma_N$ from a distribution $\hat \mu_{G,\beta,\lam}$ on $\Sigma_G$ such that $\| \hat \mu_{G,\beta,\lam}  -   \mu_{G,\beta,\lam}  \|_{TV} < \eps' =  \frac{1}{C N \log n}$.
               \item If there exists $j \in \{1, \dots , N\}$ so that $M(\sigma_j)=k$, then output $\sigma_j$ for the smallest such $j$ and HALT.
              \item Let $\overline k  = \frac{1}{N}  \sum_{j=1}^N M(\sigma_j)$.
             
              \item If $\overline k \le k$, let $\Lambda_i = \{ \lam' \in \Lambda_{i-1} : \lam' > \lam \}$.  If instead $\overline k > k$, let $\Lambda_i = \{ \lam' \in \Lambda_{i-1} : \lam' < \lam \}$.
          \end{enumerate}

    \item If no spin assignment of magnetization $k$ has been obtained by the end of the FOR loop (or if $\Lambda_j = \emptyset$ at any step), output a spin assignment of magnetization $k$ by taking the first $(k+n)/2$ vertices in an arbitrary ordering on $V(G)$ and setting their spins to $+$ and remaining spins to $-$.
\end{enumerate}

\begin{theorem}\label{thmSampleK}
    Let $\hat \nu_{G,\beta,k}$ be the output distribution of the algorithm Sample-$k$.  Then for $n$ large enough, 
    \begin{enumerate}
\item $\|   \hat \nu_{G,\beta,k} - \nu_{G,\beta,k} \|_{TV} <\eps$.
\item The running time of Sample-$k$ is polynomial in $n$ and $\log(1/\eps)$.
\end{enumerate}
\end{theorem}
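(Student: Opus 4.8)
The plan is to prove the two parts of Theorem~\ref{thmSampleK} --- the total-variation bound and the polynomial running time --- with nearly all the effort going into the correctness of the binary search: that it reliably stops at an activity where the target count $\ell=\lfloor\tfrac{1+\eta}{2}n\rfloor$ is hit with only inverse-polynomially small probability.

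The starting point is the structural identity that makes the scheme work: for every $\lam>0$, the measure $\mu_{G,\beta,\lam}$ conditioned on $\{\bX=\ell\}$ is exactly $\nu_{G,\beta,k}$, independently of $\lam$. So whenever Sample-$k$ outputs the first sampled configuration $\sigma_j$ in a round with $X(\sigma_j)=\ell$, the conditional law of $\sigma_j$ given the entire history is $\hat\mu_{G,\beta,\lam}(\,\cdot\mid\bX=\ell)$ for the activity $\lam$ of that round. Since the total number of samples drawn is $\mathrm{poly}(n)\cdot\mathrm{poly}(\log(1/\eps))$ and each is within $\eps'$ in total variation of the corresponding true Ising model, if $\eps'$ is a small enough polynomial in $\eps/n$ (a small constant times $\eps/(N\log n)$ suffices) we may couple all samples with exact draws from the respective $\mu_{G,\beta,\lam}$, agreeing except with probability at most $\eps/10$; on this coupling event it suffices to analyse the algorithm as if its samples were exact. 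By Lemma~\ref{lemExistLambda} together with the $\Theta(n)$ variance estimate of Proposition~\ref{propLocalCLTalg}, along the grid $\lam_0<\lam_1<\cdots$ the mean $\langle\bX\rangle_{G,\beta,\lam_t}$ is strictly increasing in $t$, has consecutive increments bounded above and below by positive constants, and comes within $O(1)$ of $\ell$; hence the set $S=\{t:|\langle\bX\rangle_{G,\beta,\lam_t}-\ell|\le C_0\}$ is a nonempty, contiguous block of indices of size $O(1)$. Moreover every $\lam_t$ in the search range satisfies $\lam_t\ge\lam_{\min}\ge 1$, with $\lam_{\min}>1$ in the supercritical regime by Theorem~\ref{thmExtremal} and the hypothesis $|\eta|>\eta_c$, so the local central limit theorem in Proposition~\ref{propLocalCLTalg} applies and gives $\mu_{G,\beta,\lam_t}(\bX=\ell)=\Omega(n^{-1/2})$ for every $t\in S$.

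I would then define a good event $\mathcal A$ (for the exact samples) and show $\Pr[\mathcal A^c]<\eps/10$. Event $\mathcal A$ requires that (i) in each of the $C\log n$ rounds the empirical mean $\overline k$ lies within $\tfrac12$ of $\langle M\rangle_{G,\beta,\lam}$, and (ii) in each round whose median index lies in $S$, some sample has $\bX=\ell$. For (i), since $|M(\sigma)|\le n$, Hoeffding's inequality with $N=\Theta\big(n^2\log(\log n/\eps)\big)$ samples gives per-round failure probability $(\log n/\eps)^{-\Omega(C')}$, and a union bound over rounds suffices when $C'$ is large; for (ii), $\mu_{G,\beta,\lam}(\bX=\ell)=\Omega(n^{-1/2})$ makes the probability that none of $N=\Omega(n^2)$ samples hits $\ell$ at most $e^{-\Omega(n^{3/2})}$. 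On $\mathcal A$ every comparison at a median whose index is \emph{outside} $S$ is correct, because there $|\langle M\rangle_{G,\beta,\lam}-k|=2|\langle\bX\rangle_{G,\beta,\lam}-\ell|>2C_0\ge 2$ while $|\overline k-\langle M\rangle_{G,\beta,\lam}|<\tfrac12$; since $S$ is a contiguous block and comparisons below it read $\overline k\le k$ while comparisons above it read $\overline k>k$, as long as no queried median has index in $S$ the whole of $S$ stays inside $\Lambda_i$. But $\Lambda_i$ roughly halves each round, so for $C$ large enough (in $\Delta,\beta,\eta$) after $C\log n$ rounds $\Lambda_i$ is reduced to at most one element that still contains the nonempty $S$ --- forcing its median into $S$, a contradiction. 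Hence on $\mathcal A$ some queried median has index in $S$ within the loop, and by (ii) the algorithm outputs some $\sigma_j$ with $X(\sigma_j)=\ell$ at that round or earlier, never reaching the fallback; conditioned on this and on the coupling event, the law of $\sigma_j$ is $\mu_{G,\beta,\lam}(\,\cdot\mid\bX=\ell)=\nu_{G,\beta,k}$ exactly. Therefore $\|\hat\nu_{G,\beta,k}-\nu_{G,\beta,k}\|_{TV}\le \Pr[\text{coupling fails}]+\Pr[\mathcal A^c]<\eps$.

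For the running time, the loop runs $O(\log n)$ times; each round computes a median in $O(n\log n)$ time and draws $N=\mathrm{poly}(n)\cdot\mathrm{poly}(\log(1/\eps))$ samples, each produced in $\mathrm{poly}(n,\log(1/\eps'))=\mathrm{poly}(n,\log(1/\eps))$ time by the efficient Ising sampler of Theorem~\ref{thmIsingFPRAS}; the fallback costs $O(n)$. So the total is polynomial in $n$ and $\log(1/\eps)$. The main obstacle throughout is the binary-search analysis above: the comparisons are genuinely unreliable inside the $O(1)$-window $S$ where $\langle\bX\rangle\approx\ell$, so one cannot simply invoke correctness of an idealized search --- the point is that the search cannot \emph{avoid} $S$ indefinitely, and the instant it enters $S$ a hit is overwhelmingly likely. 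It is equally essential that the search home in to within $O(\sqrt n)$ of $\ell$ (Sample-$k$ in fact achieves $O(1)$, via the window $S$): at distance $\omega(\sqrt n)$ Proposition~\ref{propLocalCLTalg} gives a hitting probability too small for $N$ samples to catch. This is precisely why the quantitative inputs matter --- the $\Theta(n)$ variance bound (hence $\Theta(1)$ grid increments, forcing $|S|=O(1)$), and a sample count $N$ large enough both to estimate each mean to accuracy $o(1)$ and to hit an $\Omega(n^{-1/2})$-probability target except with probability $e^{-\Omega(n^{3/2})}$.
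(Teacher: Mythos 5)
Your proposal is correct and follows essentially the same route as the paper's proof: couple the approximate samplers to exact ones, use Hoeffding to control the empirical means, use Lemma~\ref{lemExistLambda} plus the local CLT of Proposition~\ref{propLocalCLTalg} (with $\lam$ bounded away from $1$ via Theorem~\ref{thmExtremal} in the supercritical case) to get an $\Omega(n^{-1/2})$ hitting probability near the target activity, and note that any configuration output upon hitting magnetization $k$ has law exactly $\nu_{G,\beta,k}$ regardless of $\lam$. Your explicit window-$S$ analysis of the binary search is a more detailed (and slightly more careful, regarding the $O(1)$ constant from Lemma~\ref{lemExistLambda}) version of the step the paper dispatches with ``by the binary search structure of the algorithm.''
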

This proves the approximate sampling portions of Theorems~\ref{thmAlgFixedSubcriticalBeta} and~\ref{thmFixedSupercriticalBeta}\ref{algFixedSuper}. 

\begin{proof}[Proof of Theorem~\ref{thmSampleK}]
We first give the proof under the assumption that each $\hat\mu_{\beta,\lam}$ is precisely the Ising model $\mu_{G,\beta,\lam}$, and subsequently we will reduce the general case that $\hat\mu_{\beta,\lam}$ is close to $\mu_{G,\beta,\lam}$ to this case with a standard coupling argument. 

We say a \emph{failure} occurs at step $i$ in the FOR loop if either 
\begin{enumerate}[(1)]
    \item\label{itmFail1} $|n\eta_G(\beta,\lam)-\overline k| > 1/4$, or
    \item\label{itmFail2} $|n\eta_G(\beta,\lam)-k| \le 1/2$ but none of the samples $\sigma_1,\dotsc,\sigma_N$ have magnetization exactly $k$ and so the algorithm did not HALT on line (d).
\end{enumerate}
Note that avoiding~\ref{itmFail1} means that the sample mean $\overline k$ of the $N$ spin assignments sampled is close to the true mean $n\eta_G(\beta,\lam)$, and avoiding~\ref{itmFail2} means that in the case that the true mean magnetization is close to $k$, we successfully sample a spin assignment of the desired magnetization $k$.
To establish that the probability of a failure occurring at any step is at most $\eps/2$, it suffices to establish that the probability of each type of failure in a given step is at most $\eps/(4C\log n)$.
Consider an arbitrary step $i$ with the value of $\lam$ assigned for that step in line (a), and note that $\overline k$ is the mean of $N$ independent samples from $\hat\mu_{\beta,\lam}$. Under the assumption that $\hat\mu_{\beta,\lam}=\mu_{G,\beta,\lam}$, we have $\EE\overline k=n\eta_G(\beta,\lam)$, so by Hoeffding's inequality we have 
\[ \Pr(|n\eta_G(\beta,\lam)-\overline k| > 1/4) \le 2e^{-N/(32n^2)}. \]
This is at most the desired $\eps/(4C\log n)$ when $N\ge \Omega(n^2\log(\log(n)/\eps))$.

For the second type of failure, we suppose that the current value of $\lam$ means that $|n\eta_G(\beta,\lam)-k| \le O(1)$, but that none of the $N$ samples from $\hat\mu_{\beta,\lam}$ give a state with magnetization exactly $k$. 
Each `trial' to get a state of magnetization $k$ is independent at succeeds with probability $p\ge \Omega(1/\sqrt{n})$ by Lemma~\ref{propLocalCLTalg}. Then we have no successful trials with probability $(1-p)^N$, which is at most $\eps/(4C\log n)$ when $N\ge \Omega(\sqrt{n}\log(\log(n)/\eps))$.

The above lower bounds on $N$ show that the value given in line (b) suffices. With a bound of $\eps/2$ on the probability of any failure, we now show that the output state has distribution within total variation distance $\eps/2$ of $\nu_{G,\beta,k}$ on $\Sigma_G(k)$. 
If no failure occurs then the algorithm must reach a value of $\lam$ such that $|n\eta_G(\beta,\lam)-k| \le 1/2$. This is because the starting search set $\Lambda_0$ contains such a value by Lemma~\ref{lemExistLambda}, and by binary search structure of the algorithm.
Note that if there is no failure then $\overline k$ is an accurate representation of $n\eta_G(\beta,\lam)$, so we continue searching in the larger half of the search set when $\overline k < k-1/4$ and so $n\eta_G(\beta,\lam) \le \overline k + 1/4 < k$. The case that $\overline k>k+1/4$ is similar. 
The desired total variation distance now follows from the fact that for any $\lam'$, $\nu_{G,\beta,k}$ is precisely $\mu_{G,\beta,\lam'}$ conditioned on getting a spin assignment of magnetization exactly $k$. 
Under the assumption that $\hat\mu_{\beta,\lam}=\mu_{G,\beta,\lam}$, this means that if the algorithm outputs a state on line (d) during the FOR loop, then the output distribution is precisely $\nu_{G,\beta,k}$. 
Since we have proved that a state is output on line (d) of some step with probability at least $\eps/2$, this is equivalent to showing that when $\hat\mu_{\beta,\lam}=\mu_{G,\beta,\lam}$ the output distribution is within total variation distance $\eps/2$ of $\nu_{G,\beta,k}$.

We do not need to assume  access to efficient algorithms for sampling from $\mu_{G,\beta,\lam}$ exactly, we can make do with good approximate samplers given by Theorem~\ref{thmIsingFPRAS}. 
A standard interpretation of total variation distance is that when each $\hat\mu_{\beta,\lam}$ has total variation distance $\xi$ from $\mu_{G,\beta,\lam}$, there is a coupling between the measures such that the probability they disagree is at most $\xi$. 
Then to prove Theorem~\ref{thmSampleK} in general we can add a third failure condition: that any sample from any $\hat\mu_{\beta,\lam}$ disagrees with $\mu_{G,\beta,\lam}$ under this coupling. We make at most $C N\log n$ calls to any approximate sampling algorithm, so by a union bound this type of failure occurs with probability at most $\eps/2$ when we have the stated total variation distance $\xi \le \eps/(2CN\log n)$. 
Now, with the above proof in the case of exact samplers we output a state distributed according to $\nu_{G,\beta,k}$ with probability at least $1-\eps/2$, and with approximate samplers as described we get the same output unless the third type of failure occurs. 
That is, with probability at least $1-\eps$ we output a state distributed according to $\nu_{G,\beta,k}$. 
In terms of total variation distance, this is an $\eps$-approximate sampler for $\nu_{G,\beta,k}$ with running time $O(N\log n\cdot T(n,\eps))$ as desired.
\end{proof}

\newcommand{\etalchar}[1]{$^{#1}$}

\appendix

\section{Proof of Proposition~\ref{propLocalCLTalg}}
\label{secAppendixCLT}

First we prove the variance bound.

\begin{proof}
Fix $\Delta, \beta, \lam$ satisfying the assumptions of the proposition, and let $G \in \cG_{\Delta}$ be a graph on $n$ vertices. 
    For the lower bound, fix an independent set $J$ in $G$ of size at least $n/(\Delta+1)$, let $U=V(G)\setminus J$, and let $\sigma_U, \sigma_J$ be $\sigma$ restricted to $U$ and $J$ respectively.
    Then $\var(\bX) \ge \langle \var(\bX|\sigma_U) \rangle_{G,\beta,\lam}$ by the law of total variance.
    But conditioned on $\sigma_U$, the spins of the vertices in $J$ are independent, and moreover, the conditional marginal probability of each spin taking $+$ is bounded away from $0$ and $1$ independently of $n$.
    That is, conditioned on $\sigma_U$ the indicator random variable $\bX_v = \mathbf 1_{\sigma_v =1}$ for $v$ in $J$ are mutually independent Bernoulli random variables with
    \begin{align*}
        \var(\bX_v|\tau) & = \frac{1}{( \lam e^{\frac{\beta}{2}Y_{v,\tau}} + \lam^{-1}e^{\frac{-\beta}{2}Y_{v,\tau}} )^2},
    \end{align*}
    where $Y_{v,\tau}$ is the magnetization under $\sigma_U$ of the neighbors of $v$ (so $Y_{v,\tau} = 2j-\deg(v)$ when $v$ has $j$ neighbors with spin $+$ under $\sigma_U$).
    Since we have $\beta \ge0$ and $\lam\ge1$, it is straightforward to prove that $\var(\bX_v|\tau)$ is minimized when $Y_{v,\tau}=\deg(v)$, and is a decreasing function of $\deg(v)$.
    We conclude that
    \[ \var(\bX|\sigma_U) = \sum_{v\in J}\var(\bX_v|\sigma_U) \ge |J|\frac{1}{(\lam e^{\frac{\beta \Delta}{2}} + \lam^{-1} e^{-\frac{\beta \Delta}{2}})^2}. \]
    
    For the upper bound, we proceed as in~\cite[Lemma 3.2]{jain2021approximate}.  Let $\hat Z_G(\beta,t) = \lam^n Z_G(\beta,\lam)$  with $t = \sqrt{\lam}$.  That is,
    \begin{align*}
\hat Z_G(\beta,t) &= \sum_{\sigma \in \Sigma_G} e^{\frac{\beta}{2} \sum_{(u,v) \in E(G)} \sigma_u \sigma _v} \cdot t^{ X(\sigma)} \,.
\end{align*}
This defines the same Ising model, but will be slightly easier to work with since with $\beta$ fixed,  the partition function  $\hat Z_G(\beta, t)$ is a polynomial in $t$ of degree $n$.  Let $\xi_1, \dots, \xi_n$ denote its complex roots.  The assumptions of Proposition~\ref{propLocalCLTalg}, along with Theorems~\ref{thmLeeYang} and~\ref{thmzerofree}, imply that there exists some $\del>0$, depending only on $\Delta$, $\beta$, and $\lam$ so that for each $j$, 
\begin{equation}
\label{eqDelRoots}
| \xi_j - t| > \del
\end{equation}
 with $| \cdot |$ denoting distance in the complex plane.  We write 
   \begin{align*}
\hat Z_G(\beta,t) &= \prod_{j=1}^n (1- t/\xi_j) 
\end{align*}
and then
    \begin{align*}
\var(\bX) &= t^2   \frac{\partial^2 \log \hat Z_G(\beta,t}{ \partial t^2}   + t  \frac{\partial \log \hat Z_G(\beta,t}{ \partial t}  \\
&= - t \sum_{j=1}^n \frac{1} { \xi_j (1- t /\xi_j)^2} \,.
\end{align*}
From Theorem~\ref{thmLeeYang} we have $|\xi_j| =1$ for all $j$ and from~\eqref{eqDelRoots} we have $|1- t /\xi_j|^{-2} \le \del^{-2}$, giving $\var(\bX) \le  \frac{n \sqrt{\lam}}{\del^2}$. 
    \end{proof}

Next we prove the local central limit theorem.  As remarked above, this essentially follows from the method of~\cite{dobrushin1977central}, as it is straightforward to generalize their proof from $\mathbb Z^d$ to general bounded-degree graphs.
\begin{proof}
Again fix $\Delta, \beta, \lam$ satisfying the assumptions of the proposition, and let $G \in \cG_{\Delta}$ be a graph on $n$ vertices.   We let $\langle \cdot \rangle$ denote expectation with respect to $\mu_{G,\beta,\lam}$.

 Let $\phi_{\bX}(t) = \langle  e^{it \bX} \rangle$ denote the characteristic function of $\bX$, and let   $\kappa^2 = \langle (\bX - \langle \bX \rangle)^2 \rangle$ be the variance of $\bX$.    Let  $\overline \bX = ( \bX - \langle \bX \rangle)/\kappa$, and let $\phi_{\overline \bX}(t)$ be the characteristic function of $\bX$.   Let $\mathcal L$ denote the lattice $\langle \bX \rangle + \mathbb Z/\kappa$.   Let $\mathcal N(x) = \frac{1}{\sqrt{2 \pi} } e^{-x^2/2}$.   We want to show that 
\[ \sup_{x \in \mathcal L} \left |  \mu_{G,\beta, \lam} ( \overline \bX = x) -  \frac{1}{\kappa}\mathcal N(x) \right| = o(n^{-1/2}) \,,\]  
or 
\[ \sup_{x \in \mathcal L} \left | \kappa \mu_{G,\beta, \lam} ( \overline \bX = x) -  \mathcal N(x) \right| = o(1) \,,\]  
since $\kappa = \Theta(\sqrt{n})$.  Using Fourier inversion  we have
\begin{align*}
2 \pi \sup_{x \in \mathcal L} &\left |  \kappa\mu_{G,\beta, \lam} ( \overline \bX = x) -  \mathcal N(x) \right|  \\ &= \sup_{x \in \mathcal L}  \left |  \int_{-\pi \kappa }^{\pi \kappa} \phi_{\overline \bX}(t) e^{-itx} \,  dt -  \int_{-\infty}^{\infty} e^{-t^2/2 - itx } \, dt    \right |  \\
&\le  \int_{-\pi \kappa}^{\pi \kappa} \left | \phi_{\overline \bX}(t) - e^{-t^2/2}  \right | \, dt +\int_{|t| > \pi \kappa}e^{-t^2/2  } \, dt  \\
 &\le \int_{-K }^{K} \left | \phi_{\overline \bX}(t) - e^{-t^2/2}  \right | \, dt  + \int_{|t| \ge K}e^{-t^2/2} \, dt + \int_{|t| \ge K}  \left | \phi_{\overline \bX}(t)  \right | \, dt   \\
 &=: A_1 + A_2 +A_3 \,.
\end{align*}
It is enough to show that for every $\eps>0$, there is $n$ large enough so that $A_1 + A_2 +A_3 \le\eps$.  To do this we will choose $K$ large enough as a function of $\eps$. 

Because $\lim_{n \to \infty} \phi_{\overline X}(t) = e^{-t^2/2}$  for every fixed $t$ (by the central limit theorem of Theorem~\ref{thm:clt} and the zero-freeness results supplied by the conditions of the proposition and Theorems~\ref{thmLeeYang} and~\ref{thmzerofree}), applying the bounded convergence theorem gives that $A_1 \to 0$ as $n \to \infty$, so we can choose $n$ large enough to guarantee $A_1 < \eps/3$. 

We can pick $K$ large enough to ensure $A_2 < \eps/3$.   Similarly, if we can show that there exists some $c>0$ so that for every $t \in \mathbb R$, $|\phi_{\overline \bX}(t) | \le e^{-c t^2}$, then we can choose $K$ large enough to ensure $A_3 < \eps/3$, as desired.  

To show this, it is enough to prove that $|\langle e^{it \bX}  \rangle|  \le e^{-c n t^2}$.  To do this, we proceed as in the proof of the variance lower bound as above.    Let $J \subset V(G)$ be an independent set  of size at least $n/(\Delta+1)$  and let $U = V(G) \setminus J$.  We will again use the fact that conditioned on $\sigma_U$, the spins of the vertices in $J$ are independent with bounded marginals.   Let $\bX = \bX_{U} + \bX_J$ where $\bX_U, \bX_J$ are the number of $+$ spins in the vertices in $U$ and $J$ respectively. Then
\begin{align*}
|\langle e^{it \bX} \rangle| & = |\langle e^{it \bX_{U}} e^{it \bX_J} \rangle| \\
&\le \langle  | e^{it \bX_{U}} \langle  e^{it \bX_J} | \sigma_{U} \rangle | \rangle \\
&\le \langle  |\langle   e^{it \bX_J} | \sigma_{U} \rangle |\rangle  \\
&\le \max_{\tau_U}  |\langle   e^{it \bX_J} | \sigma_{U} = \tau_U \rangle | \,.
\end{align*}
We will show that for all spin assignments $\tau_U$ to $U$, $ |\langle   e^{it \bX_J} | \sigma_{U} = \tau_U \rangle | \le e^{-cn t^2}$.   This follows from the independence and bounded marginals properties of the spins in $J$ conditioned on $\sigma_U$.  In particular, there exists $c'>0$ so that $| \langle e^{it (1+\sigma_v)/2 } | \sigma_U=\tau_U \rangle| \le e^{-c't^2}$ over all $v \in J$ and all choices of $\tau_{U}$.  Then by independence, $|\langle e^{it \bX_J} | \sigma_{U=\tau_U} \rangle | \le e^{-c'n t^2 / (\Delta+1)}$, and the claim follows by taking $c = c'/(\Delta+1)$.

\end{proof}

\section{Approximate counting via sampling}
\label{secCounting}

In this section we fix $\Delta\ge 3$, $\beta\in(\beta_c(\Delta),1)$, $\eta\in[0,1)$, an $n$-vertex graph $G\in\cG_\Delta$, and $k$ such that $0\le k\le \eta n$ and $n\equiv k\mod 2$. Since $G$ and $k$ remains fixed we freely drop them from notation such as $\Zfixed_G(\beta,k)$ and $\nu_{G,\beta,k}$. 
Reducing approximate counting to approximate sampling is a well-studied area, and a standard simulated annealing approach lets us do this for $\Zfixed_G(\beta)$ and $\nu_{\beta}$. 
For simplicity, we present only a basic form of the argument. More sophisticated cooling schedules exist that could be used to improve the running time of the reduction. 
The key observations behind this application of simulated annealing are that $\Zfixed(1)=\binom{n}{(n+k)/2}$, and for any valid magnetization $k$ and any parameters $\beta$, $\beta'$ we have 
\begin{equation}\label{eqAnnealRatio}
    \frac{\Zfixed(\beta')}{\Zfixed(\beta)} = \left\langle e^{\frac{\beta'-\beta}{2}\delta(\sigma)}\right\rangle_{\beta,k},
\end{equation}
where $\delta(\sigma) = \sum_{uv\in E(G)}\sigma_u\sigma_v$ and $\langle\cdot\rangle_{\beta,k}$ represents an expectation over $\nu_{G,\beta,k}$.
These facts follow straightforwardly from the definition of $\Zfixed(\beta)$. 
Given a \emph{cooling schedule} 
\[ 1 = \beta_0 < \beta_1 < \dotsb < \beta_\ell = \beta, \]
and the independent random variables $R_i = e^{\frac{\beta_{i+1}-\beta_i}{2}\delta(\sigma_i)}$ where $\sigma_i\sim\nu_{\beta_i}$, we have 
\[ \Zfixed(\beta) = \binom{n}{(n+k)/2} \prod_{i=0}^{\ell-1}\langle R_i\rangle_{\beta,k}. \]
Moreover, by~\eqref{eqAnnealRatio} we can estimate $\langle R_i\rangle_{\beta,k}$ by taking $S$ repeated independent samples from a $\xi$-approximate sampler $\hat\nu_i$ for $\nu_{\beta_i}$, and setting $\hat R_i$ to be the mean of the $S$ samples. 
For the cooling schedule to be useful in an FPRAS, we need to determine suitable $S$ and $\xi$ so that
\begin{equation}\label{eqRAS}
    \Pr\left((1-\eps)\Zfixed_G(\beta,k) \le\binom{n}{(n+k)/2}\prod_{i=0}^\ell \hat R_i \le (1+\eps)\Zfixed_G(\beta,k)\right) \ge 3/4.
\end{equation} 

An important consideration for a cooling schedule is the quantity
\[ \frac{\langle R_i^2\rangle_{\beta,k}}{\langle R_i\rangle_{\beta,k}^2} = \frac{\Zfixed(2\beta_{i+1}-\beta_i)\Zfixed(\beta_i)}{\Zfixed(\beta_{i+1})^2}, \]
and when this is bounded above by a constant $B$ for all $0\le i\le \ell$ we say the schedule is \emph{$B$-Chebyshev}.
A particularly simple cooling schedule has $\beta_i = i\log(1+1/n)$ for $i<\ell$, so that for some length $\ell = \Theta(n)$ we have a schedule where $\beta_{i+1}-\beta_i \le \log(1+1/n)$.
Then from the fact that when $r\ge 1$ we have $\Zfixed(\beta + r) \le r^{\Delta n/2}\Zfixed(\beta)$, it is straightforward to show that $1\le R_i\le e^{\Delta/2}$ and 
\[ \frac{\langle R_i^2\rangle_{\beta,k}}{\langle R_i\rangle_{\beta,k}^2} \le (1+1/n)^{\Delta n/2} \le e^{\Delta/2}, \]
meaning that the simple cooling schedule is $e^{\Delta/2}$-Chebyshev. 
A second-moment argument, given for example in~\cite[Section~2]{SVV09} and in~\cite{JS97} (but originating in~\cite{DF91}), gives that some $S=\Omega(\ell/\eps^2)$ and $\xi=O(\eps/\ell)$ suffice for~\eqref{eqRAS} (where the implied constants can depend on $\Delta$).
By the algorithm from Section~\ref{secAlgorithms}, we can obtain such $\xi$-approximate samples in time polynomial in $n$ and $\log(1/\eps)$. The total running time for the approximate counting algorithm is therefore polynomial in $n$ and $1/\eps$, meaning we have the desired FPRAS\@.

This simple cooling schedule works in great generality, but there are a few well-known ways of improving the efficiency of the method. 
The first is to use a more complex (in fact, \emph{adaptive}) but shorter schedule, as given in~\cite{SVV09}.
A second improvement makes use of \emph{warm starts}, which is essentially a strategy of reusing randomness when a Markov chain is the basis of the approximate samplers $\hat\nu_i$ which are used to obtain $\hat R_i$. 
These improvements are combined in~\cite[Section 7]{SVV09}.

\end{document}